\newcounter{magicrownumbers}
\newtheorem{definition}{Definition}[section]
\newtheorem{remark}{\bf Remark}[section]
\newtheorem{lemma}{\bf Lemma}[section]
\newtheorem{theorem}{\bf Theorem}[section]
\newtheorem{result}{\bf Result}[section]
\newtheorem{illustration}{\bf Illustration}[section]
\theoremstyle{remark}
\begin{document}

	\thispagestyle{empty}

\pagenumbering{arabic}
\newpage

%\title{Optimal Designs for Cokriging Models}
%\maketitle

\title{G-optimal grid designs for kriging models }
\author{{\small Subhadra Dasgupta$^1$, Siuli Mukhopadhyay$^{2}$ and Jonathan Keith$^3$}\\%\vspace{0.4cm} 
	{\small \it $^1$ IITB-Monash Research Academy, India}\\%\vspace{0.4cm}
	{\small \it $^2$ Department of Mathematics, Indian Institute of Technology Bombay, India}\\%\vspace{0.4cm}
	{\small \it $^3$ School of Mathematics, Monash University, Australia}}
\date{}

\maketitle

\begin{abstract}
This work is focused on finding G-optimal designs theoretically for kriging models with two-dimensional inputs and separable exponential covariance structures. For design comparison, the notion of evenness of two-dimensional grid designs is developed. The mathematical relationship between the design and the supremum of the mean squared prediction error ($SMSPE$) function is studied and then optimal designs are explored for both prospective and retrospective design scenarios. In the case of prospective designs, the new design is developed before the experiment is conducted and the regularly spaced grid is shown to be the G-optimal design. The retrospective designs are constructed by adding or deleting points from an already existing design. Deterministic algorithms are developed to find the best possible retrospective designs (which minimizes the $SMSPE$). It is found that a more evenly spread design under the G-optimality criterion leads to the best possible retrospective design. For all the cases of finding the optimal prospective designs and the best possible retrospective designs, both frequentist and Bayesian frameworks have been considered. The proposed methodology for finding retrospective designs is illustrated with a methane flux monitoring design.		

\textbf{Keywords:} %Kriging, G-optimality, 
Grid designs, retrospective designs, prospective designs, separable covariance, Ornstein-Uhlenbeck process

\end{abstract}

\section{Introduction}

Efficient designing of experiments plays an important role in the area of geo-statistics, as it has been established by many researchers that sampling designs have an impact on the prediction accuracy of geostatistical processes. This article is motivated by such problems where the input space is two-dimensional. For example, measuring methane emission rates may depend on inputs such as temperature and atmospheric pressure. \cite{baran2015optimal} finds the optimal designs for such experiments. \cite{Cokriging_Li_Zimmerman_2015} discuss the problem of finding the optimal design for monitoring soil quality data on a river basin, where the inputs are two-dimensional spatial locations. \cite{mavukkandy2014assessment} discuss a problem of analyzing river water quality variables, in this case, the inputs could be taken as two-dimensional spatial locations or one-dimensional space along with time. 

In this article, we work with isotropic separable exponential processes. The covariance of the random  process $Z(\cdot)$ is given by $\displaystyle{Cov(Z(x, y), Z(x^{\prime}, y^{\prime})) = \sigma^{2} \; {e}^{-\alpha|x-x^{\prime}|} \; {e}^{-\beta|y-y^{\prime}|}}$, where $\displaystyle{\sigma^{2}, \alpha,  \text{ and } \beta >0}$ are the covariance parameters. The focus is to theoretically find the prospective optimal grid designs and the best possible retrospective grid designs for these random processes, obtained by minimizing the supremum of kriging variance (\textit{SMSPE}), hereafter called the {\em G-optimal designs}. For the case of \textit{prospective designs}, in which the design is set up before data collection begins, we theoretically show that equispaced grids are G-optimal. In the case of \textit{retrospective designs}, the new design has to be formed by adding/deleting points to/from an existing design \citep{Diggle_Bayesian_Geostatistical_Design}. This kind of case could arise in various ongoing spatial or spatio-temporal experiments. To find the best possible retrospective designs, a criterion to compare the evenness of two-dimensional grids is proposed, using the definition of majorization of vectors. The focus of this article is to study the mathematical relation between the evenness of grid design and the $SMSPE$ function and then determine the best possible design (rather than determining the optimal design numerically by comparing values of $SMSPE$). For this reason, we consider only the separable exponential covariance structure for the random function; in fact, no theoretical exact optimal designs for covariance structures other than the exponential covariance are currently available in the statistical literature. Using the criterion for comparing designs based on the evenness of two-dimensional grids we provide converging and deterministic algorithms for finding the best possible retrospective designs. All the above optimal and best possible designs are determined under both frequentist and Bayesian paradigms. To illustrate the proposed retrospective designs we use an existing $8 \times 8$ methane flux monitoring grid design (used in \citealp{baran2015optimal}) and delete points to find the best possible $6 \times 5$ grid design.

The problem of theoretically finding prospective optimal designs for Ornstein-Uhlenbeck processes has been of interest to many authors for example, \cite{kiselak2008equidistant}, \cite{Kriging_Zagoraiou_Antognini_2009}, \cite{Kriging_Antognini_Zagoraiou_2010}, \cite{Baran_Stehlik_2015}, \cite{baran2013optimal}, \cite{baran2015optimal}, \cite{baran2017k},  \cite{sikolya2019optimal}, and \cite{Paper_One_2021_arxiv}. 
For Ornstein-Uhlenbeck (OU) processes with one-dimensional inputs, \cite{kiselak2008equidistant}, \cite{Kriging_Zagoraiou_Antognini_2009}, \cite{Kriging_Antognini_Zagoraiou_2010}, and \cite{Paper_One_2021_arxiv} found optimal designs under various criteria like D-optimality, $\text{D}_s$-optimality, entropy, and integrated mean square prediction error (\textit{IMSPE}). Very recently, \cite{Paper_One_2021_arxiv} determined G-optimal and pseudo-Bayesian G-optimal designs for bivariate cokriging models with one-dimensional inputs. 
For Ornstein-Uhlenbeck processes with two-dimensional inputs, \cite{baran2013optimal} and \cite{Baran_Stehlik_2015} theoretically found optimal designs for OU sheets over two-dimensional monotonic sets, under D-optimality, entropy, and \textit{IMSPE} criteria. \cite{baran2015optimal} proved that an equispaced two-dimensional grid design is optimal under the \textit{IMSPE} criterion. \cite{sikolya2019optimal}  worked with the prediction of complex Ornstein-Uhlenbeck processes and found theoretical optimal designs with respect to the entropy criterion.

Though the literature discussed above mainly aims to find theoretical prospective optimal designs under various design criteria, they do not consider the G-optimality criterion. Thus in this manuscript, we aim to theoretically determine G-optimal designs for kriging models with two-dimensional inputs. Until now, only numerical approaches have been used to find the G-optimal prospective designs for kriging with two-dimensional inputs (see \citealp{sacks1989design} , \citealp{van2000influence}, and \citealp{kriging_Zimmerman_2006}). In the case of retrospective spatial design, \cite{Diggle_Bayesian_Geostatistical_Design} investigated numerically Bayesian geostatistical designs for correlated processes over a two-dimensional space. \cite{jones1998efficient}, \cite{schonlau1998global}, and \cite{ranjan2008sequential} also gave iterative numerical algorithms involving a convergence criterion to determine such designs for kriging models.

In this article, for determining new retrospective designs by the addition of points, we discuss two broad cases. The first case involves sequentially adding the points (one point at a time). In this case, the proposed algorithm is theoretical and does not depend on the values of the covariance parameters. The second case {involves} adding the points {simultaneously}. For this case, the algorithm is converging and deterministic, but it is a combination of theory and computations. {To find} the best possible retrospective design by deleting points from an existing design, we {consider} the case of deleting all points simultaneously. The algorithm we provide for the deletion case is again a combination of theory and computations. We do not discuss the case of deleting points sequentially, as we were not able to obtain a purely theory-based algorithm (as we did for the addition of points). 

The main contribution of this paper, for kriging models where the random process is an OU sheet are: i) theoretically finding prospective G-optimal grid designs (under both frequentist and Bayesian setups), ii) proposing a criterion to compare the evenness of two design grids using majorization and then prove that in fact, a more evenly spread design will minimize the \textit{SMSPE}, and iii) providing deterministic algorithms to find the best possible retrospective designs for both cases of addition or deletion of points from an existing design (under both frequentist and Bayesian {frameworks}).

In Section \ref{Paper2_Model}, we discuss the model, covariance structure, and some notations. In Section~\ref{Paper_II_Design}, we state the design criteria. In Sections~\ref{Paper_II_Prospective_Design} and ~\ref{Paper_II_Retrospective_Design}, G-optimal prospective designs and best possible retrospective designs {are discussed}, respectively under both frequentist and Bayesian paradigms. In Section~\ref{Paper_II_Illustration}, we {present some case studies}. We compare the existing designs with various newly obtained designs (after the addition or deletion of points). Section~\ref{Paper_II_Conclusion} concludes the paper and provides some future directions.

\section{Linear model and preliminaries} \label{Paper2_Model}%		
Consider a random function $\displaystyle{\{ Z(x, y): (x, y) \in \mathcal{D}_{1} \times \mathcal{D}_{2} \}} $, where $\mathcal{D}_{1}$ and $ \mathcal{D}_{2}$ are two real intervals ($\mathcal{D}_{1}, \mathcal{D}_{2} \subseteq \mathbb{R}$). The function $Z(\cdot)$ is sampled over a two-dimensional grid $\mathcal{S}   $ $ (\subseteq \mathcal{D}_{1} \times \mathcal{D}_{2})$, which is generated by a set $\mathcal{S}_{1} =  \{ x_{1}, \ldots, x_{n}  \} $ for the $x$-covariate and a set $\mathcal{S}_{2} =  \{ y_{1}, \ldots, y_{m}  \} $ for the $y$-covariate, {such that} $\displaystyle{\mathcal{S} = \mathcal{S}_{1} \times \mathcal{S}_{2} =  \{ (x_{i}, y_{j}): i = 1,\ldots, n \text{ and } j = 1,\ldots, m \}}$. The random process is characterized as
\begin{eqnarray}
	Z(x,y) &= \pi + \epsilon(x,y) \text{ for } (x,y) \in \mathcal{D}_{1} \times \mathcal{D}_{2},   
\end{eqnarray}
where, $E[\epsilon(x,y)] = 0 $, $\pi$ is the mean of the random function $Z(\cdot)$, and $Cov[\epsilon(x,y), \epsilon(x^\prime, y^\prime)] = \mathcal{C}((x,y),(x^\prime,y^\prime))$ for some isotropic covariance kernel $\mathcal{C}(\cdot,\cdot)$ and $(x,y), (x^\prime,y^\prime) \in \mathcal{D}_{1} \times \mathcal{D}_{2} $. In this article, we use a separable exponential covariance with parameters $\sigma^{2}, \alpha, \text{ and } \beta >  0 $. %as in \cite{baran2017k} and \cite{baran2015optimal}. 
Therefore $\displaystyle{
	Cov(Z(x, y), Z(x^{\prime}, y^{\prime})) = \sigma^{2} \;  \CovFunctP(|x-x^{\prime}|) \;\CovFunctQ(|y-y^{\prime}|)}$, where  $\CovFunctP(|x-x^{\prime}|) =  {e}^{-\alpha|x-x^{\prime}|}$ and $\displaystyle{\CovFunctQ(|y-y^{\prime}|) = {e}^{-\beta|y-y^{\prime}|}}$ for $(x, y)$, $\displaystyle{(x^{\prime},y^{\prime}) \in \mathcal{D}_{1} \times \mathcal{D}_{2}}$. 

Let $\displaystyle{
	\VectZ = (Z(x_{1}, y_{1}), Z(x_{1}, y_{2}), \ldots	, Z(x_{1}, y_{m}),  \ldots,	Z(x_{n}, y_{1}), Z(x_{n}, y_{2}), \ldots	 ,Z(x_{n}, y_{m}))^{T}} $ and $\displaystyle{\VectEpsilon^{T} = (\epsilon(x_{1}, y_{1}), \epsilon(x_{1}, y_{2}), \ldots	, \epsilon(x_{1}, y_{m}),  \ldots,	\epsilon(x_{n}, y_{1}), \epsilon(x_{n}, y_{2}), \ldots	 ,\epsilon(x_{n}, y_{m}))^{T}}$.  
%\begin{align}
%\VectZ^{T} &= (Z(x_{1}, y_{1}), Z(x_{1}, y_{2}), \ldots	, Z(x_{1}, y_{m}),  \ldots,	Z(x_{n}, y_{1}), Z(x_{n}, y_{2}), \ldots	 ,Z(x_{n}, y_{m})) \text{ and }\\
%\VectEpsilon^{T} &= (\epsilon(x_{1}, y_{1}), \epsilon(x_{1}, y_{2}), \ldots	, \epsilon(x_{1}, y_{m}),  \ldots,	\epsilon(x_{n}, y_{1}), \epsilon(x_{n}, y_{2}), \ldots	 ,\epsilon(x_{n}, y_{m})).  
%\end{align} 
Then, in matrix notation the model is given by $ \VectZ  = \VectOneMN \pi + \VectEpsilon %\label{Paper2_LinearModelFrom1},$
$, where the $mn \times 1$ vector $\VectOneMN = (1,1,\ldots,1)^{T}$. 

Our interest is in predicting $Z(\cdot)$ at $(x_{0}, y_{0}) \in \mathcal{D}_{1} \times \mathcal{D}_{2}$ using simple and ordinary kriging models. The kriging estimator of $Z(x_{0}, y_{0})$ is denoted by $Z^{\ast}(x_{0}, y_{0})$ and has the form  $\displaystyle{Z^{\ast}(x_{0}, y_{0}) = \sum_{i = 1}^{n}  \sum_{j = 1}^{m} \lambda_{ij} \; Z(x_{i}, y_{j}) + \lambda_{0}}$. It is the best linear unbiased predictor (\textit{BLUP}) (see Chapter 3 \citealp{Book_Chiles_Delfiner} and \citealp{Hoef_1993}). 

Some notations used throughout the paper are listed below. The random variable $\displaystyle{Z(x_0, y_0)}$ is denoted by $\displaystyle{Z_{0}}$, that is, $\displaystyle{Z(x_{0}, y_{0}) \equiv Z_{0}}$. The covariance vectors are $\displaystyle{Cov(Z_{0},Z_{0}) = \sigma_{00} }$, $\displaystyle{Cov(\VectZ, Z_{0}) =  \sigmaNot_{mn\times 1}   }$,  and $\displaystyle{Cov(\VectZ,\VectZ) = \pmb \Sigma_{mn\times mn}}$. %Also, we use the vectors $\displaystyle{\VectXN = (x_{1},x_{2}, \ldots, x_{n})^{T}}$, $\displaystyle{\VectYM = (y_{1}, y_{2}, \ldots, y_{m})^{T}}$, and $\displaystyle{\VectOneN = (1,1,\ldots,1)_{n \times 1}^{T}}$.  
The covariance matrices and vectors $\MatrixP$ and $\sigmaPNot$ correspond to $\mathcal{S}_{1}$, {whereas} $\MatrixQ$ and $\sigmaQNot$ correspond to $\mathcal{S}_{2}$. Therefore, in this case we write $\displaystyle{(\MatrixP)_{ij} = {e}^{-\alpha|x_{i}-x_{j}|}}$ and $\displaystyle{(\sigmaPNot)_{i} = {e}^{-\alpha|x_{i}-x_{0}|}}$ for all $\displaystyle{i,j= 1,\ldots,n}$, $\displaystyle{(\MatrixQ)_{ij} = {e}^{-\beta|y_{i}-y_{j}|}}$ and $\displaystyle{(\sigmaQNot)_{i} = {e}^{-\beta|y_{i}-y_{0}|}}$ for all $i,j= 1,\ldots,m$. This kind of separable covariance structure allows us to use the properties of {the} Kronecker product and leads to a simplification of covariance vectors and matrices as follows: 
\begin{eqnarray}
\covMat & = \sigma^{2} \; \MatrixP \otimes \MatrixQ  \label{Paper_II_MatrixDecomp_1}, \\
\sigmaNot &= \sigma^{2} \; \sigmaPNot  \otimes  \sigmaQNot \label{Paper_II_VectorDecomp_1}.
\end{eqnarray}
%\begin{equation}
%	\covMat = \sigma^{2} \; \MatrixP \otimes \MatrixQ  
%\end{equation}
%\begin{equation}
%\sigmaNot = \sigma^{2} \; \sigmaPNot  \otimes  \sigmaQNot 
%\end{equation}
 \par In {the} case of simple kriging, we assume that the random function does not have a drift; equivalently, {the} mean of the random function (denoted by $\pi$) is constant and known. Without loss of {generality,} the random process could be written as $  Z(x,y) = \epsilon(x,y)$. This assumption of known mean is valid when the random process is observed repeatedly and the mean is estimated almost with perfection \citep[Chapter~ 3]{Book_Chiles_Delfiner}. The simple kriging estimator of $Z_{0}$ is given by $\displaystyle{Z^{\ast }_{sk}= \sigmaNot^{T} \covMat^{-1} \VectZ}$ and the kriging variance, which is the mean squared prediction error $(MSPE)$ at $(x_{0},y_{0})$ is given by $\displaystyle{ \sigma^{2}_{sk}(x_{0},y_{0})= \sigma_{00} - \sigmaNot^{T}  \covMat^{-1} \sigmaNot}$ {\citep[see][Chapter~3]{Book_Chiles_Delfiner}}. Using equation~\eqref{Paper_II_MatrixDecomp_1} and \eqref{Paper_II_VectorDecomp_1} we {obtain the following expression:}
%\begin{align} 
%Z^{\ast }_{sk} &= \sigmaNot^{T} \covMat^{-1} \VectZ, \label{Paper_II_SimpleCokrigingEstimator} \\
%\sigma^{2}_{sk}(x_{0},y_{0}) &= \sigma_{00} - \sigmaNot^{T}  \covMat^{-1} \sigmaNot. \label{Paper_II_SimpleCokrigingMSPE}
%\end{align} 
\begin{align} 
Z^{\ast }_{sk} &= \Big(\sigmaPNot^{T}\MatrixP^{-1} \otimes  \sigmaQNot^{T}\MatrixQ^{-1}  \Big) \VectZ, \label{Paper_II_SimplekrigingEstimatorII} \\
\sigma^{2}_{sk}(x_{0},y_{0}) 
&= \sigma^{2} \Big( 1- \sigmaQNot^{T}  \MatrixQ^{-1} \sigmaQNot \;\; \sigmaPNot^{T}  \MatrixP^{-1} \sigmaPNot \Big). \label{Paper_II_SimplekrigingMSPEII} 
\end{align}

In the ordinary kriging setup, the mean of the random function (denoted by $\pi$) is a constant but unknown. From \citet[Chapter~ 3]{Book_Chiles_Delfiner}, the ordinary cokriging estimator of $Z_{0}$ is given by $\displaystyle{Z^{\ast }_{ok} = \sigmaNot^{T}  \SigmaBold^{-1}     \VectZ +  \dfrac{(1-\sigmaNot^{T} \SigmaBold^{-1} \VectOneMN)(\VectOneMN^{T} \SigmaBold^{-1} \VectZ)}{\VectOneMN^{T} \SigmaBold^{-1} \VectOneMN} }$ and kriging variance $(MSPE)$ is given by 
$\displaystyle{ \sigma^{2}_{ok}(x_{0},y_{0}) = 
	\sigma_{00} -  
	\begin{pmatrix}
	1 & \sigmaNot^{T}
	\end{pmatrix}
	\begin{pmatrix}
	0 & \VectOneMN^{T}\\
	\VectOneMN & \SigmaBold 
	\end{pmatrix}^{-1} 
	\begin{pmatrix}
	1 \\ \sigmaNot
	\end{pmatrix}   }$.
Again, using equations~\eqref{Paper_II_MatrixDecomp_1} and \eqref{Paper_II_VectorDecomp_1} we get the following simplified expressions: 
%\begin{align} 
% Z^{\ast }_{ok} &= \Big(  \sigmaPNot^{T}\MatrixP^{-1} \otimes \sigmaQNot^{T}\MatrixQ^{-1}  \Big) \VectZ 
% + \dfrac{\Big(1- (\sigmaQNot^{T} \MatrixQ^{-1} \VectOneM) (\sigmaPNot^{T} \MatrixP^{-1} \VectOneN)\Big) ( \VectOneN^{T}\MatrixP^{-1} \otimes \VectOneM^{T}\MatrixQ^{-1} )\VectZ}{ (\VectOneM^{T} \MatrixQ^{-1} \VectOneM)(\VectOneN^{T} \MatrixP^{-1} \VectOneN)}, \label{OrdinarykrigingEstimatorII} \\
% \sigma^{2}_{ok}(x_{0},y_{0}) 
% &= \sigma^{2} \Bigg( 1 - (\sigmaQNot^{T}  \MatrixQ^{-1} \sigmaQNot) \;\; (\sigmaPNot^{T}  \MatrixP^{-1} \sigmaPNot ) + \dfrac{\Big(1-\sigmaQNot^{T} \MatrixQ^{-1} \VectOneM  \;\; \sigmaPNot^{T} \MatrixP^{-1} \VectOneN \Big)^2}{(\VectOneM^{T} \MatrixQ^{-1} \VectOneM)(\VectOneN^{T} \MatrixP^{-1} \VectOneN) }\Bigg). \label{OrdinarykrigingMSPEII}
% \end{align}
\begin{eqnarray} 
& Z^{\ast }_{ok} = \Big(  \sigmaPNot^{T}\MatrixP^{-1} \otimes \sigmaQNot^{T}\MatrixQ^{-1}  \Big) \VectZ 
+ \dfrac{\Big(1- \sigmaQNot^{T} \MatrixQ^{-1} \VectOneM \;\; \sigmaPNot^{T} \MatrixP^{-1} \VectOneN \Big) ( \VectOneN^{T}\MatrixP^{-1} \otimes \VectOneM^{T}\MatrixQ^{-1} )\VectZ}{ \VectOneM^{T} \MatrixQ^{-1} \VectOneM \;\; \VectOneN^{T} \MatrixP^{-1} \VectOneN}, \label{OrdinarykrigingEstimatorII} \\
& \sigma^{2}_{ok}(x_{0},y_{0}) 
= \sigma^{2} \Bigg( 1 - \sigmaQNot^{T}  \MatrixQ^{-1} \sigmaQNot \;\; \sigmaPNot^{T}  \MatrixP^{-1} \sigmaPNot + \dfrac{\Big(1-\sigmaQNot^{T} \MatrixQ^{-1} \VectOneM  \;\; \sigmaPNot^{T} \MatrixP^{-1} \VectOneN \Big)^2}{\VectOneM^{T} \MatrixQ^{-1} \VectOneM \;\; \VectOneN^{T} \MatrixP^{-1} \VectOneN }\Bigg). \label{OrdinarykrigingMSPEII}
\end{eqnarray}

\section{Optimal design} \label{Paper_II_Design}	

In practice, an experimenter comes across two situations for designing a network of sampling points: the first is the \textit{prospective design}, which involves choosing design points before experimentation{;} the second is the \textit{retrospective design}, which involves improving an existing network by adding new points or deleting sampling points \citep{Diggle_Bayesian_Geostatistical_Design}.  

The arrangement of sampling points is known as the design of the experiment. In this article, an $n\times m$ grid design denoted by $\mathcal{S}$, which is generated by $\mathcal{S}_{1}$ and $\mathcal{S}_{2}$ is taken (stated in Section~\ref{Paper2_Model}). For kriging models, extrapolation should be treated with caution (as noted by \citealp{sikolya2019optimal}), therefore $\mathcal{D}_{1} = [x_{1}, x_{n}]$ and $\mathcal{D}_{2}=  [y_{1}, y_{m}]$. Since replications are not allowed, assume that the points are ordered; $x_{i} < x_{i^{\prime}}$  and $y_{j} < y_{j^{\prime}}$ whenever $i < i^{\prime}$ and $j < j^{\prime}$, respectively. The partition sizes (distances between two consecutive points on axis) are given by $d_{i} = x_{i+1} - x_{i}$ and $\delta_{j} = y_{j+1} - y_{j}$ for $ i= 1,\ldots, n-1$ and $j = 1,\ldots, m-1$. The sets $\mathcal{S}_{1}$ and $\mathcal{S}_{2}$ could equivalently be denoted by the vectors $\VectD = ( x_{1}, d_{1},\ldots,d_{n-1})$ and $\VectDelta = ( y_{1}, \delta_{1},\ldots,\delta_{m-1})$, respectively. Note, that $ \displaystyle{x_{n} = x_{1} + \sum_{i = 1}^{n-1} d_{i}}	$ and $ \displaystyle{y_{m} = y_{1} + \sum_{j = 1}^{m-1} \delta_{j}} $. Hence, the grid design $\mathcal{S}$ could equivalently be denoted by $\xibold = (\VectD,\VectDelta)$. 

The objective is to choose a two-dimensional grid which maximizes the prediction accuracy of $Z(\cdot)$. Therefore the mean squared prediction error, $MSPE((x_{0}, y_{0}), \xibold, \CovParameterTheta)$ is minimized, where $\CovParameterTheta = (\sigma^2, \alpha, \beta)$ is the parameter vector corresponding to the exponential covariance. Since kriging is an interpolation technique, the prediction variance at the sampling points is zero. {Figure \ref{fig:SMSPE OK} shows variation in \textit{MSPE} over $\mathcal{D}_{1} \times \mathcal{D}_{2}$ for an ordinary kriging example.} At the sampling points on the (sampling) grid the $MSPE$ is exactly equal to zero and it increases as one moves away from the sampling points. For this reason, in this article, the supremum of $MSPE$ is used to find the optimal design (also discussed in Chapter 6, \citealp{Book_Santner}  and \citealp{pronzato2012design}). This optimization criterion ensures that points which have high variance are discouraged, since these points are given less prominence when working with an average of the $MSPE$.
\begin{figure}[H]
	\centering
	\begin{subfigure}{.5 \textwidth}
		\centering
		\includegraphics[width=.9\linewidth]{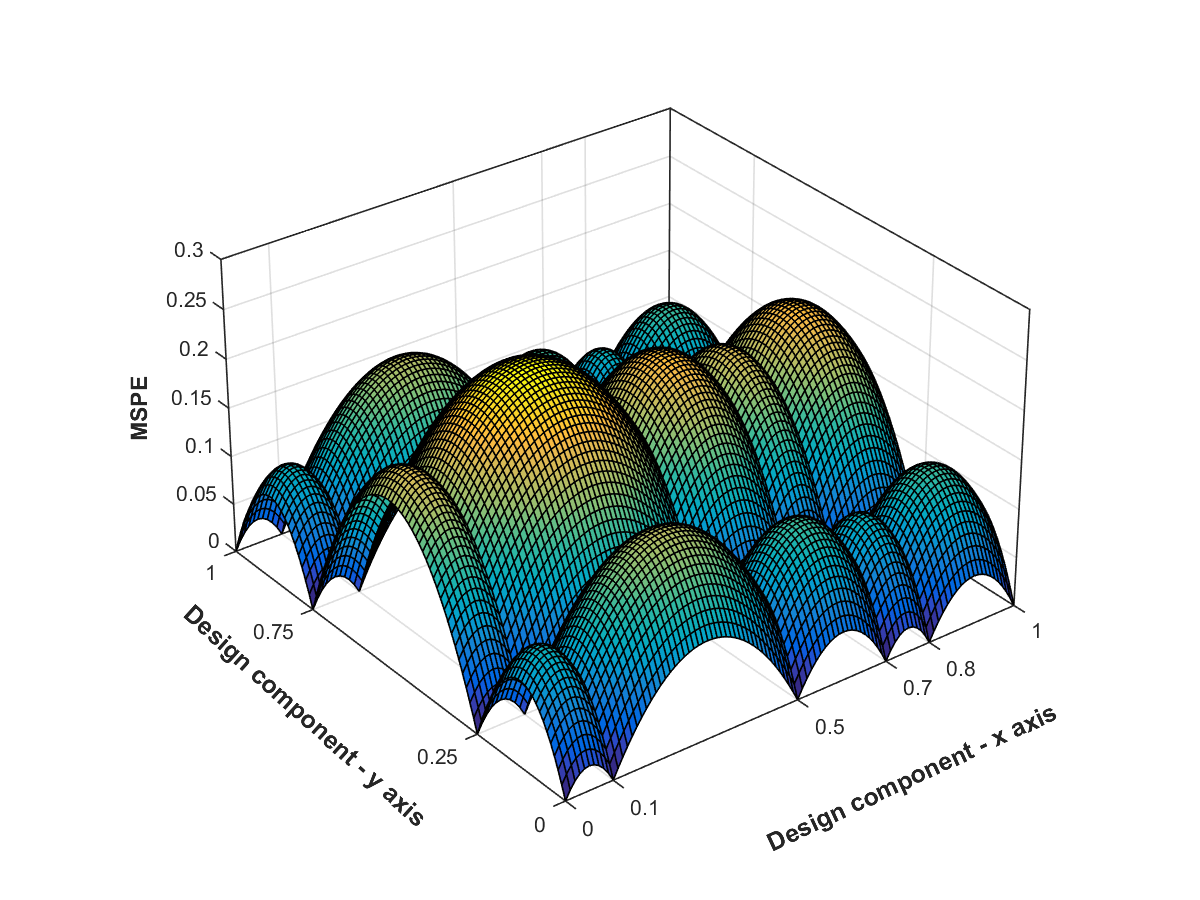}
			\captionsetup{justification=centering}
			\caption{\scriptsize $\mathcal{S}_{1} = \{0  ,  0.1 ,0.5   , 0.7,  0.8 ,   1 \}$ and $\mathcal{S}_{2} =  \{0 ,   0.25   ,  0.75  ,  1\}$. $\sigma^{2} = 1$, $\alpha = .5$, and $\beta = .7$.}  
		\label{fig:SMSPE OK I}
	\end{subfigure}%
	\begin{subfigure}{.5\textwidth}
		\centering
		\includegraphics[width=.9\linewidth]{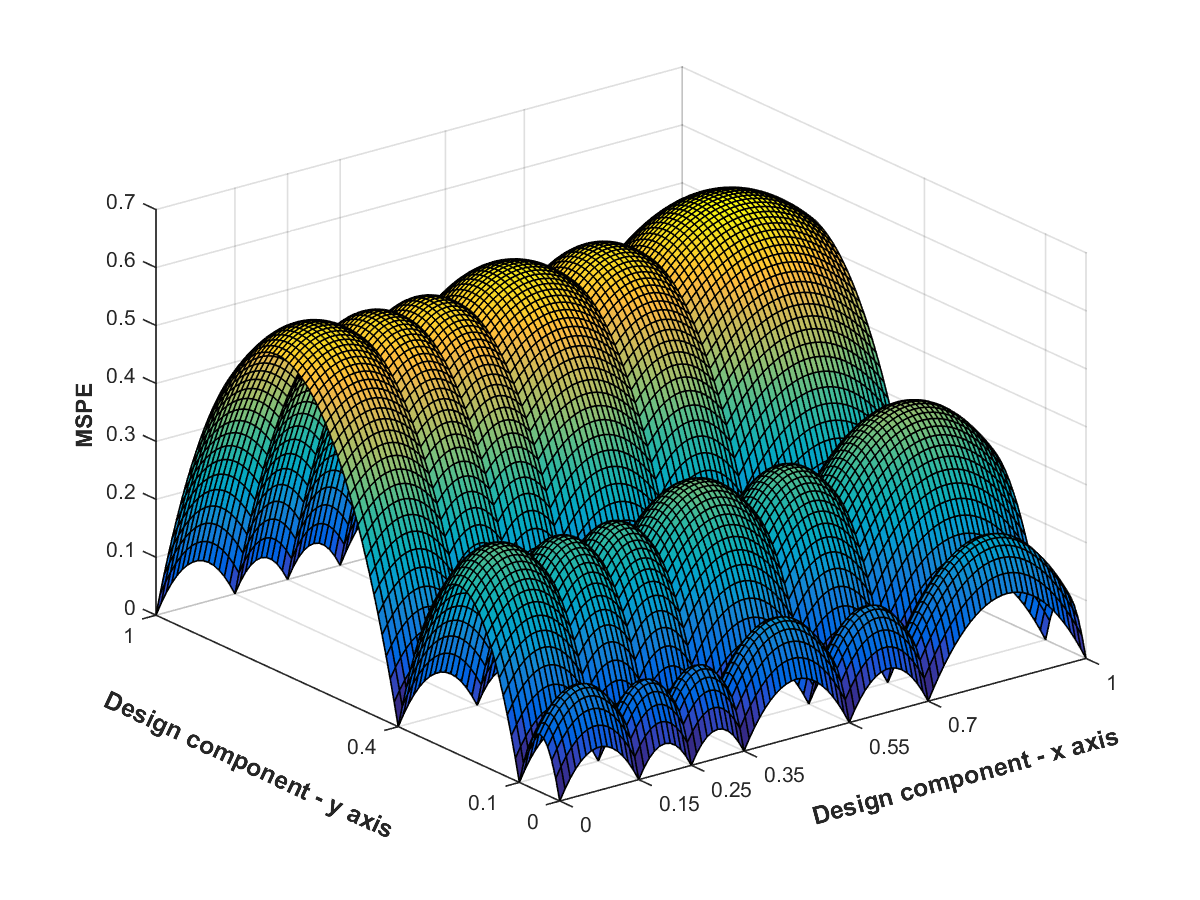}
			\captionsetup{justification=centering}
			\caption{\scriptsize $\mathcal{S}_{1} = \{0   , 0.15,    0.25,    0.35,     0.55,     0.7, 1\}$ and $\mathcal{S}_{2} =  \{0  ,  0.1,    0.4,    1\}$. $\sigma^{2} = 1$, $\alpha = 1$, and $\beta = 2$. }
		\label{fig:SMSPE OK II}
	\end{subfigure}
	%	\caption{Both images}
	%	\label{fig:test1} 
	\centering
	\caption{ Variation of MSPE for ordinary kriging with separable exponential covariance.}
	\label{fig:SMSPE OK} 
\end{figure}

The supremum of means squared prediction error, that is, the \textit{SMSPE} design criterion is given by,  
\begin{align}
SMSPE(\xibold, \CovParameterTheta) &= \sup_{(x_{0}, y_{0}) \in \mathcal{D}_{1} \times \mathcal{D}_{2}} MSPE((x_{0}, y_{0}), \xibold, \CovParameterTheta).  \label{SMSPE_II}
\end{align}

\begin{lemma} \label{Paper_II_LemmaSimpleAndOrdinary}
	Consider simple and ordinary kriging models, with isotropic random function $Z(\cdot)$ {and} separable exponential structure with parameters $\CovParameterTheta$. If $Z(\cdot)$ is observed over a region $[x_1, x_n] \times [y_1, y_m]$ and recorded at points on {a} grid $\xibold${, then the} MSPE at $(x_{0}, y_{0}) \in [x_{i}, x_{i+1}] \times [y_{j}, y_{j+1}]$  for some $ i = 1,\ldots,n-1$ and $ j = 1,\ldots,m-1$ is given by:
	\begin{align} 
	\sigma^{2}_{sk}(x_{0},y_{0}) 
	&= \sigma^{2} \Big( 1- \dfrac{ e^{-2\beta b } -2  e^{-2\beta d_{i}   } + e^{-2\beta (d_{i} - b ) }   }{ 1- e^{-2 \beta d_{i} }   } \;\; \dfrac{ e^{-2 \alpha a } -2  e^{-2\alpha d_{i}   } + e^{-2 \alpha (d_{i} - a ) }   }{ 1- e^{-2 \alpha d_{i} }   } \Big) \label{Paper_II_SimplekrigingMSPEIII} 
	\end{align}
	and
	\begin{align} 
	\sigma^{2}_{ok}(x_{0},y_{0}) 
	&= \sigma^{2} \Bigg( 1 - \dfrac{ e^{-2\beta b } -2  e^{-2\beta d_{i}   } + e^{-2\beta (d_{i} - b ) }   }{ 1- e^{-2 \beta d_{i} }   } \;\; \dfrac{ e^{-2 \alpha a } -2  e^{-2\alpha d_{i}   } + e^{-2 \alpha (d_{i} - a ) }   }{ 1- e^{-2 \alpha d_{i} }   } \nonumber \\
	& \;\;\;\;\;\;\;\;\;\;+ \dfrac{1}{\Omega_{x}(\xibold) \Omega_{y}(\xibold) }
	\Big(1-\dfrac{e^{- \beta b } + e^{- \beta (d_{i}  - b )  } }{ 1 + e^{- \beta d_{i}  }}  \;\; \dfrac{e^{- \alpha a } + e^{- \alpha(d_{i}  - a )  } }{ 1 + e^{- \alpha d_{i}  }} \Big)^2
	\Bigg), \label{Paper_II_OrdinarykrigingMSPEIII}
	\end{align}
	where $ a = x_{0} - x_{i}$, $ b = y_{0} - y_{j}$, $\Omega_{x}(\xibold) = \VectOneN^{T} \MatrixP^{-1} \VectOneN$, and $ \Omega_{y}(\xibold) = \VectOneM^{T} \MatrixQ^{-1} \VectOneM$. 	
\end{lemma}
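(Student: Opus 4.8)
The plan is to build directly on the Kronecker reduction already recorded in equations~\eqref{Paper_II_SimplekrigingMSPEII} and~\eqref{OrdinarykrigingMSPEII}, which express both kriging variances entirely through one-dimensional quadratic forms in $\MatrixP^{-1}$ and $\MatrixQ^{-1}$: the products $\sigmaPNot^{T}\MatrixP^{-1}\sigmaPNot$ and $\sigmaQNot^{T}\MatrixQ^{-1}\sigmaQNot$, the mixed terms $\sigmaPNot^{T}\MatrixP^{-1}\VectOneN$ and $\sigmaQNot^{T}\MatrixQ^{-1}\VectOneM$, and the global normalisers $\Omega_x(\xibold)=\VectOneN^{T}\MatrixP^{-1}\VectOneN$, $\Omega_y(\xibold)=\VectOneM^{T}\MatrixQ^{-1}\VectOneM$. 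Since the last two are exactly the quantities the statement keeps in symbolic form, it suffices to evaluate the four forms that couple $\sigmaPNot,\sigmaQNot$ to the inverse matrices, and by the coordinatewise symmetry of the separable kernel I treat only the $x$-factor; the $y$-factor follows verbatim with $(\alpha,a,d_i)$ replaced by $(\beta,b,\delta_j)$.

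The crux, which does all the work, is a locality reduction reflecting the Gauss--Markov nature of the exponential kernel: I claim that $w:=\MatrixP^{-1}\sigmaPNot$ is supported only on the two indices $i,i+1$ bracketing $x_0$. To see this I solve $\MatrixP w=\sigmaPNot$ under the ansatz that $w$ vanishes outside those coordinates, leaving the unknowns $c_i=w_i$ and $c_{i+1}=w_{i+1}$. Because $x_0\in[x_i,x_{i+1}]$, every modulus $|x_k-x_i|,|x_k-x_{i+1}|,|x_k-x_0|$ opens with a definite sign, so for each row $k\le i$ the equation collapses, after cancelling $e^{\alpha x_k}$, to $c_i e^{-\alpha x_i}+c_{i+1}e^{-\alpha x_{i+1}}=e^{-\alpha x_0}$, while for each row $k\ge i+1$ it collapses to $c_i e^{\alpha x_i}+c_{i+1}e^{\alpha x_{i+1}}=e^{\alpha x_0}$. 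These two relations exhaust all $n$ rows and form a $2\times2$ system whose coefficient determinant is $e^{\alpha d_i}-e^{-\alpha d_i}=2\sinh(\alpha d_i)\neq0$; hence the ansatz is consistent and the claim holds.

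The remainder is mechanical. Solving that $2\times2$ Vandermonde-type system for $c_i,c_{i+1}$ and re-expressing everything through $e^{-\alpha a}$, $e^{-\alpha(d_i-a)}$, $e^{-\alpha d_i}$ (recall $a=x_0-x_i$ and $d_i-a=x_{i+1}-x_0$) gives, using the symmetry of $\MatrixP^{-1}$,
\[
\sigmaPNot^{T}\MatrixP^{-1}\sigmaPNot = c_i e^{-\alpha a}+c_{i+1}e^{-\alpha(d_i-a)} = \frac{e^{-2\alpha a}-2e^{-2\alpha d_i}+e^{-2\alpha(d_i-a)}}{1-e^{-2\alpha d_i}},
\]
\[
\sigmaPNot^{T}\MatrixP^{-1}\VectOneN = c_i+c_{i+1} = \frac{e^{-\alpha a}+e^{-\alpha(d_i-a)}}{1+e^{-\alpha d_i}}.
\]
Substituting the first identity and its $y$-analogue into~\eqref{Paper_II_SimplekrigingMSPEII} produces the stated $\sigma^{2}_{sk}$; substituting both identities and their $y$-analogues into~\eqref{OrdinarykrigingMSPEII}, while leaving $\Omega_x(\xibold)$ and $\Omega_y(\xibold)$ untouched, produces the stated $\sigma^{2}_{ok}$.

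I expect the locality reduction above to be the only conceptually substantive step: once one recognises that the separable exponential process is Markov in each coordinate, so that the local behaviour depends only on the two neighbouring design nodes, each quadratic form collapses to a fixed $2\times2$ computation and the rest is routine algebra. The one bookkeeping point to confirm is that the reduction holds uniformly over all cells $i=1,\dots,n-1$, $j=1,\dots,m-1$, which it does because the bracketing indices $i,i+1$ always lie in $\{1,\dots,n\}$.
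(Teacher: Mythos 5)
Your proposal is correct, and its top-level skeleton coincides with the paper's: both start from the Kronecker-factorized expressions \eqref{Paper_II_SimplekrigingMSPEII} and \eqref{OrdinarykrigingMSPEII} and finish by substituting closed forms for the one-dimensional quadratic forms $\sigmaPNot^{T}\MatrixP^{-1}\sigmaPNot$, $\VectOneN^{T}\MatrixP^{-1}\sigmaPNot$ and their $y$-analogues, while leaving $\Omega_{x}(\xibold)$ and $\Omega_{y}(\xibold)$ symbolic. The difference lies in where those closed forms come from. The paper does not derive them: its proof is a one-line substitution of the identities \eqref{PaperII_decomposition1}--\eqref{PaperII_decomposition4} recorded in Appendix A.2, whose detailed calculations are delegated to the external reference \cite{Paper_One_2021_arxiv}. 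You instead prove the identities from scratch via the locality ansatz that $w=\MatrixP^{-1}\sigmaPNot$ is supported on the two bracketing indices $i,i+1$, i.e.\ the Markov property of the exponential kernel. That step checks out: for $k\le i$ each row of $\MatrixP w=\sigmaPNot$ indeed collapses to $c_{i}e^{-\alpha x_{i}}+c_{i+1}e^{-\alpha x_{i+1}}=e^{-\alpha x_{0}}$, for $k\ge i+1$ to $c_{i}e^{\alpha x_{i}}+c_{i+1}e^{\alpha x_{i+1}}=e^{\alpha x_{0}}$, the determinant is $2\sinh(\alpha d_{i})\neq 0$, and the resulting solution $c_{i}=\sinh(\alpha(d_{i}-a))/\sinh(\alpha d_{i})$, $c_{i+1}=\sinh(\alpha a)/\sinh(\alpha d_{i})$ reproduces exactly \eqref{PaperII_decomposition1} and \eqref{PaperII_decomposition2}. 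What your route buys is self-containedness -- the lemma no longer rests on a computation imported from a companion paper -- plus a structural explanation of why the MSPE on a cell depends only on that cell's spacings $(d_{i},\delta_{j})$, a fact the paper exploits repeatedly in later sections. Your substitution rule $(\alpha,a,d_{i})\mapsto(\beta,b,\delta_{j})$ also implicitly corrects a typo in the paper's displayed formulas, where the $\beta$-factors are written with $d_{i}$ in place of $\delta_{j}$.
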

\begin{proof}
{Substitute} the values {obtained in} \eqref{PaperII_decomposition1}, \eqref{PaperII_decomposition2}, \eqref{PaperII_decomposition3}, and \eqref{PaperII_decomposition4} from Appendix~\ref{PaperIIAppendixA} in equations~\eqref{Paper_II_SimplekrigingMSPEII} and \eqref{OrdinarykrigingMSPEII}. 

%\ref{PaperII_Loc_Eq1}, \ref{PaperII_Loc_Eq2}
\end{proof}

\begin{remark} \label{Paper_II_Remark}
		Consider the simple and ordinary kriging models with the random function $Z(\cdot)$ as in Lemma~\ref{Paper_II_LemmaSimpleAndOrdinary}. {Recall that} $Z(\cdot)$ is observed over $[x_{1}, x_{n}] \times [y_{1}, y_{m}]$ and recorded at $\{x_{1}, \ldots, x_{n}\} \times \{y_{1}, \ldots ,y_{m}\}$. {Using the above expressions for} $MSPE_{sk}$ and $MSPE_{ok}$, we {obtain} that $Z(\cdot)$ is equivalent to an isotropic random process with exponential covariance parameters $(\sigma^{2}, (x_{n} - x_{1})  \alpha $ , $(y_{n}- y_{1}) \beta)$, observed over $[0, 1] \times [0, 1]$ and recorded at $\displaystyle{\{(x_{i}- x_{1})/(x_{n} - x_{1}): i = 1, \ldots, n \} \times \{(y_{j}- y_{1})/(y_{m} - y_{1}): j = 1, \ldots, m \}}$. 
\end{remark}
\begin{proof}
		The proof for simple kriging is given. Define a mapping $\chi_1(\cdot)$ over $[x_1, x_{n}]$ to $[0,1]$, such that for any point $x \in [x_{1}, x_{n}]$, $\chi_1(x) = (x-x_{1})/(x_{n}-x_{1})$ and $\chi_2(\cdot)$ over $[y_1, y_{m}]$ to $[0,1]$, such that for any point $y \in [y_{1}, y_{m}]$, $\chi_2(y) = (y-y_{1})/(y_{m}-y_{1})$. 
	
	If we take $g_{i} = d_{i}/(x_{n}-x_{1})$, $h_{i} = \delta_{j}/(y_{m}-y_{1})$ for $i = 1, \ldots, n-1$ and $j = 1, \ldots, m-1$, then the design $\xibold^\ast = ((0,g_{1},\ldots, g_{n-1}), (0,h_{1},\ldots, h_{m-1}))$ specifies the set of grid points $\{\chi_1(x_i): i = 1, \ldots, n\} \times \{\chi_2(y_j): j = 1, \ldots, m\} $. Here,  $\chi_1(x_1) = 0 $, $\chi_1(x_n)=1$, $\chi_2(y_1) = 0 $ and $\chi_2(y_m) = 1$. {Using} equation~\eqref{Paper_II_SimplekrigingMSPEIII}, it is very easy to see that:
\begin{align*}
MSPE_{sk}((x_{0}, y_{0}), \xibold, (\sigma^{2}, \alpha, \beta)) &= \\
 &  MSPE_{sk}((\chi_1(x_0) , \chi_2(y_0)), \xiboldAst, (\sigma^{2}, (x_{n} - x_{1}) \; \alpha, (y_{m} - y_{1}) \;\beta)).
\end{align*}		
As, $\chi_1(\cdot)$ and $\chi_2(\cdot)$ are bijective functions, we can assert our claim.

For ordinary kriging, {the proof is similar to the simple kriging case: use the functions $\chi_1(\cdot)$ and $\chi_2(\cdot)$ as above} and use equations \eqref{Paper_II_OrdinarykrigingMSPEIII}, \eqref{PaperII_Loc_Eq1}, and \eqref{PaperII_Loc_Eq2}. 
\end{proof} 

{Reasoning as in Remark~\ref{Paper_II_Remark},} any random process with constant mean, observed over $[x_{1}, x_{n}]$ could {be} viewed equivalently as a random process over $[0, 1] \times [0, 1]$ and {{\it vice versa}}. So, {without} loss of generality we take $\mathcal{D}_{1} \times \mathcal{D}_{2} = [0,1] \times [0,1]$ and $x_{1}=0$, $x_{n}=1$, $y_{1}=0$, and $y_{m}=1$. So, in the following sections we denote $\VectD = (d_{1},\ldots,d_{n-1}) $ and $\VectDelta = (\delta_{1},\ldots,\delta_{m-1})${,} omitting $x_{1} $ and  $y_{1}$. We will {frequently use the following notation in the rest of this paper:} $\displaystyle{ \norm{\VectD}_{\infty}  = \max_{i=1,\ldots,n-1} d_{i}}$ and $\displaystyle{\norm{\VectDelta}_{\infty} = \max_{j=1,\ldots,m-1} \delta_{j}}$.

\section{Prospective designs}	\label{Paper_II_Prospective_Design}
In this section, we find prospective optimal designs for simple and ordinary kriging, that is, we determine the design $\xibold$ which maximizes prediction accuracy before the onset of the statistical experiment. Both frequentist and Bayesian {paradigms} for the covariance parameters are considered. In the {case, where} the covariance parameters are treated as random variables, the optimal designs are found using a pseudo-Bayesian approach. Under both scenarios, an equispaced grid design is an optimal design (with respect to the $SMSPE$ criterion). 

\begin{theorem} \label{Paper_II_Theorem_ST_SK_SMSPE}  \label{Paper_II_Theorem_ST_OK_SMSPE} \label{Paper_II_Theorem_ST_SK_OK_SMSPE}
Consider the simple and ordinary kriging model with response $Z(\cdot)$ as in Lemma~\ref{Paper_II_LemmaSimpleAndOrdinary} and observed over $[0, 1] \times [0, 1]$. {An} \text{equispaced} grid design in both the input variables minimizes $SMSPE$. Thus, the equispaced grid design is the G-optimal design.
\end{theorem}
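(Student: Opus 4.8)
The plan is to argue directly from the closed forms in Lemma~\ref{Paper_II_LemmaSimpleAndOrdinary}, exploiting that on each cell $[x_{i},x_{i+1}]\times[y_{j},y_{j+1}]$ the dependence on $a=x_{0}-x_{i}$ and $b=y_{0}-y_{j}$ separates. The first step is to localize the supremum inside a cell. Writing the simple-kriging $MSPE$ as $\sigma^{2}\big(1-f_{\alpha}(a,d_{i})\,g_{\beta}(b,\delta_{j})\big)$, each factor is a symmetric convex combination of exponentials in its own argument, so $f_{\alpha}$ is minimized at $a=d_{i}/2$ and $g_{\beta}$ at $b=\delta_{j}/2$; hence the cell supremum is attained at the cell centre, where $f_{\alpha}=\tfrac{2e^{-\alpha d_{i}}}{1+e^{-\alpha d_{i}}}$, a quantity strictly decreasing in $d_{i}$ (and symmetrically in $y$).

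With the supremum localized, simple kriging is immediate. The centre value factorizes into one decreasing factor per coordinate, so among all cells the largest $MSPE$ is realized by the cell with $d_{i^{\ast}}=\norm{\VectD}_{\infty}$ and $\delta_{j^{\ast}}=\norm{\VectDelta}_{\infty}$ (such a cell exists because the grid is a product set), giving
\[
SMSPE_{sk}=\sigma^{2}\Big(1-\frac{2e^{-\alpha\norm{\VectD}_{\infty}}}{1+e^{-\alpha\norm{\VectD}_{\infty}}}\cdot\frac{2e^{-\beta\norm{\VectDelta}_{\infty}}}{1+e^{-\beta\norm{\VectDelta}_{\infty}}}\Big).
\]
This is increasing in both $\norm{\VectD}_{\infty}$ and $\norm{\VectDelta}_{\infty}$, so it remains only to minimize each maximal spacing subject to $\sum_{i}d_{i}=\sum_{j}\delta_{j}=1$. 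The maximum of positive numbers with a fixed sum is minimized exactly when they are all equal, forcing $d_{i}\equiv 1/(n-1)$ and $\delta_{j}\equiv 1/(m-1)$; that is, the equispaced grid.

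For ordinary kriging the same localization applies: the correction term $\tfrac{1}{\Omega_{x}(\xibold)\Omega_{y}(\xibold)}\big(1-u\,v\big)^{2}$ has factors $u,v$ that are again symmetric and convex, hence minimized at the cell centre, so $(1-uv)^{2}$ is maximized there and both pieces of $\sigma^{2}_{ok}$ peak at the common centre. The decisive step is a term-by-term comparison of an arbitrary design against the equispaced one at the cell $(i^{\ast},j^{\ast})$ carrying the maximal spacings. The simple-kriging part dominates its equispaced counterpart by the monotonicity already established. For the correction I would use the closed form
\[
\Omega_{x}(\xibold)=1+\sum_{i=1}^{n-1}\tanh\!\big(\alpha d_{i}/2\big),
\]
and similarly for $\Omega_{y}$; concavity of $\tanh$ and Jensen's inequality show $\Omega_{x},\Omega_{y}$ are maximized by the equispaced grid, so the prefactor $1/(\Omega_{x}(\xibold)\Omega_{y}(\xibold))$ is smallest for the equispaced design and no smaller for any other. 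Since the centre value $u^{\ast}(d)=\tfrac{2e^{-\alpha d/2}}{1+e^{-\alpha d}}$ decreases in $d$, at $(i^{\ast},j^{\ast})$ the numerator $(1-u^{\ast}v^{\ast})^{2}$ is at least its equispaced value. Multiplying, the correction at $(i^{\ast},j^{\ast})$ is at least the equispaced correction, and combined with the simple-kriging comparison the worst-cell value of any design is bounded below by the common cell value of the equispaced grid; as every cell of the equispaced grid attains that value, $SMSPE_{ok}$ is minimized there.

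The main obstacle is the ordinary-kriging correction: unlike the simple-kriging term it does not factor over a single cell, and its denominator couples to the whole design through $\Omega_{x}(\xibold)\Omega_{y}(\xibold)$. Overcoming this rests on two points. First, one must extract the tractable identity $\Omega_{x}(\xibold)=1+\sum_{i}\tanh(\alpha d_{i}/2)$ from the tridiagonal inverse of the exponential (Markov) correlation matrix $\MatrixP$, and recognize it as a symmetric concave---hence Schur-concave---function maximized by equispacing. Second, one must verify that the simple-kriging centre factors and the correction's building blocks $u^{\ast},v^{\ast}$ are all monotone in the spacings in the same direction, so that the single worst cell simultaneously dominates the equispaced value in numerator, denominator, and simple-kriging part. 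Once these monotonicities and the $\Omega$ identity are in place, the term-by-term bound closes the proof, and the identical argument carried through an expectation over $\CovParameterTheta$ yields the pseudo-Bayesian version.
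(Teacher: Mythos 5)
Your proposal is correct and follows essentially the same route as the paper: localize the cell supremum at the cell centre, reduce $SMSPE$ to a decreasing function of $\norm{\VectD}_{\infty}$, $\norm{\VectDelta}_{\infty}$ (plus the $1/(\Omega_{x}\Omega_{y})$ prefactor for ordinary kriging), and conclude by minimizing the maximal spacings under the sum constraint. The only differences are cosmetic: you justify the centre localization by symmetry and convexity where the paper runs a Hessian/boundary check (Results B.1 and B.3), and you prove the $\Omega_{x} = 1+\sum_{i}\tanh(\alpha d_{i}/2)$ maximization via Jensen's inequality where the paper cites Schur-convexity of $1/\Omega_{x}$ from prior work.
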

\begin{proof}
See Appendix  \ref{AppendixGMinus1}. %Proof~\ref{Paper_II_Theorem_ST_SK_SMSPE_Proof}. 
\end{proof}

Note, the equispaced grid design minimize the $SMSPE$ for all covariance parameter values. Next, we assume that $\CovParameterTheta$ is a random variable with a known distribution. We use the pseudo-Bayesian approach to find the G-optimal design. The design criterion that is used here is: 
\begin{align}
	\mathcal{R}_{sk}(\xibold) &= E_{\CovParameterTheta}[SMSPE_{sk}(\xibold, \CovParameterTheta )] \text{ and } \\
\mathcal{R}_{ok}(\xibold) &= E_{\CovParameterTheta}[SMSPE_{ok}(\xibold, \CovParameterTheta)].
\end{align}  
	\begin{theorem} \label{Paper_II_Theorem_ST_SK_OK_SMSPE_PseudoBAyesian} 
	Consider the simple and ordinary kriging models with response $Z(\cdot)$ as in Lemma~\ref{Paper_II_LemmaSimpleAndOrdinary}, observed over $[0, 1] \times [0, 1]$, and the covariance parameters $\sigma^{2}$, $\alpha$, and $\beta$ are assumed to be independent with prior probability density functions $r_{1}(\cdot)$, $r_{2}(\cdot)${, and} $r_{3}(\cdot)$, respectively{, each} with bounded support. Then, an equispaced grid design minimizes both $\mathcal{R}_{sk}(\xibold)$ and $\mathcal{R}_{ok}(\xibold)$. 
	\end{theorem}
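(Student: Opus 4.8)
The plan is to bootstrap the pseudo-Bayesian statement directly from the frequentist optimality already established in Theorem~\ref{Paper_II_Theorem_ST_SK_OK_SMSPE}. The crucial feature of that result, emphasized in the remark immediately following it, is that the \emph{same} equispaced grid minimizes $SMSPE(\xibold, \CovParameterTheta)$ \emph{simultaneously for every} parameter value $\CovParameterTheta = (\sigma^2,\alpha,\beta)$; the optimal design does not depend on $\CovParameterTheta$. Writing $\xibold_{\mathrm{eq}}$ for the equispaced $n \times m$ grid on $[0,1]\times[0,1]$, this gives the pointwise domination
\[
SMSPE_{sk}(\xibold_{\mathrm{eq}}, \CovParameterTheta) \;\le\; SMSPE_{sk}(\xibold, \CovParameterTheta)
\qquad\text{for all admissible } \xibold \text{ and all } \CovParameterTheta,
\]
and likewise for the ordinary-kriging criterion $SMSPE_{ok}$.

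The second step is simply to integrate this inequality against the prior. Since the inequality holds for every $\CovParameterTheta$ in the (bounded) support of the joint prior $r_1(\sigma^2)\,r_2(\alpha)\,r_3(\beta)$, monotonicity of the expectation yields
\[
\mathcal{R}_{sk}(\xibold_{\mathrm{eq}}) = E_{\CovParameterTheta}\big[SMSPE_{sk}(\xibold_{\mathrm{eq}}, \CovParameterTheta)\big] \;\le\; E_{\CovParameterTheta}\big[SMSPE_{sk}(\xibold, \CovParameterTheta)\big] = \mathcal{R}_{sk}(\xibold),
\]
and the identical computation for $\mathcal{R}_{ok}$. As $\xibold$ was arbitrary, $\xibold_{\mathrm{eq}}$ minimizes both risk functionals, which is exactly the claim.

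What remains is only routine measure-theoretic bookkeeping, and this is where the bounded-support hypothesis earns its place. I would first note that, for fixed $\xibold$, the closed-form expressions in Lemma~\ref{Paper_II_LemmaSimpleAndOrdinary} show that $MSPE$ — and hence its supremum $SMSPE$ over the compact region $[0,1]\times[0,1]$ — is a continuous, and in particular measurable, function of $\CovParameterTheta$; continuity in $\CovParameterTheta$ together with compactness of the support makes $SMSPE$ bounded, so the expectations defining $\mathcal{R}_{sk}$ and $\mathcal{R}_{ok}$ are finite and the integration step is legitimate. I would also emphasize that the product (independence) structure of the prior plays no real role: the argument goes through verbatim for any joint prior with bounded support, precisely because the minimizer is parameter-free. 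Consequently there is no genuine analytic obstacle in this theorem — all the difficulty was absorbed into the uniform-in-$\CovParameterTheta$ optimality proved in Theorem~\ref{Paper_II_Theorem_ST_SK_OK_SMSPE} — and the only point requiring care is confirming integrability so that the pointwise inequality survives taking expectations.
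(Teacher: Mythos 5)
Your proposal is correct and takes essentially the same approach as the paper: the paper's proof also rests on the fact, from Theorem~\ref{Paper_II_Theorem_ST_SK_OK_SMSPE}, that one and the same equispaced grid minimizes the $SMSPE$ integrand for every value of $(\sigma^{2},\alpha,\beta)$, and then passes the minimization inside the prior integral (its displays~\eqref{key11}--\eqref{key12}). Your phrasing via pointwise domination plus monotonicity of expectation is just a cleaner justification of that min--integral exchange, with the integrability bookkeeping made explicit rather than implicit.
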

	\begin{proof}
	See Appendix  \ref{AppendixGMinus1}.  
	\end{proof}

 Theorems~\ref{Paper_II_Theorem_ST_SK_SMSPE} and \ref{Paper_II_Theorem_ST_SK_OK_SMSPE_PseudoBAyesian} suggest an equally spaced grid design would lead to the best prediction accuracy. However, in many practical scenarios for an ongoing statistical experiment, the design already exists. In such cases, there may be a need to improve the design {by} adding or deleting sampling points. In the next section, methods for finding the best retrospective design by minimizing the \textit{SMSPE} criteria are given.

\section{Retrospective design} \label{Paper_II_Retrospective_Design}
In this section, we provide algorithms for determining {the} best possible retrospective designs, with respect to the $SMSPE$ criterion for simple and ordinary kriging models. For finding the retrospective designs first the initial choice set containing the best possible design is constructed in such a way that it has finite cardinality (see Lemma~\ref{Paper_II_Retrospective_Lemma_I}). This ensures that the algorithms provided in this article are deterministic and converging. After the finite initial choice set is determined for the case of addition or deletion, we need to find the best possible design. One of the trivial ways is to compute the value of $SMSPE$ corresponding to each design in the choice set and then choose the one which gives the lowest value of $SMSPE$. However, this method has two major drawbacks: i) the value of covariance parameter is always needed and ii) if the value of the covariance parameters change, $SMSPE$'s needs to be computed again to determine the best design. To address these two challenges and avoid recalculating $SMSPE$ values for determining the best possible design a criterion for comparing evenness of designs is proposed in Definition \ref{Paper_II_Evenness_Defini}. Definition \ref{Paper_II_Evenness_Defini} gives a method to compare the evenness of two-dimensional grids, equivalently rank the grids in terms of their evenness. Then it is proved that under this criterion, indeed a more evenly spread grid leads to a lower value of $SMSPE$.

	In Section~\ref{Paper_II_Prospective_Design}, we saw that the optimal prospective designs for simple and ordinary kriging with two-dimensional inputs are equispaced, under a range of conditions. So, it is intuitive that retrospective designs should also be  as evenly spaced as possible. The notion of `evenly spread out' or `majorization' as given in \cite{MarshalOlkinBook} is used for comparing designs. Take two vectors ${\pmb w} =(w_{1}, w_{2}, \ldots, w_{r}) $ and ${\pmb w^{\prime}} =(w^{\prime}_{1}, w^{\prime}_{2}, \ldots, w^{\prime}_{r}) $ such that $ {\pmb w}, {\pmb w^{\prime}} \in \mathbb{R}^{r}$. {Suppose} $\displaystyle{w_{[1]} \geq w_{[2]} \geq \ldots  \geq w_{[r]}}$ denote the $r$ ordered components of the vector ${\pmb w}$. Then  ${\pmb w}$ is majorized ($\prec $) by ${\pmb w^{\prime}}$, under the following condition:  
	\begin{align}
		{\pmb w} \prec {\pmb w^{\prime}} & \text{ if } \begin{cases}
		&\sum_{i=1}^{k} w_{[i]} \leq \sum_{i=1}^{k} w^{\prime}_{[i]}, \;\;\; k=1, \ldots, r-1,\\
		& \sum_{i=1}^{r} w_{[i]} = \sum_{i=1}^{r} w^{\prime}_{[i]}.									 
		\end{cases} \label{Paper_II_Majorization_Definition}
		\end{align} 
		Under the partial order `$\prec$', ${\pmb w} $ is more evenly distributed than ${\pmb w^{\prime}}$. 
		\begin{definition}[Evenness of two-dimensional grids] \label{Paper_II_Evenness_Defini}
					Consider two grid designs, $\xibold $ and $\xibold^{\prime}$ of the same size (say $n \times m$). The designs are equivalently denoted $\xibold \equiv (\VectD, \VectDelta)$ and $\xibold^{\prime} \equiv (\VectD^{\prime}, \VectDelta^{\prime}) $. If $ \VectD \prec \VectD^{\prime}$ and $ \VectDelta \prec \VectDelta^{\prime}$, then we say $\xibold $ {is} more evenly spread than $\xibold^{\prime}$. 
		\end{definition}
In the following theorem we compare the vectors $ \VectD $, $\VectD^{\prime}$ and $ \VectDelta $, $\VectDelta^{\prime}$ in order to compare the corresponding designs with respect to the $SMSPE$ criteria.
	\begin{theorem} \label{Paper_II_Theorem_Any_Design}
	Consider a simple or ordinary kriging model with response $Z(\cdot)$ as in Theorem~\ref{Paper_II_Theorem_ST_SK_OK_SMSPE}. Consider two grid designs, $\xibold $ and $\xibold^{\prime}$ of the same size ($n \times m$). If, $ \VectD \prec \VectD^{\prime}$ and $ \VectDelta \prec \VectDelta^{\prime}$ then $SMSPE(\xibold) \leq SMSPE(\xibold^{\prime} )$ and therefore $\xibold $ is a better design than $\xibold^{\prime}$. 
\end{theorem}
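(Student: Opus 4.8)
The plan is to collapse the supremum over the continuous region $[0,1]\times[0,1]$ into a comparison of finitely many scalars attached to each design, and then to extract the inequality from the majorization hypothesis; I would carry this out in four stages. \emph{Localising the supremum inside a cell.} Fix a cell $[x_i,x_{i+1}]\times[y_j,y_{j+1}]$ and use the local coordinates $a=x_0-x_i$, $b=y_0-y_j$ of Lemma~\ref{Paper_II_LemmaSimpleAndOrdinary}. Write the two building blocks of that lemma in the $x$-direction as
\[
\phi_\alpha(a,d_i)=\frac{e^{-2\alpha a}-2e^{-2\alpha d_i}+e^{-2\alpha(d_i-a)}}{1-e^{-2\alpha d_i}},\qquad
\psi_\alpha(a,d_i)=\frac{e^{-\alpha a}+e^{-\alpha(d_i-a)}}{1+e^{-\alpha d_i}},
\]
with $\phi_\beta(b,\delta_j)$ and $\psi_\beta(b,\delta_j)$ defined identically after replacing $(\alpha,a,d_i)$ by $(\beta,b,\delta_j)$. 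Each has a convex numerator symmetric under $a\leftrightarrow d_i-a$, so each is minimised at the cell centre $a=d_i/2$ and equals $1$ on the boundary $a\in\{0,d_i\}$; the same holds in $b$. For simple kriging the MSPE is $\sigma^2(1-\phi_\beta\phi_\alpha)$ with both factors nonnegative, hence maximised at the cell centre. For ordinary kriging the correction $(\Omega_x(\xibold)\Omega_y(\xibold))^{-1}(1-\psi_\beta\psi_\alpha)^2$ is also maximised there, because $\psi_\beta\psi_\alpha\le 1$ is minimised at the centre, so its complement squared is maximal. The crucial structural point is that both ordinary-kriging terms peak \emph{simultaneously} at the centre, so their sum does too and no interior trade-off can occur.

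\emph{Maximising across cells.} Evaluating at the centre gives $\phi_\alpha(d_i/2,d_i)=2/(e^{\alpha d_i}+1)$ and $\psi_\alpha(d_i/2,d_i)=\operatorname{sech}(\alpha d_i/2)$, both strictly decreasing in $d_i$, with the analogous statements in $\delta_j$. Since $\Omega_x(\xibold)$ and $\Omega_y(\xibold)$ are global and positive (independent of the cell), every term of the centre-MSPE is increasing in the local spacings $d_i,\delta_j$, so the maximum over cells is attained at the cell with $d_i=\norm{\VectD}_\infty$ and $\delta_j=\norm{\VectDelta}_\infty$; such a cell exists because $\mathcal{S}=\mathcal{S}_1\times\mathcal{S}_2$ is a full product grid. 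This yields closed forms for $SMSPE_{sk}$ and $SMSPE_{ok}$ depending on the design only through $\norm{\VectD}_\infty$, $\norm{\VectDelta}_\infty$, and---for ordinary kriging---the scalars $\Omega_x(\xibold),\Omega_y(\xibold)$.

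\emph{Majorization consequences.} The $k=1$ clause of \eqref{Paper_II_Majorization_Definition} applied to $\VectD\prec\VectD^{\prime}$ gives $\norm{\VectD}_\infty\le\norm{\VectD^{\prime}}_\infty$, and likewise $\norm{\VectDelta}_\infty\le\norm{\VectDelta^{\prime}}_\infty$. This already finishes simple kriging, whose $SMSPE_{sk}$ is increasing in both maximal spacings. For ordinary kriging I additionally establish the closed form
\[
\Omega_x(\xibold)=\VectOneN^{T}\MatrixP^{-1}\VectOneN=1+\sum_{i=1}^{n-1}\tanh\!\Big(\tfrac{\alpha d_i}{2}\Big),
\]
which follows from the tridiagonal structure of $\MatrixP^{-1}$ for the exponential kernel. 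As $\tanh$ is concave on $[0,\infty)$, $\Omega_x$ is a Schur-concave function of $\VectD$; hence $\VectD\prec\VectD^{\prime}$ forces $\Omega_x(\xibold)\ge\Omega_x(\xibold^{\prime})$, and similarly $\Omega_y(\xibold)\ge\Omega_y(\xibold^{\prime})$.

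\emph{Monotonicity and conclusion.} Writing $SMSPE_{ok}$ as a function $H$ of the four scalars $\norm{\VectD}_\infty,\norm{\VectDelta}_\infty,\Omega_x(\xibold),\Omega_y(\xibold)$, the correction term is manifestly decreasing in $\Omega_x(\xibold)$ and $\Omega_y(\xibold)$; and since the centre values $2/(e^{\alpha d}+1)$ and $\operatorname{sech}(\alpha d/2)$ decrease in the spacing while $1-\psi_\beta\psi_\alpha\ge0$, a sign check of the partial derivatives shows $H$ increases in $\norm{\VectD}_\infty$ and $\norm{\VectDelta}_\infty$. The majorization consequences move all four arguments in the direction that lowers $H$---smaller maximal spacings and larger $\Omega$'s---so $SMSPE_{ok}(\xibold)\le SMSPE_{ok}(\xibold^{\prime})$, the simple-kriging bound having been obtained above. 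I expect the ordinary-kriging step to be the main obstacle: one must exclude any location or cell at which the positive correction term could reverse the ordering, and it is precisely for this that both the within-cell co-monotonicity of the two terms and the Schur-concavity of $\Omega_x,\Omega_y$ (rather than mere monotonicity of the maximal spacing) are required.
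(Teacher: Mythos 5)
Your proof is correct and takes essentially the same route as the paper's: collapse the supremum to the closed forms $SMSPE_{sk}$ and $SMSPE_{ok}$ depending only on $\norm{\VectD}_{\infty}$, $\norm{\VectDelta}_{\infty}$, $\Omega_{x}$, $\Omega_{y}$ (the paper's Results in Appendix~\ref{AppendixG} and equations~\eqref{Paper_II_SMSPE_SK}--\eqref{Paper_II_SMSPE_OK}), apply the $k=1$ clause of majorization to order the maximal spacings, and handle the ordinary-kriging correction via a Schur property of $\Omega_{x}$ and $\Omega_{y}$. The only difference is cosmetic: you prove Schur-concavity of $\Omega_{x} = 1+\sum_{i}\tanh(\alpha d_{i}/2)$ directly from concavity of $\tanh$, whereas the paper cites the equivalent Schur-convexity of $1/\Omega_{x}$ from prior work.
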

\begin{proof} 
	See Appendix~\ref{AppendixG0} for the proof.

\end{proof}
	However, a direct application of Theorem~\ref{Paper_II_Theorem_Any_Design} in practice may not always be possible. Note, that `$\prec$' defines a partial order over a set. Hence, if there is a set of $n \times m $ grid designs $\SetUPrime$, for designs $\xibold, \xibold^\prime  \in \SetUPrime $, the respective vector pairs $\VectD$ and $ \VectD^{\prime}$, or $ \VectDelta $ and $\VectDelta^{\prime}$, might not be comparable under the partial order. In such a case, obtaining a unique design $ \xibold^{o} \equiv (\VectD^o ,\VectDelta^o) \in \SetUPrime$ (using Theorem~\ref{Paper_II_Theorem_Any_Design}) such that for any $\xibold^{\prime} \in \SetUPrime $, the vectors $\VectD^o \prec	 \VectD^{\prime}$ and $ \VectDelta^o \prec	 \VectDelta^{\prime}$ might not even be possible. Nevertheless, we will see that Theorem~\ref{Paper_II_Theorem_Any_Design} is very helpful for determining the best possible retrospective design and eliminating dependency of the best design on the values of covariance parameters in many cases.
	
		If the experimenter knows that the best possible retrospective design belongs to the set $\SetUPrime$, it would be useful if some designs could be eliminated from this set and then the experimenter could work with a much smaller subset of designs $ \SetU (\subseteq	\SetUPrime)$ which contains the best possible retrospective design. Lemma~\ref{Paper_II_Minorised_Designs} provides a method to find such a set $\SetU$. Note that this lemma is applicable to the cases where the initial choice set $(\SetUPrime)$ contains a finite number of designs. 
		\begin{lemma} \label{Paper_II_Minorised_Designs}
			Suppose that, for the purpose of conducting simple or ordinary kriging experiments as in Theorem~\ref{Paper_II_Theorem_ST_SK_OK_SMSPE}, we have an initial choice set of grid designs (all of size $n \times m $) denoted $\; \SetUPrime = \{ \xibold^\prime_{i}\equiv (\VectD^\prime_{i}, \VectDelta^\prime_{i} ) : i = 1, \ldots, \aleph\}$. {Suppose a subset $\SetU \subseteq \SetUPrime$ is constructed using the following algorithm:}
			\begin{algorithmic}[1]
				\STATE	Set, $ \SetU = \emptyset$. 
				\FOR{ $i = 1, \ldots, 	\aleph $}
				\IF{ for $\xibold^\prime_{i} \in \SetUPrime $, we cannot find any design $\xibold  \in \SetUPrime $ such that $\VectD \prec \VectD^{\prime}_{i}$ and $ \VectDelta \prec \VectDelta^{\prime}_{i}$ }
				\STATE{$\SetU = \SetU \cup \{\xibold^\prime_{i} \} $}.
				\ENDIF
				\ENDFOR
			\end{algorithmic}
			Then the set $\SetU $ contains the best possible design. 
		\end{lemma}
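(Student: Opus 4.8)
The plan is to recognise the set $\SetU$ produced by the algorithm as exactly the collection of designs in $\SetUPrime$ that are \emph{minimal} with respect to the evenness order, and then to argue that an $SMSPE$-minimiser can always be found among such minimal designs. First I would restate the construction in order-theoretic language: writing $\xibold \preceq \xibold'$ to mean that $\xibold$ is more evenly spread than $\xibold'$ in the sense of Definition~\ref{Paper_II_Evenness_Defini} (i.e.\ $\VectD \prec \VectD'$ and $\VectDelta \prec \VectDelta'$), the loop places $\xibold'_i$ into $\SetU$ precisely when no \emph{other} design of $\SetUPrime$ satisfies $\xibold \preceq \xibold'_i$. Thus $\SetU$ is the set of $\preceq$-minimal elements of $\SetUPrime$. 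The only external ingredient needed is Theorem~\ref{Paper_II_Theorem_Any_Design}, which supplies the monotonicity $\xibold \preceq \xibold' \Rightarrow SMSPE(\xibold) \le SMSPE(\xibold')$.

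The core of the proof is a descent argument. Let $\xibold^{o} \in \SetUPrime$ be a best possible design, that is, a minimiser of $SMSPE$ over the finite set $\SetUPrime$. If $\xibold^{o} \in \SetU$ there is nothing to prove. Otherwise, by the characterisation above there is a design $\xibold_1 \in \SetUPrime$, distinct from $\xibold^{o}$, with $\xibold_1 \preceq \xibold^{o}$; Theorem~\ref{Paper_II_Theorem_Any_Design} then gives $SMSPE(\xibold_1) \le SMSPE(\xibold^{o})$, and since $\xibold^{o}$ is a minimiser this inequality must be an equality, so $\xibold_1$ is itself a best possible design. Iterating yields a chain $\xibold^{o} \succeq \xibold_1 \succeq \xibold_2 \succeq \cdots$ of designs of $\SetUPrime$, all sharing the minimal $SMSPE$ value, which can stop only upon reaching a $\preceq$-minimal design, and such a design belongs to $\SetU$ by definition. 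That terminal design is then a best possible design lying in $\SetU$, which is exactly the assertion.

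The step I expect to require the most care is justifying that this descent actually terminates. Majorisation is only a preorder: two distinct designs whose partition vectors are permutations of one another are mutually more evenly spread, so a careless descent could in principle cycle among such designs and never land in $\SetU$. To rule this out I would introduce a strictly Schur-convex potential, for instance $\Phi(\xibold) = \sum_{i} d_{i}^{2} + \sum_{j} \delta_{j}^{2}$, which is non-increasing along every step of the chain and strictly decreasing at any step that is not a pure permutation; since $\SetUPrime$ is finite, only finitely many values of $\Phi$ can occur and no design can recur, forcing termination after finitely many steps. Alternatively, reading the relation in the algorithm as strict majorisation means permutation-equivalent designs are never admitted as witnesses, so every such minimiser is retained in $\SetU$ outright and the same potential argument closes the gap. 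Either way the monotonicity of Theorem~\ref{Paper_II_Theorem_Any_Design} does all the substantive work, and the no-cycling/finiteness argument is the only genuine obstacle.
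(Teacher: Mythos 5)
Your proof is correct and rests on the same key ingredient as the paper's own proof, namely the monotonicity supplied by Theorem~\ref{Paper_II_Theorem_Any_Design}, but it is genuinely more careful in structure. The paper's entire argument is the contrapositive one-liner: if $\xibold^{\prime}_{i}$ is eliminated then some $\xibold$ dominates it, so $SMSPE(\xibold) \leq SMSPE(\xibold^{\prime}_{i})$ and hence $\xibold^{\prime}_{i}$ ``cannot be the best possible design.'' This glosses over exactly the two points you isolate: the inequality is non-strict, so an eliminated design may still be \emph{tied} for the minimum (what must be shown is that some minimiser survives in $\SetU$, not that eliminated designs are non-optimal), and the dominating witness may itself have been eliminated, which is what forces your descent argument and its termination analysis. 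What your route buys is a complete treatment of ties; what it costs is a potential function and a precise reading of Step 3. Note also that your chain can be avoided entirely: among the finitely many $SMSPE$-minimisers in $\SetUPrime$, pick one minimising your strictly Schur-convex $\Phi$; any strict dominator of it would be a minimiser with smaller $\Phi$, so that design is already $\prec$-minimal and lies in $\SetU$.

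One caveat: of your two ways of closing the termination gap, only the second (reading the witnesses in Step 3 as \emph{strictly} more even, i.e.\ not permutation-equivalent) actually works, and it is forced rather than optional. Under the weak reading, permutation-equivalent designs have equal $\Phi$, so your claim that ``no design can recur'' is unsupported and the cycle $\xibold_{1} \to \xibold_{2} \to \xibold_{1}$ survives the potential argument. Worse, under the weak reading the lemma itself is false for an adversarial choice set: if $\SetUPrime$ consists of exactly two distinct designs whose partition vectors are permutations of one another, each eliminates the other, $\SetU$ is empty, and it contains no best design. (The paper is silently exposed to the same issue, since $\prec$ as defined in \eqref{Paper_II_Majorization_Definition} is even reflexive, under which reading $\SetU$ would always be empty.) So your proof should be read as: adopt the strict interpretation, then run the descent with $\Phi$ strictly decreasing at every step; that version is complete and correct.
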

\begin{proof}
	From Theorem~\ref{Paper_II_Theorem_Any_Design} it is clear that if for $ \xibold^\prime_{i} \in \SetUPrime $ for some $i$, there exist $\xibold  \in \SetUPrime $ such that $\VectD \prec \VectD^{\prime}_{i}$ and $ \VectDelta \prec \VectDelta^{\prime}_{i}$ then, $SMSPE(\xibold ) \leq SMSPE(\xibold^\prime_{i})$. In that case $\xibold^\prime_{i}$ cannot be the best possible design and therefore {the} set $\SetU $ contains the best possible design. 
\end{proof}	
\noindent NOTE: Lemma~\ref{Paper_II_Minorised_Designs} is applicable for any choice of parameter values; in Step 3 of the lemma, we are comparing only the partition sizes of the design grids in order to eliminate some of the grids from $\SetUPrime$.

If $|\SetU| \ll  |\SetUPrime| $ or  $|\SetU| = 1$ then it would be easier to determine the best possible designs by comparing the designs in $\SetU$ with respect to the $SMSPE$ values. In Section~\ref{Paper_II_Illustration}, we see that in many cases $|\SetU| = 1$. Clearly, in such cases the selection of the best possible design {does} not depend upon the choice of parameter values. However, for cases where $|\SetU| \ll  |\SetUPrime|$ the best possible design depends on the choice of parameters. In that case, the advantage of using Lemma~\ref{Paper_II_Minorised_Designs} is that the experimenter may need to look at {very few designs} instead of all designs in the set $\SetUPrime$. 

Similarly, if we want to find the worst possible grid designs among the set $\SetUPrime$, Lemma~\ref{Paper_II_Majorised_Designs} provides a method to find {a} smaller set $\SetUWorst$ which contains the worst possible design.

\begin{lemma} \label{Paper_II_Majorised_Designs}
	Suppose that, for the purpose of conducting simple or ordinary kriging experiments as in Theorem~\ref{Paper_II_Theorem_ST_SK_OK_SMSPE}, we have an initial choice set of grid designs (all of size $n \times m $) denoted as $\; \SetUPrime = \{ \xibold^\prime_{i}\equiv (\VectD^\prime_{i}, \VectDelta^\prime_{i} ) : i = 1, \ldots, \aleph\}$. {Suppose a subset $\SetU \subseteq \SetUPrime$ is constructed using the following algorithm:}
	\begin{algorithmic}[1]
		\STATE	Set, $ \SetUWorst = \emptyset$. 
		\FOR{$i = 1, \ldots, 	\aleph $}
		\IF{for $\xibold^\prime_{i} \in \SetUPrime $, we cannot find any design $\xibold  \in \SetUPrime $ such that, $\VectD^{\prime}_{i} \prec \VectD $ and $ \VectDelta^{\prime}_{i}  \prec \VectDelta $ }
		\STATE{$\SetUWorst = \SetUWorst \cup \{\xibold^\prime_{i} \} $}.
		\ENDIF
		\ENDFOR
	\end{algorithmic}
	Then the set $\SetUWorst$ contains the worst possible design. 
\end{lemma}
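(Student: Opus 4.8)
The plan is to mirror the proof of Lemma~\ref{Paper_II_Minorised_Designs} essentially verbatim, with the direction of majorization reversed, since Lemma~\ref{Paper_II_Majorised_Designs} is exactly its dual for the maximization problem. The only tool needed is Theorem~\ref{Paper_II_Theorem_Any_Design}, which asserts that the majorization order on partition vectors is order-consistent with $SMSPE$: whenever $\VectD \prec \VectD^{\prime}$ and $\VectDelta \prec \VectDelta^{\prime}$ one has $SMSPE(\xibold) \leq SMSPE(\xibold^{\prime})$. Reading this with the roles swapped, a design whose partition vectors \emph{majorize} those of another (i.e. a less evenly spread design) is at least as bad, in the sense of a larger $SMSPE$, as that other design.

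The core step is the following contrapositive observation. Fix an index $i$ and suppose $\xibold^\prime_{i}$ is not placed in $\SetUWorst$; by the algorithm this means there exists some $\xibold \in \SetUPrime$ with $\VectD^{\prime}_{i} \prec \VectD$ and $\VectDelta^{\prime}_{i} \prec \VectDelta$. Applying Theorem~\ref{Paper_II_Theorem_Any_Design} with $\xibold^\prime_{i}$ in the role of the more evenly spread design yields $SMSPE(\xibold^\prime_{i}) \leq SMSPE(\xibold)$. Hence there is another candidate in $\SetUPrime$ that is at least as poor as $\xibold^\prime_{i}$, so discarding $\xibold^\prime_{i}$ cannot lower the maximum value of $SMSPE$ attained over the surviving designs.

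To finish, I would take any worst design $\xibold^{w}$, a maximizer of $SMSPE$ over $\SetUPrime$, which exists since $\SetUPrime$ is finite. If $\xibold^{w} \in \SetUWorst$ we are done; otherwise the step above produces $\xibold^{(1)} \in \SetUPrime$ with $SMSPE(\xibold^{w}) \leq SMSPE(\xibold^{(1)})$, and since $\xibold^{w}$ is a maximizer this forces $SMSPE(\xibold^{(1)}) = SMSPE(\xibold^{w})$, so $\xibold^{(1)}$ is again a worst design whose partition vectors strictly majorize those of $\xibold^{w}$. Because $\prec$ is a genuine partial order on the finitely many distinct partition-vector classes in $\SetUPrime$, this strictly ascending chain must terminate, necessarily at an element the algorithm retains in $\SetUWorst$. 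That terminal element is a worst design, which proves the claim.

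The main obstacle I anticipate is not the inequality itself but the bookkeeping needed to guarantee that $\SetUWorst$ retains an actual maximizer rather than merely some collection of non-dominated designs: one must rule out a degenerate loop of equally bad designs that never reaches $\SetUWorst$. This is dispatched by the antisymmetry of majorization together with the finiteness of $\SetUPrime$, exactly the two hypotheses the lemma already imposes.
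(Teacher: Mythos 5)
Your proof is correct and follows essentially the same route as the paper's, which simply states that the proof is similar to that of Lemma~\ref{Paper_II_Minorised_Designs}: apply Theorem~\ref{Paper_II_Theorem_Any_Design} in the dual direction to conclude that any design discarded by the algorithm is weakly dominated in $SMSPE$ by another design in $\SetUPrime$. You go one step further than the paper, and usefully so. The paper's template argument concludes that a dominated design ``cannot be'' the extremal one, which is not literally valid under ties (equality $SMSPE(\xibold^\prime_i) = SMSPE(\xibold)$ is allowed by Theorem~\ref{Paper_II_Theorem_Any_Design}); your ascending-chain argument --- a discarded maximizer points to another maximizer strictly higher in the majorization order, and finiteness forces the chain to terminate inside $\SetUWorst$ --- is the rigorous patch for exactly this gap. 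One caveat you only half-resolve: majorization is antisymmetric only up to permutation of components, so your chain is strictly ascending on equivalence classes (and hence terminating) only if the algorithm's domination test is read as requiring the dominating design to be \emph{not} permutation-equivalent to $\xibold^\prime_i$ --- in particular not $\xibold^\prime_i$ itself, which trivially satisfies $\VectD^\prime_i \prec \VectD^\prime_i$ by reflexivity. If two distinct designs merely permute each other's partition vectors, each causes the other to be discarded, and your chain could cycle between them without ever reaching $\SetUWorst$. This ambiguity sits in the paper's formulation of the algorithm rather than in your argument, and under the natural strict-domination reading your proof is complete.
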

\begin{proof}
The proof is similar to that of Lemma~\ref{Paper_II_Minorised_Designs}.
\end{proof}

In subsequent sections, the intuition that adding points to an existing design should lead to more accurate predictions is mathematically justified. This is followed by discussing two methodologies for finding the retrospective design by the addition of points: i) adding one point at a time and ii) adding all the points simultaneously. After this, a methodology for deleting points from an existing design is also discussed. 

\subsection{Addition of sampling points to the existing design}\label{Paper_II_Addition}

In this section, the task considered is that of adding $n_{1} $ new points, say $\{x_{1}^{\prime}, \ldots, x_{n_1}^{\prime} \}$, to $\mathcal{S}_{1}$ and $m_{1}$ new points, say $\{y_{1}^{\prime}, \ldots, y_{m_1}^{\prime} \}$, to $\mathcal{S}_{2}$. Since $x_{i}^{\prime}$ and $y_{j}^{\prime}$ ($i=1,\ldots,n_{1}$ and $j=1,\ldots,m_{1}$) are $n_{1} + m_{1}$ continuous variables over $(0,1)$, therefore there are infinite choices for constructing a new retrospective design (by adding new points to the existing design $\xibold$). Lemma~\ref{Paper_II_Retrospective_Lemma_I} constructs the initial choice set which contains the best possible design, such that the set has finite cardinality. In the following sections, the notation $\xiboldPlus \equiv (\VectDPlus, \VectDeltaPlus)$ is used to denote a new design obtained by adding points to an existing design. The new grid corresponding to $\xiboldPlus$ has $(n+n_{1}) \times (m+m_{1})$ points. Note, adding $n_{1}$ points to {the} $x$-covariate and $m_{1}$ points to {the}  $y$-covariate for an existing $n \times m$ grid would give rise to a new grid containing $(n m_{1} + m n_{1} + n_{1} m_{1})$ {additional} points. 

Intuitively, the addition of design points should lead to better prediction. However, looking at it mathematically (in Appendix~\ref{AppendixG1}) will help in determining the locations for adding points to an existing design ($\xibold$), such that the new design ensures the best prediction outcomes (with respect to the $SMSPE$ criterion).  
For the case of simple kriging, from Result \ref{AppendixG1_Result1} of Appendix~\ref{AppendixG1}, it follows that $SMSPE_{sk}(\xiboldPlus) \leq SMSPE_{sk}(\xibold)$. For the simple kriging models, there might be no reduction in $SMSPE$ despite adding new points. As an example, consider a design having $n$ points for {the} $x$-covariate and $\displaystyle{d_{i} = \dfrac{1}{n-1}}$ for all $i = 1, \ldots, n-1$. If the total number of points added to the $x$-covariate is less than $n-1$, then $SMSPE_{sk}$ remains unchanged. In this case, an experimenter needs to add at least $(n-1)$ additional points to see a reduction in $SMSPE_{sk}$. Whereas, for the case of ordinary kriging, from Result \ref{AppendixG1_Result2} of Appendix~\ref{AppendixG1}, it follows that $SMSPE_{ok}(\xiboldPlus) <  SMSPE_{ok}(\xibold)$. Thus, with every subsequent addition of points over any covariate axis of the grid, the $SMSPE_{ok}$ reduces. So for ordinary kriging models, depending on the budget{, an experimenter would be justified in adding} as many sample points as possible to an existing design.

Next, two different ways of adding points to an existing design are discussed. In Section~\ref{PaperII_Section_Algorithm1}, adding the points sequentially (or one at a time) is considered, while in Section~\ref{PaperII_Section_Algorithm2} adding points altogether (or simultaneously) is considered. Both the algorithms are deterministic. Lemma~\ref{Paper_II_Retrospective_Lemma_I} given below is helpful in designing Algorithms~\ref{PaperII_Algorithm1} and \ref{PaperII_Algorithm2}, used to find best retrospective designs by adding {points} to existing {designs} using {these} two different methods.

\begin{lemma} \label{Paper_II_Retrospective_Lemma_I}
	For simple and ordinary kriging models as in Theorem~\ref{Paper_II_Theorem_ST_SK_OK_SMSPE}, if $n_{1}$ and $m_{1}$ new points are added between $(x_{i}, x_{i+1})$ and $(y_{j}, y_{j+1})$, respectively for some $i = 1, \ldots, n-1$ and {$j= 1, \ldots, m-1$, then} equally spacing these new points within the {intervals} $(x_{i}, x_{i+1})$ and $(y_{j}, y_{j+1})$ leads to maximum reduction in the $SMSPE$ (equivalently, minimum {possible} $SMSPE$). 
\end{lemma}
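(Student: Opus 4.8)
The plan is to reduce the claim to the majorization comparison established in Theorem~\ref{Paper_II_Theorem_Any_Design}. Since every augmented design is built from the same base grid $\xibold$, they all share the value $SMSPE(\xibold)$, so maximizing the reduction in $SMSPE$ is the same as minimizing the $SMSPE$ of the augmented design $\xiboldPlus$. First I would record the effect of the insertion on the partition vectors. Placing $n_1$ interior points in $(x_i,x_{i+1})$ subdivides the single partition $d_i$ into $n_1+1$ positive pieces $\mathbf{e}=(e_1,\ldots,e_{n_1+1})$ with $\sum_{k=1}^{n_1+1} e_k = d_i$, while every other $x$-partition is left unchanged; collecting those unchanged partitions into a block $\mathbf{c}$, the augmented $x$-partition vector $\VectDPlus$ is the concatenation of $\mathbf{c}$ and $\mathbf{e}$ (the order is immaterial, since majorization depends only on the ordered components). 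Equal spacing corresponds to the constant block $\mathbf{e}^{\star}=(d_i/(n_1+1),\ldots,d_i/(n_1+1))$, so the equally spaced augmented vector is $(\mathbf{c},\mathbf{e}^{\star})$. The same description holds in the $y$-direction, with $\delta_j$ split into $m_1+1$ parts and $\VectDeltaPlus$ playing the analogous role.

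Next I would invoke two standard facts about majorization. The first is that, among all vectors of a fixed length with a fixed sum, the constant vector is majorized by every other such vector; hence $\mathbf{e}^{\star}\prec\mathbf{e}$ for every admissible block $\mathbf{e}$. The second is that majorization is preserved when a common block is adjoined: if $\mathbf{e}^{\star}\prec\mathbf{e}$ then $(\mathbf{c},\mathbf{e}^{\star})\prec(\mathbf{c},\mathbf{e})$, which one sees by writing $\mathbf{e}^{\star}$ as a doubly stochastic image of $\mathbf{e}$ and extending that map by the identity on $\mathbf{c}$, yielding a block-diagonal doubly stochastic matrix. Together these show that the equally spaced augmented $x$-partition vector is majorized by that of any competing placement, and the identical argument in the $y$-direction gives the corresponding majorization for the augmented $y$-partition vectors.

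Finally, since the equally spaced augmentation and any competing augmentation are grids of the same size $(n+n_1)\times(m+m_1)$ over $[0,1]\times[0,1]$, Theorem~\ref{Paper_II_Theorem_Any_Design} applies (for both the simple and the ordinary kriging model at once) to the pair of partition-vector majorizations just established and yields $SMSPE(\xiboldPlus)\le SMSPE$ of every competing placement. This is precisely the assertion that equal spacing attains the minimum possible $SMSPE$, i.e.\ the maximum reduction. I expect the only point needing care to be the bookkeeping that insertion perturbs exactly one partition block while fixing the rest, so that the block-level relation $\mathbf{e}^{\star}\prec\mathbf{e}$ lifts to the full partition vectors; the remaining ingredients (the constant vector as the majorization minimum, preservation under a common block, and the passage to $SMSPE$ via Theorem~\ref{Paper_II_Theorem_Any_Design}) are standard or already in hand.
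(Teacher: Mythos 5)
Your proposal is correct, but it takes a genuinely different route from the paper. The paper's own proof (Appendix~H) works directly on the closed-form expressions \eqref{Paper_II_SMSPE_SK} and \eqref{Paper_II_SMSPE_OK}: it observes that equal spacing minimizes $\norm{\VectDPlus}_\infty$ and $\norm{\VectDeltaPlus}_\infty$, that the terms $\bigl(1-s_{\alpha}s_{\beta}\bigr)$ and $\bigl(1-u_{\alpha}u_{\beta}\bigr)^2$ are increasing in those maxima, and then, for ordinary kriging, separately handles the factors $1/\Omega_{x}(\xiboldPlus)$ and $1/\Omega_{y}(\xiboldPlus)$ by invoking the Schur-convexity of $\bigl(\sum_{l}(d_{l}^{\prime}+\tfrac{e^{\alpha d_{l}^{\prime}}-1}{e^{\alpha d_{l}^{\prime}}+1})\bigr)^{-1}$ in the subdivided gaps. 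You instead treat Theorem~\ref{Paper_II_Theorem_Any_Design} as a black box and supply only the majorization bookkeeping: the constant block is majorized by any block with the same length and sum, and concatenation with a common block preserves majorization (via the block-diagonal doubly stochastic matrix, i.e.\ essentially Proposition A.7 of \cite{MarshalOlkinBook}, which the paper itself uses in Appendix~\ref{Paper_II_AppendixI} for a different result). Both routes rest on the same underlying facts --- monotonicity in the largest gap and Schur-convexity of the $\Omega$ terms --- but in your version those are already packaged inside Theorem~\ref{Paper_II_Theorem_Any_Design}, so your argument avoids re-deriving them, handles simple and ordinary kriging simultaneously, and is arguably the cleaner logical organization; the paper's direct computation, by contrast, makes explicit exactly which terms of the $SMSPE$ are being controlled and does not depend on the reader accepting the earlier theorem. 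One point worth making explicit if you write this up: Theorem~\ref{Paper_II_Theorem_Any_Design} requires majorization in \emph{both} coordinates simultaneously, which your construction does deliver (equal spacing in $x$ and in $y$ is compared against an arbitrary placement in $x$ and in $y$), and both augmented designs have the same size $(n+n_1)\times(m+m_1)$ with partition sums equal to one, so the hypotheses are genuinely met.
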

			\begin{proof}
		See Appendix \ref{AppendixH}.
		\end{proof}

		\subsubsection{Sequential addition of points/ Adding one point at each stage} \label{PaperII_Section_Algorithm1}
In this section, the case in which the experimenter adds only one point to one of the covariate axis at a time (at each stage) is discussed for simple and ordinary kriging models. The single point is placed in the existing design in such a way that the new design with one more point on the chosen covariate axis is the best possible design at that stage. %That is, when one point is %inserted in the design, it leads to the minimum possible SMSPE at that step. 
At each step, since only one point is added to one covariate axis, there is only one continuous variable over $(0, 1)$. Therefore, this problem is relatively easier than adding all points at the same time. 

In practice, this {case} would arise when an experimenter does not have {prior} knowledge of how many points should be added to the design. Rather, the experimenter adds a point, experiments for {some time} and then again adds another point if she thinks that would be necessary. 
\allowdisplaybreaks
\begin{lemma}
	For simple and ordinary kriging models as in Theorem~\ref{Paper_II_Theorem_ST_SK_OK_SMSPE}, Algorithm~\ref{PaperII_Algorithm1} (on page \pageref{PaperII_Algorithm1}) sequentially adds $n_1$ and $m_1$ points to the $x$- and $y$-covariates, respectively, in such a way that the minimum possible \textit{SMSPE} is attained at each stage.
\end{lemma}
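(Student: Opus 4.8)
The plan is to turn each stage of Algorithm~\ref{PaperII_Algorithm1} into a one-dimensional majorization problem and then read off the conclusion from Theorem~\ref{Paper_II_Theorem_Any_Design}. Fix a stage at which the algorithm augments the $x$-covariate (the $y$-covariate stages are identical by symmetry). Since a single insertion touches only one axis, the $y$-partition $\VectDelta$ is left unchanged, so any two candidate designs for this stage share the same $\VectDelta$ and differ only in their new $x$-partitions. Writing the current $x$-partition as $\VectD=(d_1,\ldots,d_{n-1})$, inserting one point inside $(x_k,x_{k+1})$ replaces $d_k$ by two pieces summing to $d_k$ and yields a length-$n$ vector $\VectDPlus^{(k)}$ of the same total. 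With $\VectDelta$ held fixed, Theorem~\ref{Paper_II_Theorem_Any_Design} reduces the stage-wise problem to this: among all admissible single insertions, find one whose new $x$-partition is $\prec$ (in the sense of Definition~\ref{Paper_II_Evenness_Defini}) every competing partition; such a design then has the least $SMSPE$ at that stage.

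Two choices remain: where to place the point inside a chosen interval, and which interval to split. The first is settled by Lemma~\ref{Paper_II_Retrospective_Lemma_I} with $n_1=1$: equally spacing the new point, i.e. putting it at the midpoint so that $d_k$ becomes $d_k/2,d_k/2$, minimizes $SMSPE$ once interval $k$ is fixed. For the second, the core combinatorial fact I would prove is that among the midpoint-bisection vectors $\{\VectDPlus^{(k)}\}$, the one obtained by bisecting a \emph{longest} interval, i.e. an interval of length $\norm{\VectD}_{\infty}$, satisfies $\VectDPlus^{(k^\ast)}\prec\VectDPlus^{(k)}$ for every other $k$. I would argue this straight from the partial-sum definition in~\eqref{Paper_II_Majorization_Definition}: all these vectors have a common sum, so only the ordered partial sums matter, and halving the top entry depresses the sorted profile by the largest possible amount while leaving the smaller entries intact, forcing every ordered partial sum of $\VectDPlus^{(k^\ast)}$ below that of any competitor; equivalently, each competitor is reachable from $\VectDPlus^{(k^\ast)}$ by a mass transfer from a larger to a smaller coordinate, the generator of $\prec$. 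Combining the two choices through the $SMSPE$ chain ``(off-midpoint in $k$) $\ge$ (midpoint in $k$) by Lemma~\ref{Paper_II_Retrospective_Lemma_I} $\ge$ (midpoint in a longest interval) by Theorem~\ref{Paper_II_Theorem_Any_Design}'' shows the midpoint-of-the-largest-gap design attains the minimum $SMSPE$ over all single $x$-insertions. Finally I would induct over the $n_1+m_1$ stages: after each point is recorded the design is updated and the same argument reapplies verbatim, and parameter-independence of the whole procedure is inherited from the NOTE following Lemma~\ref{Paper_II_Minorised_Designs}, since every comparison invoked compares partition sizes only.

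The step I expect to be the main obstacle is the core majorization fact itself, that bisecting a longest interval produces a vector $\prec$ all other single-bisection vectors. The ordering-and-partial-sum bookkeeping must be handled with care because the positions of the two new half-intervals relative to the untouched $d_i$ differ from case to case, and ties at $\norm{\VectD}_{\infty}$ have to be treated. A secondary delicate point is the coordination between axes: an $x$-insertion and a $y$-insertion change different coordinates and are therefore \emph{not} comparable under the product order of Definition~\ref{Paper_II_Evenness_Defini}, so the ``minimum possible $SMSPE$ at each stage'' must be read relative to the axis the algorithm augments at that stage, and I would make this scheduling explicit so that Theorem~\ref{Paper_II_Theorem_Any_Design} applies cleanly with the complementary partition held fixed.
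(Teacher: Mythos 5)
Your proposal is correct, and it takes a genuinely different route from the paper for the crucial step of choosing \emph{which} interval to split. Both you and the paper settle the within-interval placement the same way (Lemma~\ref{Paper_II_Retrospective_Lemma_I} with a single point forces the midpoint). But the paper's proof (Appendix~\ref{AppendixG2}) then argues directly from the closed forms \eqref{Paper_II_SMSPE_SK} and \eqref{Paper_II_SMSPE_OK}: for simple kriging the $SMSPE$ depends on the $x$-partition only through $\norm{\VectDPlus}_{\infty}$, which is smallest when a largest gap is bisected, and for ordinary kriging the paper must additionally prove, by a separate calculus computation, that the increment $\mathcal{I}(d) = 2\,\frac{e^{\alpha d/2}-1}{e^{\alpha d/2}+1} - \frac{e^{\alpha d}-1}{e^{\alpha d}+1}$ is increasing in $d$, so that the $1/\Omega_{x}(\xiboldPlus_{a1})$ term is also minimized by splitting the largest gap. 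You replace both of these by one combinatorial claim --- bisecting a longest interval yields a partition vector majorized by every other single-bisection vector --- and then invoke Theorem~\ref{Paper_II_Theorem_Any_Design} with the untouched axis compared to itself, which is legitimate since `$\prec$' is reflexive. This buys uniformity (simple and ordinary kriging are handled at once, the Schur-convexity of $1/\Omega_{x}$ being already packaged inside Theorem~\ref{Paper_II_Theorem_Any_Design}) and eliminates the derivative computation; the cost is the combinatorial lemma, which you rightly flag as the main obstacle. That lemma is true, and the clean proof is not the sorted-partial-sum bookkeeping you sketch but a reduction: majorization is stable under appending a common block of coordinates (Proposition A.7 of \cite{MarshalOlkinBook}, the same device the paper itself uses in Appendix~\ref{Paper_II_AppendixI}), so it suffices to verify $(d_{k^{\ast}}/2,\; d_{k^{\ast}}/2,\; d_{k}) \prec (d_{k^{\ast}},\; d_{k}/2,\; d_{k}/2)$ whenever $d_{k} \leq d_{k^{\ast}}$, which splits into the two cases $d_{k} \leq d_{k^{\ast}}/2$ and $d_{k} > d_{k^{\ast}}/2$ and is immediate in each; ties at $\norm{\VectD}_{\infty}$ are harmless because the competing sorted vectors then coincide.

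One small correction: your parenthetical T-transform justification is stated backwards. To conclude $\VectDPlus^{(k^{\ast})} \prec \VectDPlus^{(k)}$ you need the \emph{competitor} to be reachable from the bisect-the-largest vector by transfers moving mass from smaller-valued coordinates to larger-valued ones (increasing unevenness); a transfer ``from a larger to a smaller coordinate,'' as you wrote, is the Robin Hood move and would prove the reverse relation. Since your primary argument is the partial-sum one and the underlying claim is correct, this is a slip of phrasing rather than a gap.
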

\begin{algorithm}[h] 
	\caption{Retrospective design: Adding one point at a time}
	\label{PaperII_Algorithm1}
	{The initial design is given by $\xibold$.\\}
	{Set $\xiboldPlus_{a1} = \xibold$ (where $\xiboldPlus_{a1} \equiv (\VectDPlus_{a1}, \VectDeltaPlus_{a1})$). \\}
	{Set $k=1$ and $l=1$. \\}
	{Add $n_{1}$ points to $x$-covariate. }
	\begin{algorithmic}[1]
		\WHILE{$n_{1}> 0$}
			\STATE{Construct design $\xiboldPlus_{a1}$ by adding a new point $x_{k}^{\prime}$ to the design $\xibold$. The new  point $x_{k}^{\prime}$ is chosen to be the midpoint of $[x_{i_{0}}, x_{i_{0} + 1} ]$, such that $\norm{\VectD}_\infty = d_{i_{0}}$, where $d_{i_{0}} =  x_{i_{0} + 1} - x_{i_{0}} $ for some $i_{0} = 1, \ldots, length(\VectD)$.  $\dagger$\\
			(That is, $\xiboldPlus_{a1}$ is obtained by choosing the new point $x_{1}^{\prime}$ to be the midpoint of the biggest interval in the $x$-axis corresponding to $\xibold$).
		}
		\STATE{$n_1 \gets n_1 - 1$ and $k \gets k+ 1$  }
		\STATE{$\xibold \gets \xiboldPlus_{a1}$}
		
		\ENDWHILE 
	\end{algorithmic}
	
	Add $m_{1}$ points to $y$-covariate.
	\begin{algorithmic}[1]
		\WHILE{$m_{1}> 0$}
		\STATE{Construct design $\xiboldPlus_{a1}$ by adding a new point $y_{l}^{\prime}$ to the design $\xibold$. The new  point $y_{l}^{\prime}$ is chosen to be the midpoint of $[y_{j_{0}}, y_{j_{0} + 1} ]$, such that $\norm{\VectDelta}_\infty= \delta_{j_{0}}$, where $\delta_{j_{0}} =  y_{j_{0} + 1} - y_{j_{0}} $ for some $j_{0} = 1, \ldots, length(\VectDelta)$.  $\dagger$\\
			(That is, $\xiboldPlus_{a1}$ is obtained by choosing the new point $y_{1}^{\prime}$ to be the midpoint of the biggest interval in the $y$-axis corresponding to $\xibold$).
		}
		\STATE{$m_1 \gets m_1 - 1$ and $l \gets l+ 1$ }
		\STATE{$\xibold \gets \xiboldPlus_{a1}$}
		
		\ENDWHILE
	\end{algorithmic}

\end{algorithm}
	
\begin{proof} 
Since the terms in {the expressions for} $SMSPE_{sk}$ and $SMSPE_{ok}$ are separable for {the $x$- and $y$-covariates}, the order of addition of points (addition to $x$- followed by addition to $y$-covariate and {\it vice versa}) is not important. Step 2 of Algorithm~\ref{PaperII_Algorithm1} gives the best possible design at that stage; Appendix~\ref{AppendixG2} gives the proof. 
\end{proof}

	\subsubsection{Simultaneous addition of points/ Adding points altogether  } \label{PaperII_Section_Algorithm2}
		In this section, the problem of inserting $n_{1} + m_{1}$ new points simultaneously is considered. As noted earlier, this actually adds a larger number of points to the grid. Algorithms based on theory and computations are used to find the best possible retrospective design. In this case there are $(n_{1} + m_{1})$ new continuous variables over $(0,1)$. Since the choice set containing the best possible design is infinite, ways to narrow down the choices for the best possible design are needed. In fact, this step of shrinking the choice set ensures the convergence of Algorithm~\ref{PaperII_Algorithm2}.
		
\newpage		
	
\noindent	$^\dagger$ NOTE 1: $ d_{i_{0}}$ or $ \delta_{j_{0}}$ may not be unique and in case of tie choose one of the equal largest intervals arbitrarily.\\
\noindent NOTE 2:  The notation `${a1}$' in $\xiboldPlus_{a1}$ signifies that the design is obtained by Algorithm~\ref{PaperII_Algorithm1}.

To enumerate the ways to construct new designs by inserting $n_{1}$ new points to the $x$-covariate, note that each non-negative integer solution of the equation $\displaystyle{ n_1^{(1)} + n_1^{(2)} + \ldots + n_1^{(n-1)} = n_1}$ gives a way to distribute these $n_1$ points between existing design points over {the} $x$-covariate. 
			
		Denote the solution set of this equation by $\mathcal{T}_x$, then $\displaystyle{|\mathcal{T}_x| = \Comb{n_1 + n-2 }{n-2}}$. {Consider} the $k^{th}$ solution of the equation, $\displaystyle{(n_{1k}^{(1)}, n_{1k}^{(2)}, \ldots n_{1k}^{(n-1)}) }${.} %that is $\sum_{i=1}^{n-1} n_{1_k}^{(i)} = n_{1}$.  
		This corresponds to new designs constructed by inserting $n_{1k}^{(i)}$
		points between $(x_{i}, x_{i+1})$ for $i =1,\ldots,n-1$. {By Lemma~\ref{Paper_II_Retrospective_Lemma_I},} inserting these $n_{1k}^{(i)}$ points {equally spaced}  between $(x_{i}, x_{i+1})$ {minimizes the $SMSPE$ over this interval.} So, each non-negative integer solution of the equation $\displaystyle{ n_1^{(1)} + n_1^{(2)} + \ldots + n_1^{(n-1)} = n_1}$ corresponds to exactly one design which gives minimum value of the $SMSPE$ for that distribution of $n_1$ points. These finial designs minimizing the \textit{SMSPE} are unique {up to} the partition size. Similarly, {if $m_{1}$ new points are to be added to} the $y$-covariate, the ways these  $m_{1}$ points could be distributed can be obtained from the non-negative integer solution set of the equation $ \displaystyle{m_1^{(1)} + m_1^{(2)} + \ldots + m_1^{(m-1)} = m_1}$, denoted by $\mathcal{T}_y${, with} $\displaystyle{|\mathcal{T}_y| = \Comb{m_1 + m-2 }{m-2}}$. Here $\displaystyle{ (m_{1l}^{(1)} , m_{1l}^{(2)} , \ldots , m_{1l}^{(m-1)})}$ is the $l^{th}$ element of $\mathcal{T}_y$. Using Lemma\ref{Paper_II_Retrospective_Lemma_I} we know that once we know the number of points to be placed between a partition, the points are spaced equally to get best results. Therefore, the search for the best possible retrospective design is narrowed within $\displaystyle{\Comb{n_1 + n-2 }{n-2} \times \Comb{m_1 + m-2 }{m-2}}$ designs. Algorithm~\ref{PaperIIAlgorithm_AdditionChoiceSet} provides the initial choice set, say $\SetUTwoPrime$, for the best possible retrospective design obtained by simulataneous addition of points. 

\begin{algorithm}[]
	\begin{algorithmic}[1]
		\caption{Initial choice set construction - Simultaneous addition of points }
		\label{PaperIIAlgorithm_AdditionChoiceSet}
		\STATE	Set, $ \SetUTwoPrime = \emptyset$. \\
		%\begin{algorithmic}[1]
		\FOR{ $k = 1, \ldots, 	\Comb{n_1 + n-2 }{n-2} $ and $l = 1, \ldots, \Comb{m_1 + m-2 }{m-2}$}
		\FOR{$i = 1, \ldots, n-1 $ and $j = 1, \ldots, m-1 $}
		\STATE{Construct the design $\xibold^{+}_{kl} \equiv ({\pmb d^{+}_{k}},{\pmb \delta^{+}_{l}})$ by equally spacing $n_{1k}^{(i)}$ number of points between $(x_{i}, x_{i+1})$ and equally spacing $m_{1l}^{(j)}$ between $(y_{j}, y_{j+1})$ %(This step given using Lemma~\ref{Paper_II_Retrospective_Lemma_I}).  
		}		
		\ENDFOR
		\STATE{$\SetUTwoPrime = \SetUTwoPrime \cup \{ \xibold^{+}_{kl} \}$}.
		\ENDFOR
	\end{algorithmic}
\end{algorithm}

	To choose the best retrospective design, the experimenter may compare values of $SMSPE$ over all the possible designs $\xibold^{+}_{kl} \in \SetUTwoPrime$ and determine which design minimizes the $SMSPE$. This method requires the values of covariance parameter $\CovParameterTheta$ and is sensitive to change in values of $\CovParameterTheta$. Algorithm~\ref{PaperII_Algorithm2} provides an approach that avoids the dependency of the best possible design on covariance parameters, for many cases. 
	
	\begin{lemma} \label{Paper_II_Lemma_algo2}
		For simple and ordinary kriging models as in Theorem~\ref{Paper_II_Theorem_ST_SK_OK_SMSPE}, Algorithm~\ref{PaperII_Algorithm2} gives the best possible retrospective grid design by simultaneously adding $n_{1}$ and $m_{1}$ points to {the} $x$ and $y$-covariate,  respectively.
	\end{lemma}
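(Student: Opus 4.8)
The plan is to prove optimality in three stages that mirror the construction of Algorithm~\ref{PaperII_Algorithm2}: first, show that the finite set $\SetUTwoPrime$ produced by Algorithm~\ref{PaperIIAlgorithm_AdditionChoiceSet} is guaranteed to contain the global minimizer of $SMSPE$ over the infinite space of all admissible additions of $n_1$ and $m_1$ points; second, show that the majorization-based pruning of Lemma~\ref{Paper_II_Minorised_Designs} retains that minimizer while shrinking the set; and third, show that the terminal comparison in Algorithm~\ref{PaperII_Algorithm2} correctly returns it.

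First I would exploit the separability of the $SMSPE$ criterion in the two covariates, the same fact used to justify the sequential algorithm, so that the placement of the $n_1$ new points on the $x$-axis and the $m_1$ new points on the $y$-axis can be analysed through the product structure of the $MSPE$ in Lemma~\ref{Paper_II_LemmaSimpleAndOrdinary}. Any admissible placement of the $n_1$ new $x$-points distributes them among the existing intervals $(x_i, x_{i+1})$, $i = 1, \dots, n-1$, and hence induces a nonnegative integer solution $(n_1^{(1)}, \dots, n_1^{(n-1)})$ of $\sum_{i} n_1^{(i)} = n_1$; symmetrically for the $y$-points. For each fixed pair of solutions, Lemma~\ref{Paper_II_Retrospective_Lemma_I} shows that spacing the allocated points equally inside each interval attains the minimum $SMSPE$ for that allocation. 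Since there are only finitely many allocations, namely $\Comb{n_1 + n-2}{n-2} \times \Comb{m_1 + m-2}{m-2}$, the infimum of $SMSPE$ over the continuous search space is attained by one of these finitely many equally-spaced configurations, each of which is enumerated into $\SetUTwoPrime$. Thus $\SetUTwoPrime$ is finite and is guaranteed to contain the best possible design.

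Next I would feed $\SetUTwoPrime$ into the pruning step of Lemma~\ref{Paper_II_Minorised_Designs} to obtain the subset $\SetU$. By Theorem~\ref{Paper_II_Theorem_Any_Design}, whenever a design $\xibold^{+}_{kl}$ is discarded there is another member of $\SetUTwoPrime$ whose partition-size vectors majorize those of $\xibold^{+}_{kl}$ (that is, are more evenly spread), so the discarded design has $SMSPE$ no smaller than that survivor; hence the global minimizer is never removed and $\SetU$ still contains it. If $\lvert \SetU \rvert = 1$, this unique design is the best possible retrospective design, and crucially its optimality holds for every admissible $\CovParameterTheta$, since the pruning invokes only the majorization comparison and not the parameter values. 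If $\lvert \SetU \rvert > 1$, the surviving designs are pairwise majorization-incomparable, and Algorithm~\ref{PaperII_Algorithm2} finishes by evaluating $SMSPE$ directly on the (typically very small) set $\SetU$ and returning the minimizer; correctness is immediate because $\SetU$ contains the optimum and we compare exhaustively within it.

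The main obstacle is the first stage: rigorously reducing the continuous optimization over infinitely many placements to the finite family $\SetUTwoPrime$. This rests entirely on Lemma~\ref{Paper_II_Retrospective_Lemma_I}, which must be applied interval-by-interval and then combined across intervals using separability; one must also dispose of the degenerate placements in which a new point coincides with an existing grid point or accumulates at an interval endpoint, either by appealing to the no-replication convention or by checking that such limiting configurations never strictly lower the $SMSPE$. Once the reduction to a finite choice set is secured, the remaining two stages follow routinely from the already-established Theorem~\ref{Paper_II_Theorem_Any_Design} and Lemma~\ref{Paper_II_Minorised_Designs}.
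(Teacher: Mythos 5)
Your proposal is correct and follows essentially the same route as the paper: the paper's (much terser) proof simply appeals to the preceding discussion in Section~\ref{PaperII_Section_Algorithm2} (allocation of points via integer solutions plus Lemma~\ref{Paper_II_Retrospective_Lemma_I} to reduce to the finite set $\SetUTwoPrime$), then to Lemma~\ref{Paper_II_Minorised_Designs} for the pruning step and the final $SMSPE$ comparison, which are exactly your three stages spelled out in more detail. The only nitpick is terminological: when a design is discarded, the surviving design's partition vectors are \emph{majorized by} (not ``majorize'') those of the discarded one under the paper's convention ${\pmb w} \prec {\pmb w^{\prime}}$, though your parenthetical ``more evenly spread'' makes clear you intend the correct substantive claim.
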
 
\begin{algorithm}[h!]
	\caption{Retrospective designs: Adding all points {simultaneously}}
	\label{PaperII_Algorithm2}
	\begin{algorithmic}[1]
		\STATE{Find the choice set for designs $\SetUThreePrime$ (using Algorithm~ \ref{PaperIIAlgorithm_AdditionChoiceSet}).}
		\STATE{Use Lemma~\ref{Paper_II_Minorised_Designs} to find the subset $\SetUThree (\subseteq \SetUThreePrime) $ which contains the best possible design.}
		\IF{$|\SetUThree| = 1$}
		\STATE   $ \xiboldPlus_{a3} \equiv ({\VectDPlus_{a3}},{\VectDeltaPlus_{a3}}) \in \SetUThree $ is the only choice for best possible design. 
		\ELSE
		\STATE 		Set $\xiboldPlus_{a3} =  \underset{\xiboldPlus_{kl} \in \SetUThree}{\mathrm{arg \; min}} \; SMSPE(\xiboldPlus_{kl}, \CovParameterTheta) $\\
		
		\ENDIF
	\end{algorithmic}
\end{algorithm}

\begin{proof}
It is already discussed that $\SetUThreePrime$ contains the best possible design. It is clear, if Step 2 of Algorithm~\ref{PaperII_Algorithm2} determines a unique design, that is, if $|\SetUThree| = 1$, then, $\xiboldPlus_{a3} \in \SetUThree$ is the best possible retrospective design for any value of covariance parameter $\CovParameterTheta$. 		
\end{proof}

In upcoming sections, we will see that for many cases $|\SetUThree| = 1$ and in those cases Algorithm~\ref{PaperII_Algorithm2} is clearly advantageous over comparing $SMSPE$ values for all possible designs. In such cases, it is not necessary to calculate $SMSPE$ for each design, and thus the solution does not depend on the values of the covariance parameters.   
	\begin{theorem} \label{Paper-II_Corollary_I} \label{Paper_II_Theorem_Loc1}
	Adding points to a grid {simultaneously using} Algorithm~\ref{PaperII_Algorithm2} will never give a worse result than adding the points sequentially by Algorithm~\ref{PaperII_Algorithm1}.
	\end{theorem}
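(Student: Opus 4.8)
The plan is to show that the design $\xiboldPlus_{a3}$ returned by Algorithm~\ref{PaperII_Algorithm2} actually attains the minimum value of $SMSPE$ over the \emph{entire} initial choice set $\SetUThreePrime$, and then to exhibit a single member of $\SetUThreePrime$ whose $SMSPE$ already bounds that of the sequentially-constructed design $\xiboldPlus_{a1}$ from Algorithm~\ref{PaperII_Algorithm1}. Chaining these two facts gives $SMSPE(\xiboldPlus_{a3}) \le SMSPE(\xiboldPlus_{a1})$, which is exactly the assertion. The argument will be stated once, since every ingredient (Lemma~\ref{Paper_II_Retrospective_Lemma_I} and Lemma~\ref{Paper_II_Minorised_Designs}) applies verbatim to both the simple and the ordinary kriging models.

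First I would argue that $SMSPE(\xiboldPlus_{a3}) = \min_{\xibold \in \SetUThreePrime} SMSPE(\xibold)$. By Lemma~\ref{Paper_II_Minorised_Designs}, the reduced set $\SetUThree$ still contains a best design of $\SetUThreePrime$. Hence in the branch $|\SetUThree| = 1$ the unique surviving design must itself be that best design, while in the branch $|\SetUThree| > 1$ the design explicitly chosen to minimize $SMSPE$ over $\SetUThree$ coincides with the minimizer over all of $\SetUThreePrime$ (because a global minimizer is guaranteed to lie in $\SetUThree$). In either branch $\xiboldPlus_{a3}$ minimizes $SMSPE$ over $\SetUThreePrime$.

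Next I would locate $\xiboldPlus_{a1}$ relative to $\SetUThreePrime$. Every point that Algorithm~\ref{PaperII_Algorithm1} inserts falls in the interior of exactly one original subinterval $(x_i, x_{i+1})$ (respectively $(y_j, y_{j+1})$), so the completed construction distributes the $n_1$ and $m_1$ added points according to a definite pair of solutions in $\mathcal{T}_x$ and $\mathcal{T}_y$. Let $\xibold^{\mathrm{eq}} \in \SetUThreePrime$ be the design obtained by placing those same numbers of points \emph{equally spaced} within each subinterval. Because the repeated bisection of the largest current gap generally does \emph{not} equalize the spacing inside a given original subinterval (for instance, two midpoint insertions into one gap produce lengths in ratio $1:1:2$ rather than $1:1:1$), $\xibold^{\mathrm{eq}}$ typically differs from $\xiboldPlus_{a1}$. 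Applying Lemma~\ref{Paper_II_Retrospective_Lemma_I} interval by interval — legitimate because the $x$- and $y$-contributions to $SMSPE$ separate, exactly as exploited in the proof attached to Algorithm~\ref{PaperII_Algorithm1} — the equally spaced configuration minimizes $SMSPE$ among all designs carrying that fixed distribution of added points, so $SMSPE(\xibold^{\mathrm{eq}}) \le SMSPE(\xiboldPlus_{a1})$.

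Combining the two steps yields
\[
SMSPE(\xiboldPlus_{a3}) = \min_{\xibold \in \SetUThreePrime} SMSPE(\xibold) \le SMSPE(\xibold^{\mathrm{eq}}) \le SMSPE(\xiboldPlus_{a1}),
\]
which is the claim. I expect the only genuine obstacle to be the middle step: one must be careful that $\xiboldPlus_{a1}$ is in general \emph{not} itself a member of $\SetUThreePrime$, since its within-interval spacing need not be uniform, so the comparison cannot be made by direct membership and must be routed through the equally spaced representative $\xibold^{\mathrm{eq}}$. Verifying that the greedy bisection induces a well-defined distribution in $\mathcal{T}_x \times \mathcal{T}_y$, and that separability genuinely permits the within-interval optimality of Lemma~\ref{Paper_II_Retrospective_Lemma_I} to be invoked one interval at a time, are the places that need the most care.
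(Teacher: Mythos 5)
Your proof is correct and has the same skeleton as the paper's own: both arguments route the comparison through the member of the choice set that distributes points across the original intervals exactly as Algorithm~\ref{PaperII_Algorithm1} does but equally spaced within each interval (your $\xibold^{\mathrm{eq}}$, the paper's $\xibold^{+}_{k_{o}l_{o}}$), and both finish by noting that Algorithm~\ref{PaperII_Algorithm2}'s output minimizes $SMSPE$ over the choice set. The only genuine difference is how the middle inequality $SMSPE(\xibold^{\mathrm{eq}}) \le SMSPE(\xiboldPlus_{a1})$ is justified. You invoke Lemma~\ref{Paper_II_Retrospective_Lemma_I} interval by interval; the paper instead observes that within each original interval the equispaced partition is majorized by the partition Algorithm~\ref{PaperII_Algorithm1} produces, concatenates these componentwise relations via Proposition A.7 of Marshall and Olkin to obtain $\pmb d^{+}_{k_{o}} \prec \VectDPlus_{a1}$ and $\pmb \delta^{+}_{l_{o}} \prec \VectDeltaPlus_{a1}$, and then applies Theorem~\ref{Paper_II_Theorem_Any_Design} to convert majorization into the $SMSPE$ inequality. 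Your route is legitimate in the paper's own terms: it is precisely the multi-interval reading of Lemma~\ref{Paper_II_Retrospective_Lemma_I} that Section~\ref{PaperII_Section_Algorithm2} already uses when claiming that each element of $\mathcal{T}_x \times \mathcal{T}_y$ determines a unique $SMSPE$-minimizing design. But note that the lemma as stated concerns points added to a \emph{single} interval on each axis, so your step tacitly relies on the additivity of $\Omega_{x}$ and $\Omega_{y}$ across intervals and on the fact that within-interval equal spacing minimizes $\norm{\VectDPlus}_{\infty}$ and $\norm{\VectDeltaPlus}_{\infty}$; the majorization route packages exactly those monotonicity and Schur-convexity facts inside Theorem~\ref{Paper_II_Theorem_Any_Design} and therefore needs no extension of the lemma. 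What each buys: yours is more economical given what Section~\ref{PaperII_Section_Algorithm2} already asserts, while the paper's is more self-contained at the level of formally stated results. Your closing observation --- that $\xiboldPlus_{a1}$ need not belong to $\SetUThreePrime$, so the comparison cannot be made by membership and must pass through the equally spaced representative --- is exactly the crux around which the paper's proof is organized.
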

		\begin{proof}
		See Appendix~\ref{Paper_II_AppendixI}. 
		\end{proof}
		
		\noindent 	NOTE: There are many cases (evident from the Illustration) where $\xiboldPlus_{a2}$ is better than $\xiboldPlus_{a1}$.

\begin{remark} \label{Paper-II_Corollary_II}
	As a consequence of Theorems~\ref{Paper_II_Theorem_ST_SK_OK_SMSPE_PseudoBAyesian} and \ref{Paper_II_Theorem_Any_Design}, if we consider covariance {parameters to} be independent random variables with known probability density functions, we may apply the pseudo-Bayesian approach and minimize $\mathcal{R}_{sk}(\xibold) $  and  $\mathcal{R}_{ok}(\xibold) $ as in Section~\ref{Paper_II_Prospective_Design}{, using} independent priors for the covariance parameters as in Theorem~\ref{Paper_II_Theorem_ST_SK_OK_SMSPE_PseudoBAyesian}. In {the} case of addition of points, the best possible design obtained for one step at a time addition is as given by Algorithm~\ref{PaperII_Algorithm1}. If we want to add the points {simultaneously}, the new best possible designs can be {determined as} in Algorithm~\ref{PaperII_Algorithm2}, {except that in Step 6 we minimize $\mathcal{R}_{sk}(\xibold) $ or $\mathcal{R}_{ok}(\xibold) $ in place of $SMSPE_{sk}$ or $SMSPE_{ok}$.} 	
\end{remark}

	\subsection{Deleting of sampling points from an existing design} \label{Paper_II_Deletion}

The second problem that is considered is deletion of $n_{1}^{\prime} $ and $m_{1}^{\prime} $ points from $\mathcal{S}_{1}$ and $\mathcal{S}_{2}$, respectively. In this case, {choosing $n_{1}^{\prime} + m_{1}^{\prime}$ points from $\xibold$ results in only a finite number of possible designs} (the end points are always fixed at 0 and 1, hence there are $\displaystyle{\Comb{n-2}{n_{1}^{\prime}} \times \Comb{m-2}{m_{1}^{\prime}}}$ choices). In the following sections, the notation {$\xiboldMinus \equiv (\VectDMinus, \VectDeltaMinus)$ is used to denote a new design obtained by deleting points from the existing design.} Denote the choice set of such designs by $\displaystyle{\SetUFourPrime = \Bigg\{ \xiboldMinus_{kl} \equiv ( \VectDMinus_{k},\VectDeltaMinus_{l}): k= 1,\ldots, \Comb{n-2}{n_1}  \text{ and }  l = 1,\ldots, \Comb{m-2}{m_1} } \Bigg\}$. The most intuitive method for deleting points (simultaneously) from the existing design is by comparing the $SMSPE$ for each design. However, in that case the choice of design depends upon the covariance parameter values. In this section, an algorithm similar to Algorithm~\ref{PaperII_Algorithm2} is proposed which aims at eliminating the dependency on the covariance parameters and reducing the cardinality of the choice set $\SetUFourPrime$. In this case as well, the proposed Algorithm~\ref{PaperII_Algorithm3} is deterministic and the convergence is ensured as $|\SetUFourPrime|$ is finite.

\begin{lemma} 
	For simple and ordinary kriging models as in Theorem~\ref{Paper_II_Theorem_ST_SK_OK_SMSPE}, Algorithm~\ref{PaperII_Algorithm3} (on page \pageref{PaperII_Algorithm3}) gives the best possible retrospective grid design by simultaneously deleting $n_{1}^\prime$ and $m_{1}^\prime$ points from {the} $x$-covariate and $y$-covariate, respectively.	
\end{lemma}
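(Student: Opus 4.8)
The plan is to mirror the argument used for Algorithm~\ref{PaperII_Algorithm2} in Lemma~\ref{Paper_II_Lemma_algo2}, exploiting the fact that here the search space is \emph{already} finite. Since the endpoints of both covariate axes are held fixed at $0$ and $1$, deleting $n_{1}^{\prime}$ points from the $x$-covariate and $m_{1}^{\prime}$ points from the $y$-covariate amounts to selecting which interior points to retain; this yields exactly $\Comb{n-2}{n_{1}^{\prime}}\times\Comb{m-2}{m_{1}^{\prime}}$ candidate designs, namely the members of $\SetUFourPrime$. Unlike the addition problem, no equal-spacing reduction in the spirit of Lemma~\ref{Paper_II_Retrospective_Lemma_I} is required, because deletion leaves the surviving points at their original fixed locations. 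Thus $\SetUFourPrime$ is a finite set that, by construction, contains every design reachable by the prescribed deletion, and in particular it contains the best possible retrospective design.

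Next I would bring in the majorization machinery. By Theorem~\ref{Paper_II_Theorem_Any_Design}, for any two candidates $\xiboldMinus_{kl},\xiboldMinus_{k'l'}\in\SetUFourPrime$ satisfying $\VectDMinus_{k}\prec\VectDMinus_{k'}$ and $\VectDeltaMinus_{l}\prec\VectDeltaMinus_{l'}$, one has $SMSPE(\xiboldMinus_{kl})\le SMSPE(\xiboldMinus_{k'l'})$, so the majorized (more evenly spread) design is never worse. Applying Lemma~\ref{Paper_II_Minorised_Designs} with the role of $\SetUPrime$ played by $\SetUFourPrime$ therefore discards every design that is dominated in both coordinate partition vectors, and the resulting subset $\SetU\subseteq\SetUFourPrime$ is guaranteed to still contain the best possible design. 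This is exactly Step~2 of Algorithm~\ref{PaperII_Algorithm3}, and it is valid for \emph{any} value of $\CovParameterTheta$, since the filtering compares only the partition sizes and never the $SMSPE$ values themselves.

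Finally I would split into the two cases produced by the algorithm. If the reduced set $\SetU$ is a singleton, its unique element is the best possible design for every admissible $\CovParameterTheta$, and the conclusion is immediate and parameter-free. If $|\SetU|>1$, the surviving designs are pairwise incomparable under the partial order `$\prec$', so Theorem~\ref{Paper_II_Theorem_Any_Design} can no longer separate them; here one evaluates $SMSPE(\cdot,\CovParameterTheta)$ on each remaining candidate and returns the minimizer. The point that actually needs checking---and the only genuine obstacle---is that the filtering step loses nothing: a design removed by Lemma~\ref{Paper_II_Minorised_Designs} is dominated in both partition vectors by a retained one and hence cannot be the strict optimum, which is precisely what Theorem~\ref{Paper_II_Theorem_Any_Design} provides. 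Determinism and convergence then follow at once from the finiteness of $\SetUFourPrime$, completing the argument.
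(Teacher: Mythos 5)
Your proposal is correct and follows essentially the same route as the paper: the deletion choice set $\SetUFourPrime$ is finite by construction, Step~2's filtering via Lemma~\ref{Paper_II_Minorised_Designs} (justified by Theorem~\ref{Paper_II_Theorem_Any_Design}) preserves the best design without requiring $\CovParameterTheta$, and the singleton versus $SMSPE$-comparison case split finishes the argument --- which is exactly the content of the paper's proof, given there by reference to Lemma~\ref{Paper_II_Lemma_algo2}. Your added observation that no analogue of Lemma~\ref{Paper_II_Retrospective_Lemma_I} is needed (since deletion leaves surviving points fixed) matches the paper's own discussion preceding the lemma.
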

\begin{algorithm}[h!]
	\caption{Retrospective designs (deletion case): Best possible retrospective design}
	\label{PaperII_Algorithm3}
	\begin{algorithmic}[1]
		\STATE{Find the choice set for designs $\SetUFourPrime$ by taking all possible designs obtained after deleting $n_{1}^\prime$ and $m_{1}^\prime$ points from $x$-covariate and $y$-covariate, respectively.}
		\STATE{Use Lemma~\ref{Paper_II_Minorised_Designs} to find the subset $\SetUFour (\subseteq \SetUFourPrime) $ which contains the best possible design.}
		\IF{$|\SetUFour| = 1$}
		\STATE   $ \xiboldMinus_{a4} \equiv ({\VectDMinus_{a4}},{\VectDeltaMinus_{a4}}) \in \SetUFour $ is the only choice for best possible design 
		\ELSE
		\STATE 		Set $\xiboldMinus_{a4} =  \underset{\xiboldMinus_{kl} \in \SetUFour}{\mathrm{arg \; min}} \; SMSPE(\xiboldMinus_{kl}, \CovParameterTheta) $
		\ENDIF
	\end{algorithmic}
\end{algorithm}
\begin{proof}
See proof {of} Lemma~\ref{Paper_II_Lemma_algo2}.
\end{proof}
	
	Next, an approach similar to Algorithm~\ref{PaperII_Algorithm3} is proposed to find the worst possible design. Algorithm \ref{PaperII_Algorithm4} is used to compare the best and the worst possible retrospective designs in Section~\ref{Paper_II_Illustration}. 
	
\begin{lemma} 
	For simple and ordinary kriging models as in Theorem~\ref{Paper_II_Theorem_ST_SK_OK_SMSPE}, {Algorithm}~\ref{PaperII_Algorithm4} gives the worst possible retrospective grid design by simultaneously deleting $n_{1}^\prime$ and $m_{1}^\prime$ points from {the} $x$-covariate and $y$-covariate, respectively.	
\end{lemma}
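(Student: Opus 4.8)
The plan is to reproduce, in the worst-case direction, the argument already established for the best-case simultaneous algorithms, replacing Lemma~\ref{Paper_II_Minorised_Designs} by Lemma~\ref{Paper_II_Majorised_Designs} and minimization by maximization of $SMSPE$. First I would note that, with the end points held fixed at $0$ and $1$, deleting $n_{1}^{\prime}$ interior points from the $x$-covariate and $m_{1}^{\prime}$ interior points from the $y$-covariate produces only finitely many admissible designs; assembling all of them gives the choice set $\SetUFourPrime$, which by construction is guaranteed to contain the worst possible retrospective design. This finiteness is what makes Algorithm~\ref{PaperII_Algorithm4} deterministic and ensures its termination.

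Next I would apply Lemma~\ref{Paper_II_Majorised_Designs} to $\SetUFourPrime$ to extract the subset $\SetUWorst$. The justification rests on the direction of Theorem~\ref{Paper_II_Theorem_Any_Design}: if for some candidate $\xiboldMinus_{i} \equiv (\VectDMinus_{i}, \VectDeltaMinus_{i}) \in \SetUFourPrime$ there exists another $\xiboldMinus \equiv (\VectDMinus, \VectDeltaMinus) \in \SetUFourPrime$ with $\VectDMinus_{i} \prec \VectDMinus$ and $\VectDeltaMinus_{i} \prec \VectDeltaMinus$, then $SMSPE(\xiboldMinus_{i}) \le SMSPE(\xiboldMinus)$, so $\xiboldMinus_{i}$ cannot achieve the maximal $SMSPE$ and may be safely discarded. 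Thus every design removed by the retention test of Lemma~\ref{Paper_II_Majorised_Designs} is dominated in the worst-case sense by a retained one, and the surviving (majorization-maximal) designs in $\SetUWorst$ still contain the global worst design. The conditional step is then handled exactly as in Lemma~\ref{Paper_II_Lemma_algo2}: if $|\SetUWorst| = 1$, the unique survivor is the worst design for every covariance parameter value $\CovParameterTheta$; otherwise the worst design is selected by an $\mathrm{arg\,max}$ of $SMSPE$ over the small set $\SetUWorst$.

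The main obstacle I anticipate is the partial-order subtlety already flagged after Theorem~\ref{Paper_II_Theorem_Any_Design}: since $\prec$ is only a partial order, the maximal designs surviving in $\SetUWorst$ need not be mutually comparable, so $|\SetUWorst|$ may exceed one and the final $\mathrm{arg\,max}$ reintroduces dependence on $\CovParameterTheta$. The one point that genuinely needs verifying is that no maximal element is ever discarded erroneously; this is precisely the contrapositive of the retention condition in Lemma~\ref{Paper_II_Majorised_Designs}, namely that a design is deleted only when a strictly less even competitor exists in $\SetUFourPrime$. Once this is confirmed, containment of the global worst design in $\SetUWorst$ is immediate and the correctness of Algorithm~\ref{PaperII_Algorithm4} follows along the same lines as Lemma~\ref{Paper_II_Lemma_algo2}.
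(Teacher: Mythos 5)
Your proposal is correct and takes essentially the same approach as the paper, whose own proof simply refers back to Lemma~\ref{Paper_II_Lemma_algo2}: finiteness of the deletion choice set $\SetUFourPrime$, elimination of dominated designs via Lemma~\ref{Paper_II_Majorised_Designs} (resting on the direction of Theorem~\ref{Paper_II_Theorem_Any_Design} you cite), and the singleton-versus-maximization dichotomy. Your write-up is in fact more explicit than the paper's one-line proof, and the caveat you flag (that $|\SetUFiveWorst|>1$ reintroduces dependence on $\CovParameterTheta$ through the final maximization step) matches the paper's own note following the algorithms.
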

\begin{algorithm}[H]
	\caption{Retrospective designs (deletion case): Worst possible retrospective design}
	\label{PaperII_Algorithm4}
	\begin{algorithmic}[1]
		\STATE{Find the choice set for designs $\SetUFourPrime$ by taking all possible designs obtained after deleting $n_{1}^\prime$ and $m_{1}^\prime$ points from $x$-covariate and $y$-covariate, respectively.}
		\STATE{Use Lemma~\ref{Paper_II_Majorised_Designs} to find the set $\SetUFiveWorst (\subseteq \SetUFourPrime) $ which contains the worst possible design.}
		\IF{$|\SetUFiveWorst| = 1$}
		\STATE   $ \xiboldMinus_{a5-worst} \equiv ({\VectDMinus_{a5-worst}},{\VectDeltaMinus_{a5-worst}}) \in \SetUFiveWorst $ is the only choice for worst possible design 
		\ELSE
		\STATE 	Set $\xiboldMinus_{a5-worst} =  \underset{\xiboldMinus_{kl} \in \SetUFiveWorst}{\mathrm{arg \; max}} \; SMSPE(\xiboldMinus_{kl}, \CovParameterTheta) $
		\ENDIF
	\end{algorithmic}
\end{algorithm}
\begin{proof}
See proof {of} Lemma \ref{Paper_II_Lemma_algo2}.
\end{proof}
NOTE: In Step 6 of Algorithms~\ref{PaperII_Algorithm2}, \ref{PaperII_Algorithm3}, and \ref{PaperII_Algorithm4} values of the covariance parameters are required. 

\begin{remark} \label{Paper-II_Corollary_III}
	As a consequence of Theorem~\ref{Paper_II_Theorem_Any_Design} and Algorithms~\ref{PaperII_Algorithm3} and \ref{PaperII_Algorithm4}, if we consider covariance {parameters to} be independent random variables with known probability density functions, we {may apply the pseudo-Bayesian} approach and work with $\mathcal{R}_{sk}(\xibold) $  and  $\mathcal{R}_{ok}(\xibold) $ by assuming independent priors for the covariance parameters. In case of simultaneous deletion of points from an existing grid, the best and the worst possible design could be determined {as in} Algorithms~\ref{PaperII_Algorithm3} and \ref{PaperII_Algorithm4}, respectively{, except that in Step 6 we minimize $\mathcal{R}_{sk}(\xibold) $  or  $\mathcal{R}_{ok}(\xibold) $ in place of $SMSPE_{sk}$ or $SMSPE_{ok}$.} 	
\end{remark}

	\section{Illustration} \label{Paper_II_Illustration}

	In this section, the proposed Algorithms~\ref{PaperII_Algorithm1}, \ref{PaperII_Algorithm2}, \ref{PaperII_Algorithm3}, and \ref{PaperII_Algorithm4} for an ordinary kriging model under a frequentist paradigm are illustrated. In Illustration~\ref{Paper_II_Ilustration_1}, the retrospective designs obtained by adding points to the axis of the grid using Algorithms~\ref{PaperII_Algorithm1} and \ref{PaperII_Algorithm2} are found. In Illustrations~\ref{Paper_II_Ilustration_2} and \ref{Paper_II_Ilustration_3}, retrospective designs are found by deleting points from an existing design using Algorithms~\ref{PaperII_Algorithm3} and \ref{PaperII_Algorithm4}. {Illustration}~\ref{Paper_II_Ilustration_3} in particular takes an $8 \times 8$ regular grid used for monitoring methane flux as in \cite{baran2015optimal} and shows how a smaller retrospective design could be obtained by deleting points. An efficiency criteria is defined below \citep[similar to][]{Dette_et_al_2008} for comparing the designs: 
	\begin{align}
	eff(\xibold_{2} : \xibold_{1} ) &= 	\dfrac{SMSPE(\xibold_{1})}{SMSPE(\xibold_{2})}, 
	\end{align}
	where $eff(\xibold_{2} : \xibold_{1} ) $ is the efficiency of the design $\xibold_{2}$ with respect to $\xibold_{1}$. The efficiencies of the new retrospective designs are calculated: i) with respect to the initial design and ii) with respect to the equispaced design {of the same size as} the retrospective design. 
	
	As before, denote the initial design by $\xibold$. Retrospective designs obtained by addition and deletion are denoted by $\xiboldPlus$ and $\xiboldMinus$, respectively. Let, $\xiboldPlus_{eq}$ and $\xiboldMinus_{eq}$ be the equispaced grid designs {of the same size} as that of $\xiboldPlus$ and $\xiboldMinus$, respectively. 
	%Following are the efficiencies broadly computed:
	%\begin{align}
	%	eff(\xiboldPlus: \xibold ) = 	\dfrac{SMSPE(\xibold)}{ SMSPE(\xiboldPlus)} &\text{ and     } 	eff(\xiboldPlus: \xiboldPlus_{eq})  = 	\dfrac{SMSPE(\xiboldPlus_{eq})}{SMSPE(\xiboldPlus)} \label{Paper_II_efficiency_I}\\
	%	eff(\xiboldMinus: \xibold ) = 	\dfrac{SMSPE(\xibold)}{ SMSPE(\xiboldMinus)} &\text{ and     } 	eff(\xiboldMinus: \xiboldMinus_{eq} )  = 	\dfrac{SMSPE(\xiboldMinus_{eq})}{SMSPE(\xiboldMinus)} \label{Paper_II_efficiency_II}
	%\end{align}
	Clearly, {the} higher the value of efficiency of a design, {the} better is the chosen retrospective design. Note, that $eff(\xiboldPlus: \xibold) \geq 1$, $eff(\xiboldMinus: \xibold ) \leq 1$, $eff(\xiboldPlus: \xiboldPlus_{eq} ) \leq 1 $, and $eff(\xiboldMinus: \xiboldMinus_{eq} ) \leq 1$.

	\begin{illustration} \label{Paper_II_Ilustration_1}
		Consider a random process $Z(\cdot)$ with constant but unknown mean and a separable exponential covariance structure. The samples are initially taken over a $4 \times 5$ grid $\xibold = ((.80, .10, .10),  (.20, .10, .10, .60))$. The aim is to determine the best possible grid design ({minimizing} $SMSPE$), by adding 3 points to the \textit{x}-covariate and 2 points to the \textit{y}-covariate.   
	\end{illustration}
	Here, an ordinary kriging setup is considered. The initial design is given by $\xibold $, which means the $x$ and $y$-covariates of the design are $\mathcal{S}_{1} = \{0    ,0.8, 0.9,  1\}$ and $ \mathcal{S}_{2} = \{ 0  ,  0.2, 0.3, 0.4, 1\}$, respectively. It is given that $n = 4, m = 5 $ and $n_{1} =3 , m_{1} = 2$, which means the final design is a $7\times 7$ grid. As this is a case of addition of points to $\xibold$, initially there are {infinitely many} choices for constructing the retrospective design. Algorithms~\ref{PaperII_Algorithm1} and \ref{PaperII_Algorithm2} are used to determine the best possible retrospective designs and denoted by $\xiboldPlus_{a1}$ and $\xiboldPlus_{a3}$, respectively. 
	
	It was discussed earlier, that Algorithm~\ref{PaperII_Algorithm1} does not depend on values of the covariance parameter $\CovParameterTheta$: it needs only the initial design to determine the final design. A unique retrospective design is obtained (unique up to the partition sizes of the design). 
	
	However, Algorithm~\ref{PaperII_Algorithm2} might require the values of the covariance parameter $\CovParameterTheta$ in order to compute and compare $SMSPE$ if $|\SetUThree| \neq 1$. For this example, Step 1 of Algorithm~\ref{PaperII_Algorithm2} gives that the {best} possible retrospective design belongs to the set $\SetUThreePrime$ such that $|\SetUThreePrime| = 100$. In Step 2 the size of the choice set is further reduced and $|\SetUThree| = 1$. So, in this case, the retrospective design does not depend on the {covariance} parameters ({and} there is no need to compute and compare the $SMSPE$ for designs in $\SetUThree$). 
	
	The equispaced design of size $7 \times 7$ is denoted $\xiboldPlus_{{eq}_{ 7\times 7}}$ and is used for calculating the efficiency values. The retrospective designs $\xiboldPlus_{a1}$ and $\xiboldPlus_{a3}$ are compared with $\xibold$ and $\xiboldPlus_{{eq}_{ 7\times 7}}$ and are shown in Figure~\ref{Fig_Ret_Illus1_All}. 
		\begin{figure}[H]
			\centering
			\begin{subfigure}{.5\textwidth}
			\centering
			\includegraphics[width=.9\linewidth]{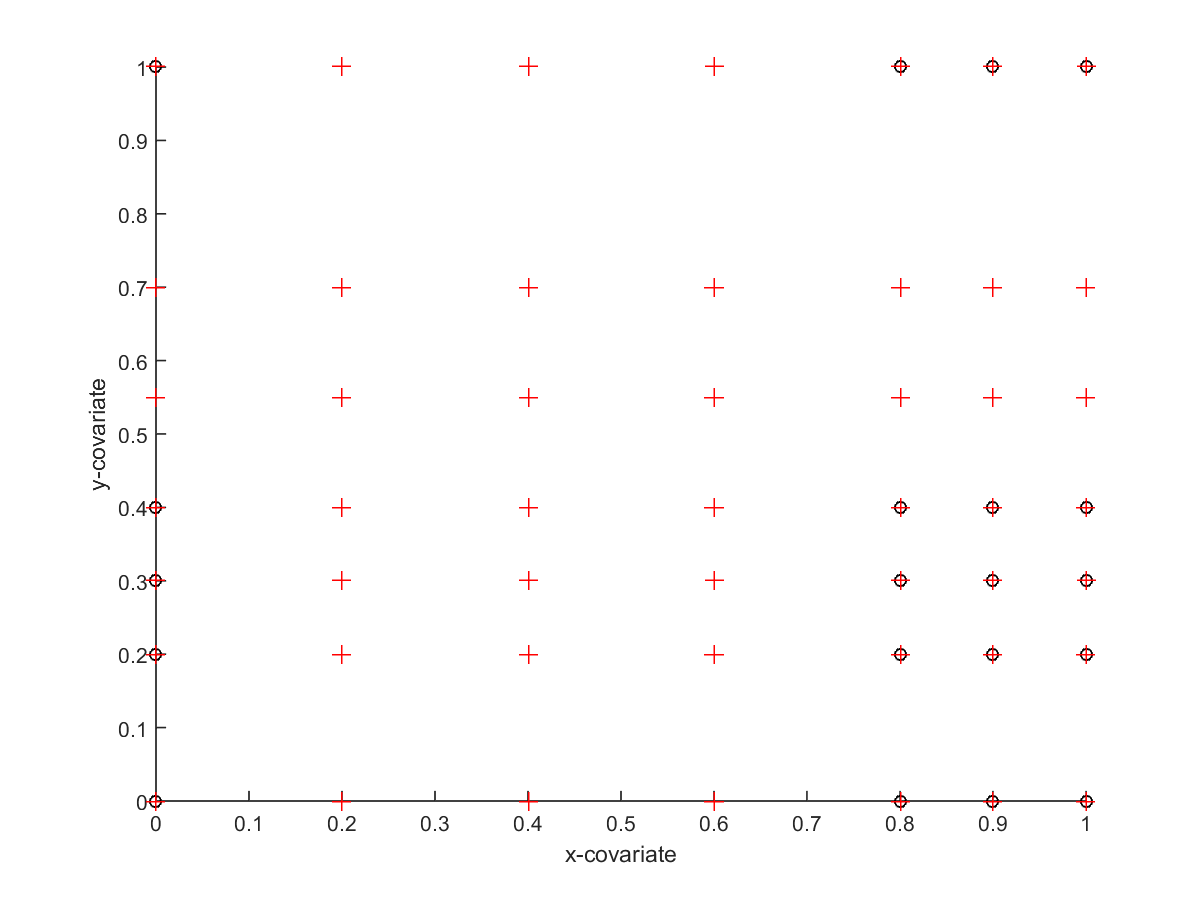}
		\caption{\scriptsize $\xiboldPlus_{a1}$ Vs $\xibold$    }
		\label{Fig_Ret_Illus1_1}
		\end{subfigure}%
		\begin{subfigure}{.5\textwidth}
		\centering
		\includegraphics[width=.9\linewidth]{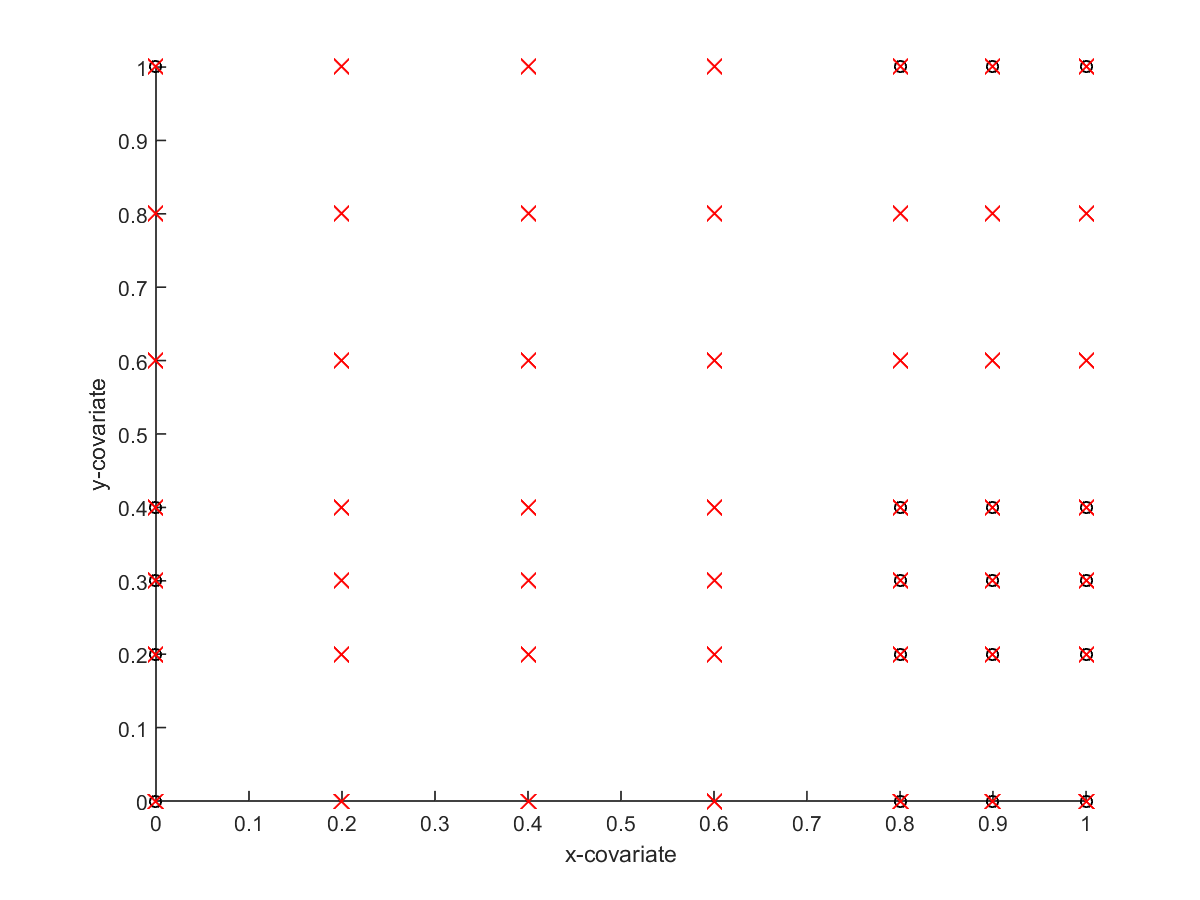}
		\caption{\scriptsize $\xiboldPlus_{a3}$  Vs $\xibold$   }
		\label{Fig_Ret_Illus1_2}
		\end{subfigure}
		%	\caption{Both images}
		%	\label{fig:test1} 
		\vspace{-1em}
		\centering
		\begin{subfigure}{.5\textwidth}
		\centering
		\includegraphics[width=.9\linewidth]{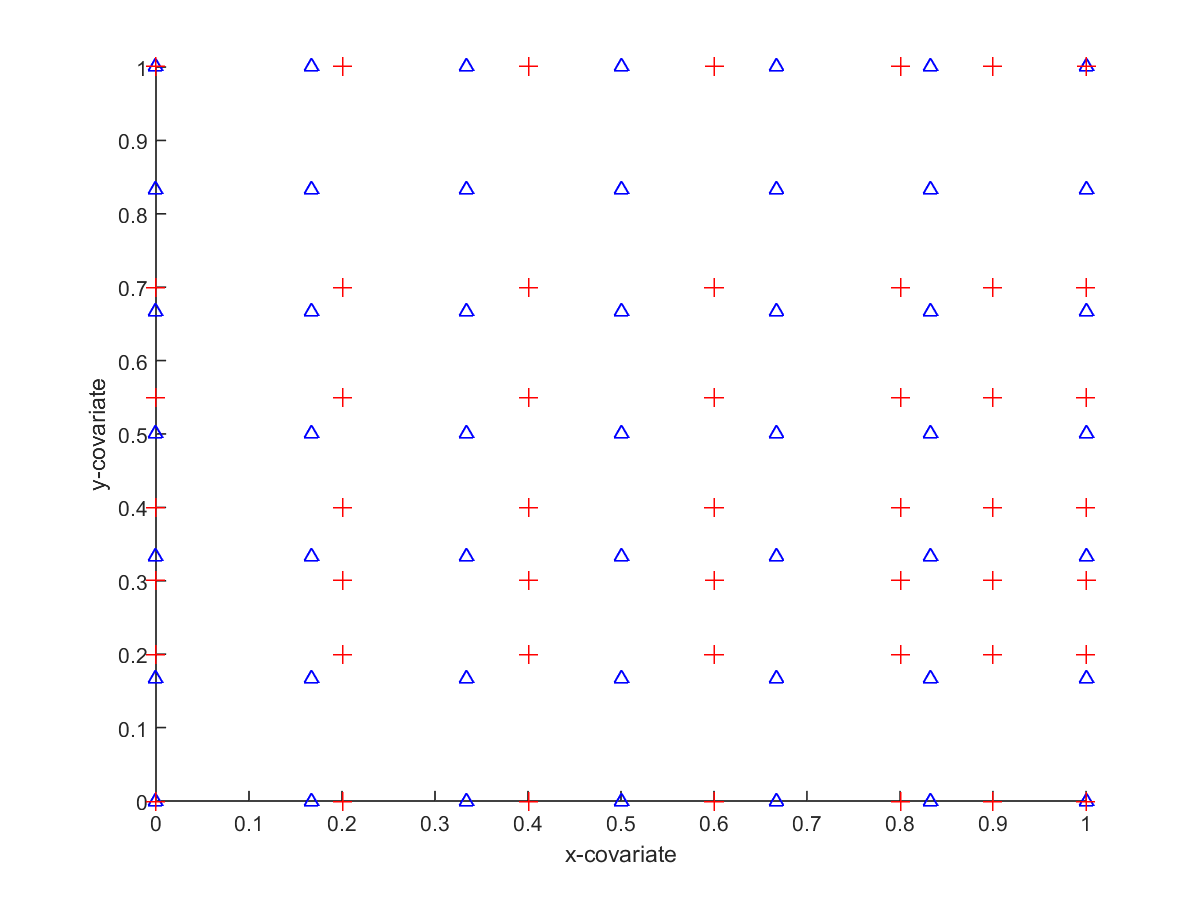}
		\caption{ \scriptsize $\xiboldPlus_{a1}$ Vs $\xiboldPlus_{{eq}_{ 7\times 7}}$  }
		\label{Fig_Ret_Illus1_3}
		\end{subfigure}%
		\begin{subfigure}{.5\textwidth}
			\centering
			\includegraphics[width=.9\linewidth]{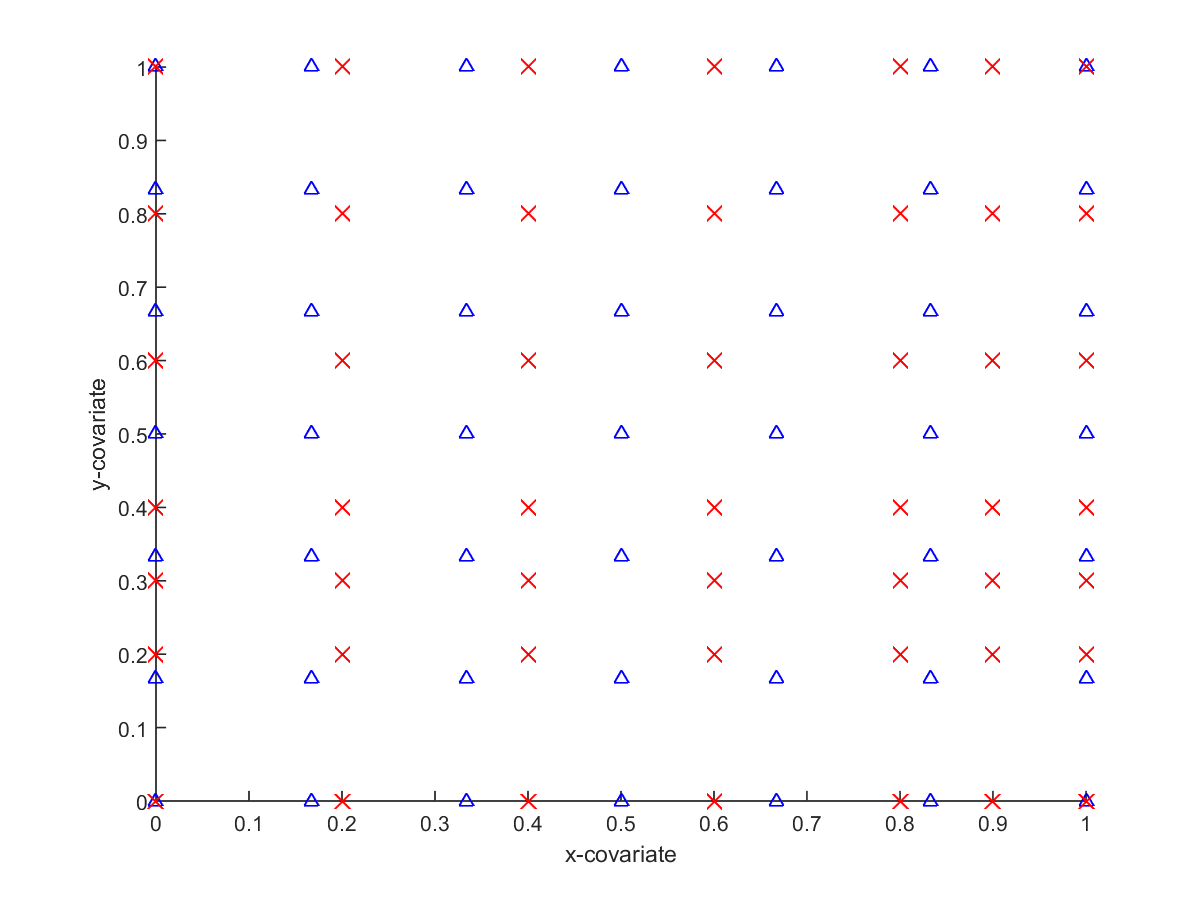}
		\caption{ \scriptsize $\xiboldPlus_{a3}$ Vs  $\xiboldPlus_{{eq}_{ 7\times 7}}$    }
		\label{Fig_Ret_Illus1_4}
	\end{subfigure}
	\vspace{1em}
	
	%\centering
	
	%\begin{subfigure}{.5\textwidth}
	%	\centering
	%	\includegraphics[width=1\linewidth]{Illst_1_Meth_Retr_Equispace_Desgn.png}
	%	\caption{An equispaced grid of size 7 $\times$ 7}
	%	\label{fig:sub11111}
	%\end{subfigure}%
	%\begin{subfigure}{.5\textwidth}
	%	\centering
	%	\includegraphics[width=1\linewidth]{Illst_1_Meth_2_Worse_Choice_Desgn.png}
	%	\caption{Method 2: Worst Retrospective Design}
	%	\label{fig:sub12111}
	%\end{subfigure}
	
	%  	\captionsetup{justification=centering}
	\centering	\caption{ Comparison of grid designs. 
	`$\color{red} \times$' - $\xiboldPlus_{a3}$: Best possible $7\times 7$ retrospective design obtained by Algorithm~3. 				 `$\color{red} +$' - $\xiboldPlus_{a1}$: Best possible $7\times 7$ retrospective design obtained by Algorithm~1.  
	`$\circ$' - $\xibold$: Original $4 \times 5$ design grid.  `$\color{blue}\triangle$' - $\xiboldPlus_{{eq}_{ 7\times 7}}$: Equispaced grid of size $7\times 7$.		
	}
	\label{Fig_Ret_Illus1_All} 
	\end{figure}
	In this example, Algorithms~\ref{PaperII_Algorithm1} and \ref{PaperII_Algorithm2} do not require values of the covariance parameter, since there is no need to compute and compare the vales of $SMSPE$ for both the algorithms. 
	
	However, for the purpose of illustration, three sets of values of covariance parameter are used to compare efficiencies of the newly obtained retrospective designs (with respect to the existing design and the prospective optimal design), as to get the values of efficiencies the \textit{SMSPE} needs to be calculated, which depends on the parameter value. The results are shown in Table~\ref{Efficence_Ills_1}.  
	
	\begin{table}[H]
	\begin{center}
	{%\scriptsize
	\begin{tabular}{|l|cc|  c c|}
	%  				\hline
	%  				&  & $eff(\xibold, \cdot)$ &   $eff(\xiboldPlus_{eq}, \cdot )$ & \\
		\hline
		$(\alpha, \beta)$
		& $eff(\xiboldPlus_{a1}: \xibold )$ 
		& $eff(\xiboldPlus_{a3}: \xibold )$ 
	& $eff(\xiboldPlus_{a1}: \xiboldPlus_{{eq}_{ 7\times 7}} )$
	& $eff(\xiboldPlus_{a3}: \xiboldPlus_{{eq}_{ 7\times 7}} )$\\
	\hline
	(.5, .7) & 2.4401 & 3.1324  & 0.6526 & 0.8378 \\
	(1, 5)   & 1.5070 & 1.9832  & 0.6535  & 0.8600  \\
	(10, 15) & 1.0597 & 1.0869  & 0.9442  & 0.9684  \\
	\hline
	\end{tabular}	
	}
	\end{center}
	\caption{Efficiencies of the new designs $\xiboldPlus_{a1}$ and $\xiboldPlus_{a3}$, with respect to $\xibold$ and $\xiboldPlus_{{eq}_{ 7\times 7}} $}.
		\label{Efficence_Ills_1}
		\end{table}
			
		As expected, in Table~\ref{Efficence_Ills_1}, see that $ \xiboldPlus_{{eq}_{ 7\times 7}}  $ performs the best for all parameter values. The efficiency of $\xiboldPlus_{a3}$ is higher than $\xiboldPlus_{a1}$ with respect to both $\xibold$ and $ \xiboldPlus_{{eq}_{ 7\times 7}}  $ as expected due to Theorem~\ref{Paper-II_Corollary_I}. The values of $eff(\cdot: \xibold)$ {suggest} that the new design leads to considerable reduction in $SMSPE$.

			\begin{illustration} \label{Paper_II_Ilustration_2}
			Same as in Illustration~\ref{Paper_II_Ilustration_1}, we consider an ordinary kriging setup, where the random process $Z(\cdot)$ has a separable exponential covariance structure. The samples are initially taken over a $17 \times 5$ grid. The initial grid $\xibold = (\VectD, \VectDelta)$ to be $\VectD = (
			0.0500,    0.0700,    0.0400,    0.0250,    0.0410,  \\  0.0644,    0.0854, 0.0290,    0.1050, 
			0.0291,    0.1299,    0.1074,   0.0798,    0.0340,    0.0341,    0.0759)$ and $\VectDelta = (     0.2281,   \\ 0.1219,
		0.1446,    0.5054)$. The aim is to find the best possible grid design by deleting 10 points from {the} $x$-covariate and 2 points from {the} ${y}$-covariate.   
		\end{illustration}
		The initial design is given by $\xibold $, which means the $x$ and $y$-covariates of the design are given by 
		$\mathcal{S}_{1} = \{0,  0.0500 ,   0.1200   , 0.1600  ,  0.1850    , 0.2260    , 0.2904,  0.3758    , 0.4047    , 0.5097    ,  0.5388,  \\   0.6687   , 0.7762    , 0.8560, 0.8900    , 0.9241   , 1\}$ and $\mathcal{S}_{2} = \{    0  ,  0.2281    , 0.3500    , 0.4946 ,    1\}$, respectively. It is given, that $n = 17, m = 5 $ and $n_{1}^\prime =10 , m_{1}^\prime = 2$, so the final required design is a $7\times 3$ grid. We use Algorithms~\ref{PaperII_Algorithm3} and \ref{PaperII_Algorithm4} to determine the best and the worst possible retrospective designs. The size of {the} initial choice set for choosing designs is 9009, that is $|\SetUFourPrime| = 9009$. 
		
		We run Algorithm~\ref{PaperII_Algorithm3} {to} obtain the best possible design. In Step 2 of the algorithm, we get $|\SetUFour| = 1$. So, there is only one choice for the best possible retrospective design. Also, the best possible design does not depend on the values of the covariance parameters (since there is no need to compute and compare the $SMSPE$ for designs in $\SetUFour$). We denote the best possible design by $\xiboldMinus_{a4}$.\\ 
			
			{To obtain} the worst possible design we run Algorithm~\ref{PaperII_Algorithm4}. After executing Step 2 of the algorithm we get $|\SetUFiveWorst| = 5$. So, we have to perform the computational part in the algorithm, that is, Step 3. We take some values of $(\alpha, \beta)$ and for each case find the worst possible design ({which} maximizes $SMSPE$). The parameter values taken for the sake of illustration are (.5, .7),  (1, 5), and (10, 15). For each set of parameter {values} we find that the worst possible design is given by $\xiboldMinus_{a5-worst} =
			((0.0500  ,\;  0.0700 ,\;    0.0400 ,\;   0.0250  ,\;   0.0410 ,\;   0.7740 ),\\\; (    0.2281,\;    0.7719 ))$. \\
			
		We denote the $7\times 3$ equispaced design by $\xiboldMinus_{{eq}_{ 7\times 3}}$. In Figure~\ref{Fig_Ret_Illus1_All_illust2}, the retrospective designs $\xiboldMinus_{a4}$ and $\xiboldMinus_{a5-worst}$ are compared with $\xibold$ and and $\xiboldMinus_{{eq}_{ 7\times 3}}$. 
		\begin{figure}[H]
			\centering
			\begin{subfigure}{.5\textwidth}
			\centering
			\includegraphics[width=.9\linewidth]{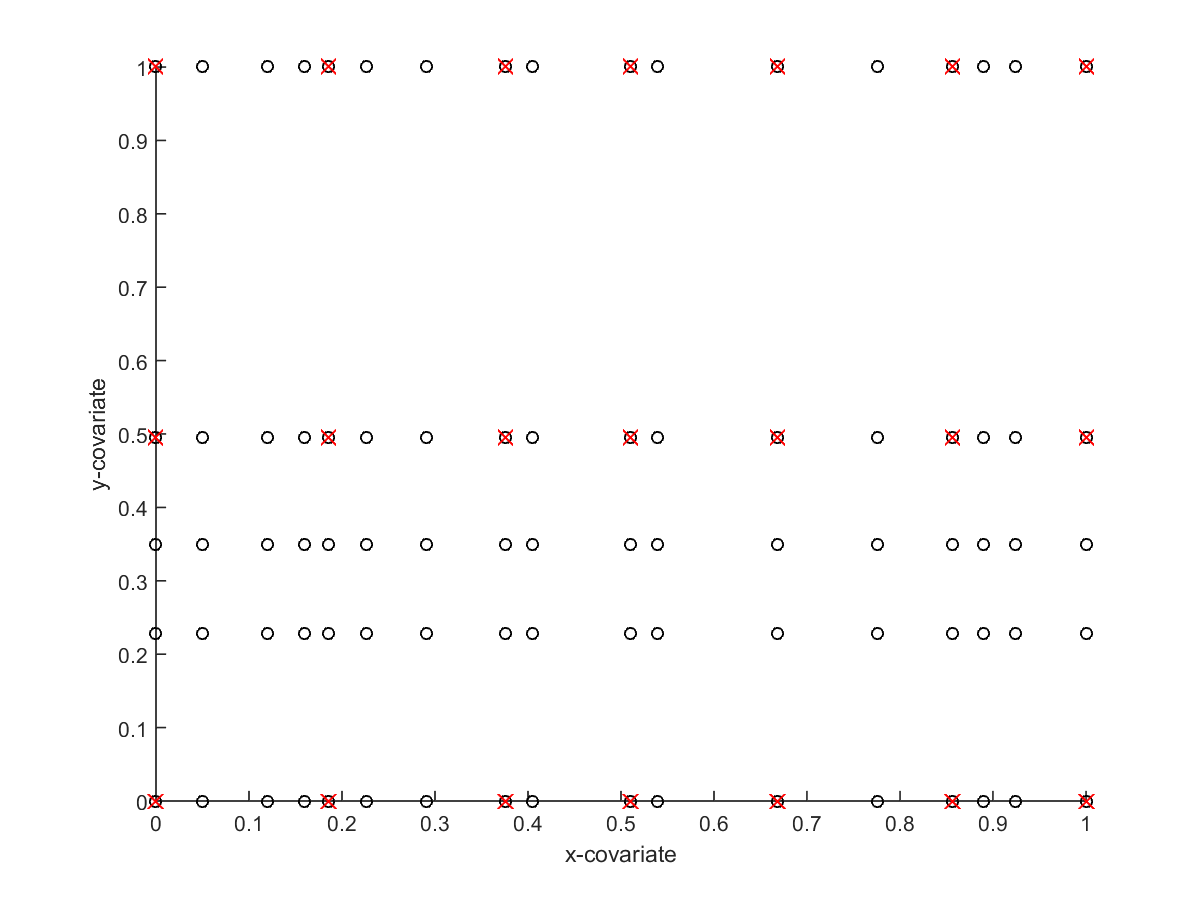}
		\caption{ \scriptsize $\xiboldMinus_{a4}$ Vs $\xibold$   }
		\label{fig:ImgIllust2_1}
		\end{subfigure}%
		\begin{subfigure}{.5\textwidth}
		\centering
			\includegraphics[width=.9\linewidth]{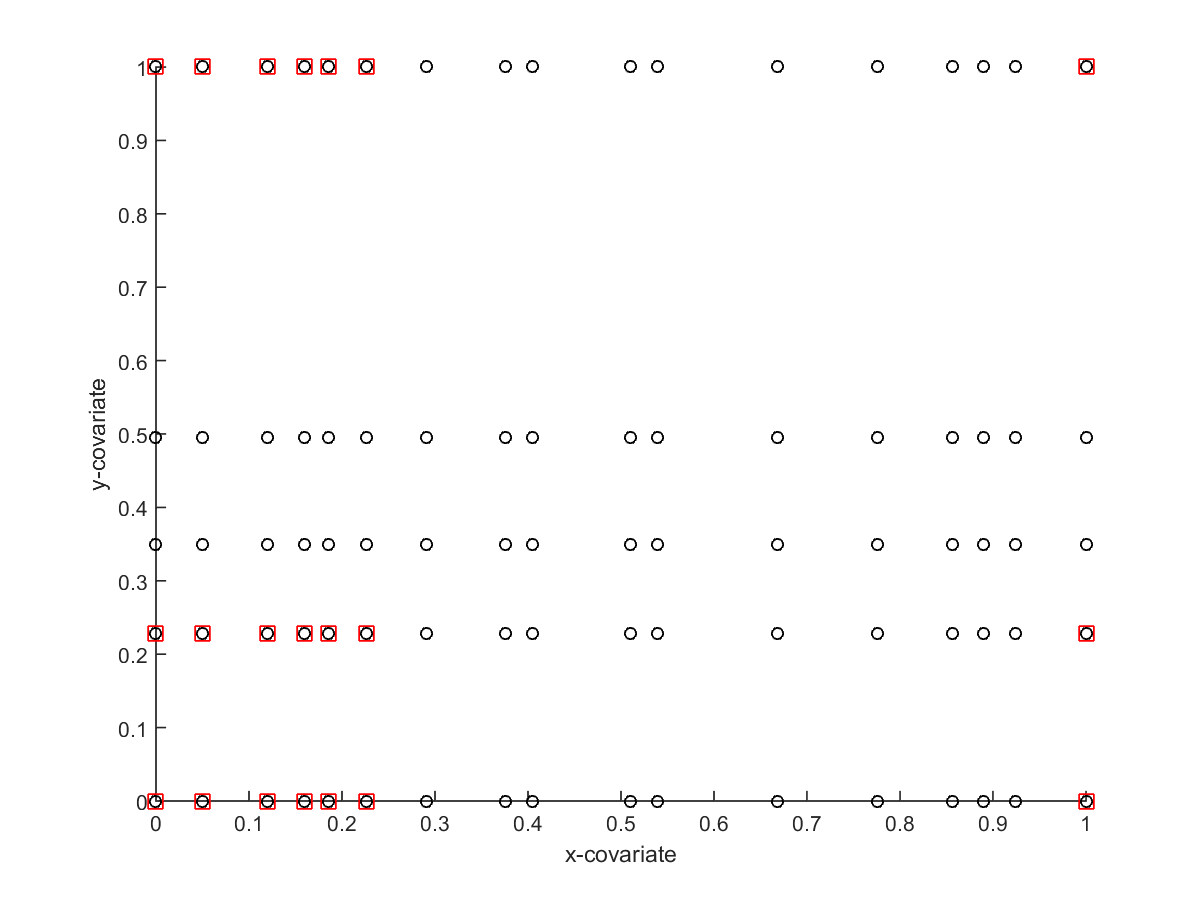}
			\caption{\scriptsize $\xiboldMinus_{a5-worst}$ Vs  $\xibold$   }
			\label{fig:ImgIllust2_2}
		\end{subfigure}
	%	\caption{Both images}
	%	\label{fig:test1} 
	\vspace{1em}
	\centering
	\begin{subfigure}{.5\textwidth}
		\centering
			\includegraphics[width=.9\linewidth]{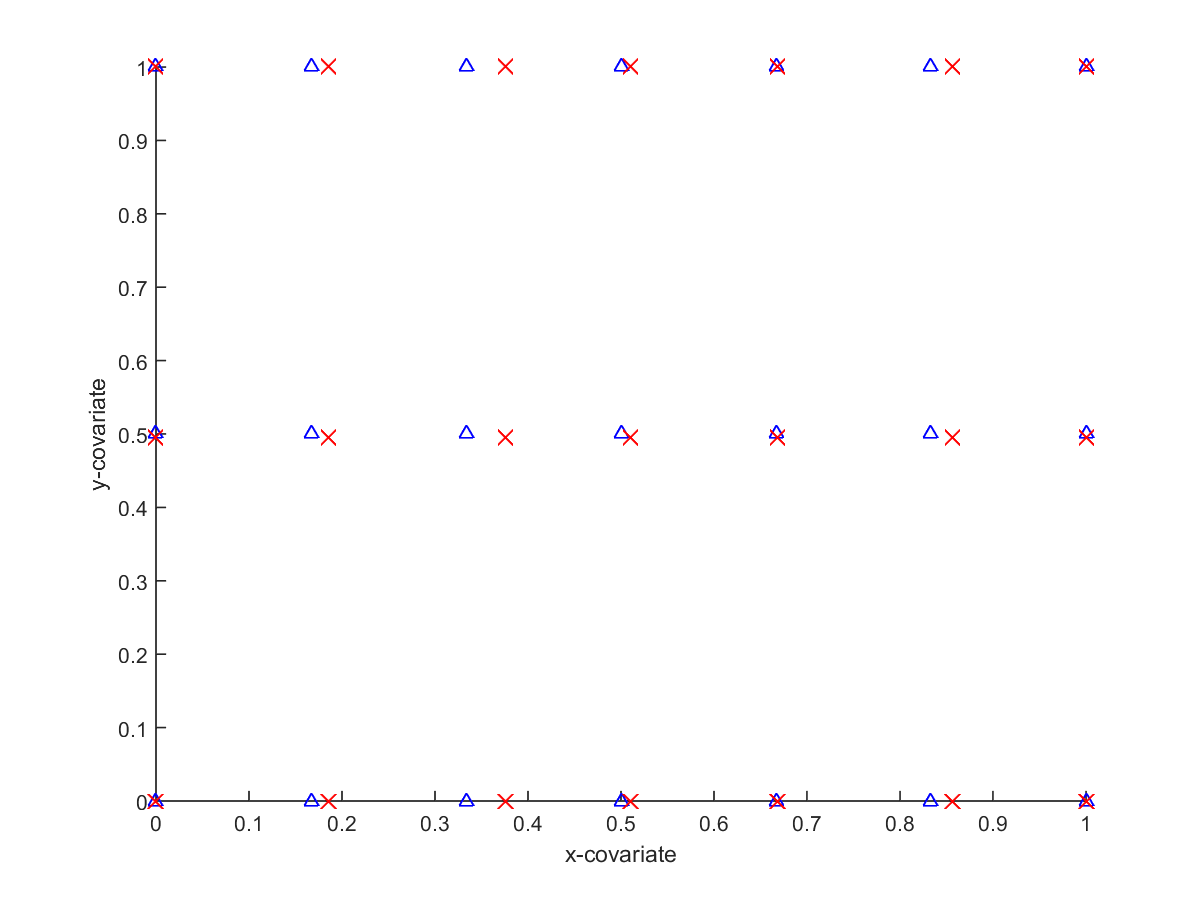}
				\caption{ \scriptsize $\xiboldMinus_{a4}$ Vs $\xiboldMinus_{{eq}_{ 7\times 3}}$  }
					\label{Fig_Ret_Illus2_3}
					\end{subfigure}%
					\begin{subfigure}{.5\textwidth}
					\centering
					\includegraphics[width=.9\linewidth]{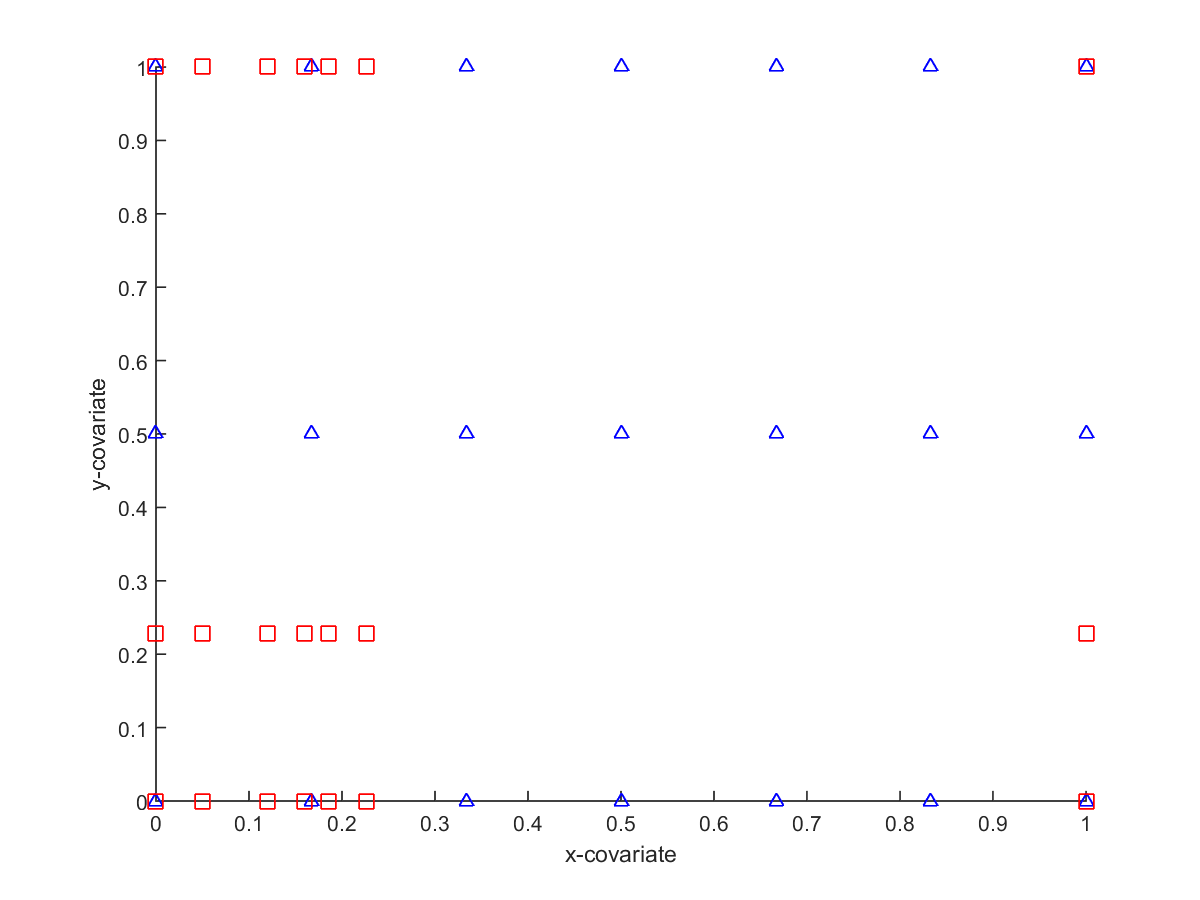}
					\caption{\scriptsize $\xiboldMinus_{a5-worst}$   Vs  $\xiboldMinus_{{eq}_{ 7\times 3}}$ }
					\label{Fig_Ret_Illus2_4}
					\end{subfigure}
					\vspace{-1em}
					\centering
					%\captionsetup{justification=centering}
				\caption{Comparison of design grids. 
			`$\color{red} \times$' - $\xiboldMinus_{a4}$: Best possible $7\times 3$ design. 	 		
		`$\color{red} \square$' - $\xiboldMinus_{a5-worst}$: Worst possible $7\times 3$ grid. 
		`$\circ$' - $\xibold$: Original $17 \times 5$ design grid. `$\color{blue}\triangle$' - $\xiboldMinus_{{eq}_{ 7\times 3}}$: Equispaced $7\times 3$ grid. }
		\label{Fig_Ret_Illus1_All_illust2} 
	\end{figure}

	In this case the best retrospective design do not have any dependency on the values of the covariance parameters as $|\SetUFour| = 1$ unlike the worst retrospective design. For three {sets} of values of {the} covariance parameter, efficiencies of various designs are given in Table~\ref{Efficence_Ills_2} . 
	\begin{table}[H]
	\begin{center}
		{%\scriptsize
	\begin{tabular}{|l|cc| c c |}
	\hline
	$(\alpha, \beta)$
	& $eff(\xiboldMinus_{a5-worst}: \xibold )$  
	& $eff(\xiboldMinus_{a4}: \xibold )$
	& $eff(\xiboldMinus_{a5-worst}: \xiboldMinus_{{eq}_{ 7\times 3}} )$
		& $eff(\xiboldMinus_{a4} : \xiboldMinus_{{eq}_{ 7\times 3}} )$\\
		\hline 
			(.5, .7) & 0.4973 & 0.9415 & 0.5116 & 0.9687 \\
			(1, 5)   & 0.8133 & 0.9884 & 0.8168 & 0.9927  \\
			(10, 15) &  0.9335 & 0.9757 & 0.9559&   0.9991 \\
			\hline
		\end{tabular}	
		}
			\end{center}
			\caption{Efficiencies of best and worst designs obtained with respect to original design $\xibold$ and equispaced design $\xiboldMinus_{{eq}_{7 \times 3}}$}
			\label{Efficence_Ills_2}
			\end{table}
		
		From {the} values of $eff( \xiboldMinus_{a4}: \xibold)$ in Table \ref{Efficence_Ills_2}, we see that even after deleting points the efficiency of the new design is quite close to the initial design (which had {many} more points than the new design). From the values of $eff(\xiboldMinus_{a4}: \xiboldMinus_{{eq}_{7 \times 3}})$ we {see} that the best possible design is very close to the optimal equispaced design. Also, it is important to note that {a poor} choice {of retrospective design} could lead to a considerable loss in efficiency ({evidenced by the efficiencies of the} worst possible designs, $eff(\xiboldMinus_{a5-worst}: \xibold$).

	\begin{illustration} \label{Paper_II_Ilustration_3}
		Consider the example of monitoring methane flux ($Z(\cdot)$) in {the} troposphere from \cite{baran2015optimal}, where the covariance structure is considered to be separable exponential. The initial monitoring grid is taken to be an $8 \times 8 $ equispaced grid as in \cite{baran2015optimal}. If due to budget constraints the design needs to be reduced to a $6\times 5$ grid design, the proposed methodology to find the best possible design after deletion of points is used. 
	\end{illustration}
		{We} use an ordinary kriging model as in \cite{baran2015optimal}. Algorithms~\ref{PaperII_Algorithm3} and \ref{PaperII_Algorithm4} are used to find the best and worst possible retrospective grid designs, respectively. It is given that $n = 8, m = 8 $ and $n_{1}^\prime = 2 , m_{1}^\prime = 3$, as the final design is a $6\times 5$ grid. 
		
		In Step 1 {of} Algorithm~\ref{PaperII_Algorithm3}, see that the initial choice set for retrospective designs has 300 choices, that is $|\SetUFourPrime| = 300$. After running Step 2, $|\SetUFour| = 1$. So, there is only one choice for best possible retrospective design. Therefore, this best possible design does not depend on the covariance parameters. As above, denote the best possible design by $\xiboldMinus_{a4}$. For obtaining the worst possible design, Algorithm~\ref{PaperII_Algorithm4} is used. After Step 2, $|\SetUFiveWorst| = 1$. {We} denote the worst possible design by $\xiboldMinus_{a5-worst} $. Denote the $6\times 5$ equispaced design by $\xiboldMinus_{{eq}_{ 6\times 5}}$. The retrospective designs $\xiboldMinus_{a4}$ and $\xiboldMinus_{a5-worst}$ are obtained and compared with $\xibold$ and $\xiboldMinus_{{eq}_{ 6\times 5}}$ as shown in Figure~\ref{Fig_Ret_Illus1_All_illust3}.
		
		\begin{figure}[H]
			\centering
			\begin{subfigure}{.5\textwidth}
			\centering
			\includegraphics[width=.9\linewidth]{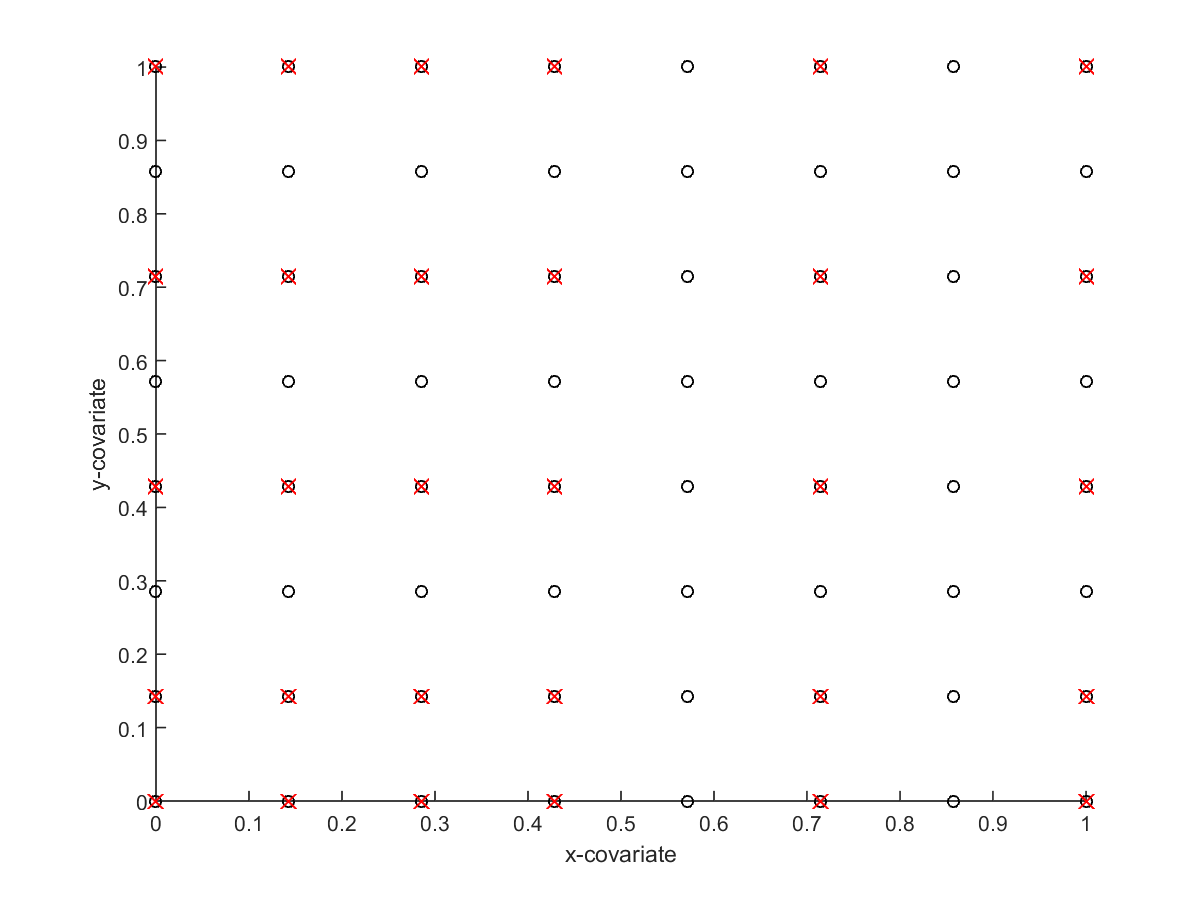}
		\caption{\scriptsize $\xiboldMinus_{a4}$ Vs $\xibold$    }
		\label{fig:ImgIllust3_1}
		\end{subfigure}%
		\begin{subfigure}{.5\textwidth}
		\centering
			\includegraphics[width=.9\linewidth]{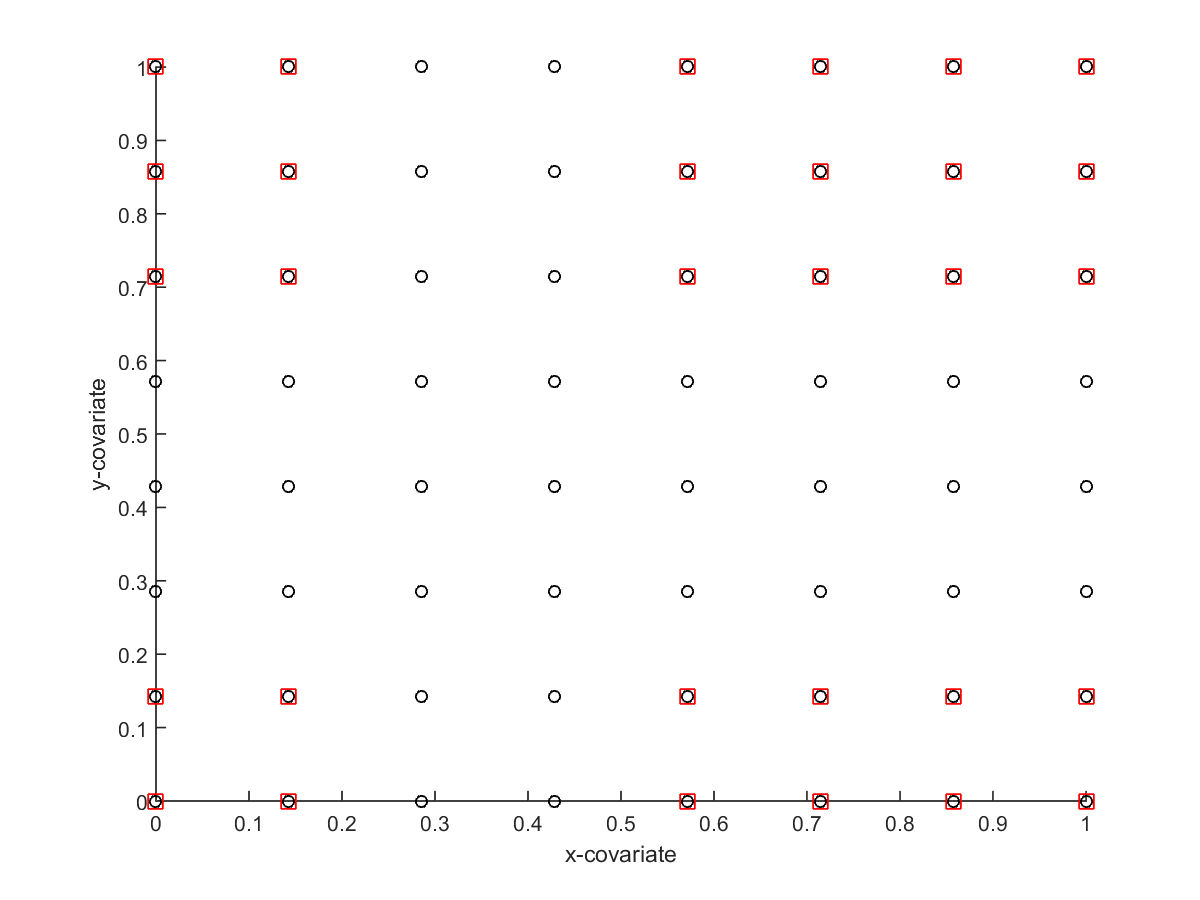}
			\caption{\scriptsize $\xiboldMinus_{a5-worst}$ Vs $\xibold$   }
			\label{fig:ImgIllust3_2}
		\end{subfigure}
	%	\caption{Both images}
	%	\label{fig:test1} 
	\vspace{1em}
		\centering
			\begin{subfigure}{.5\textwidth}
				\centering
					\includegraphics[width=.9\linewidth]{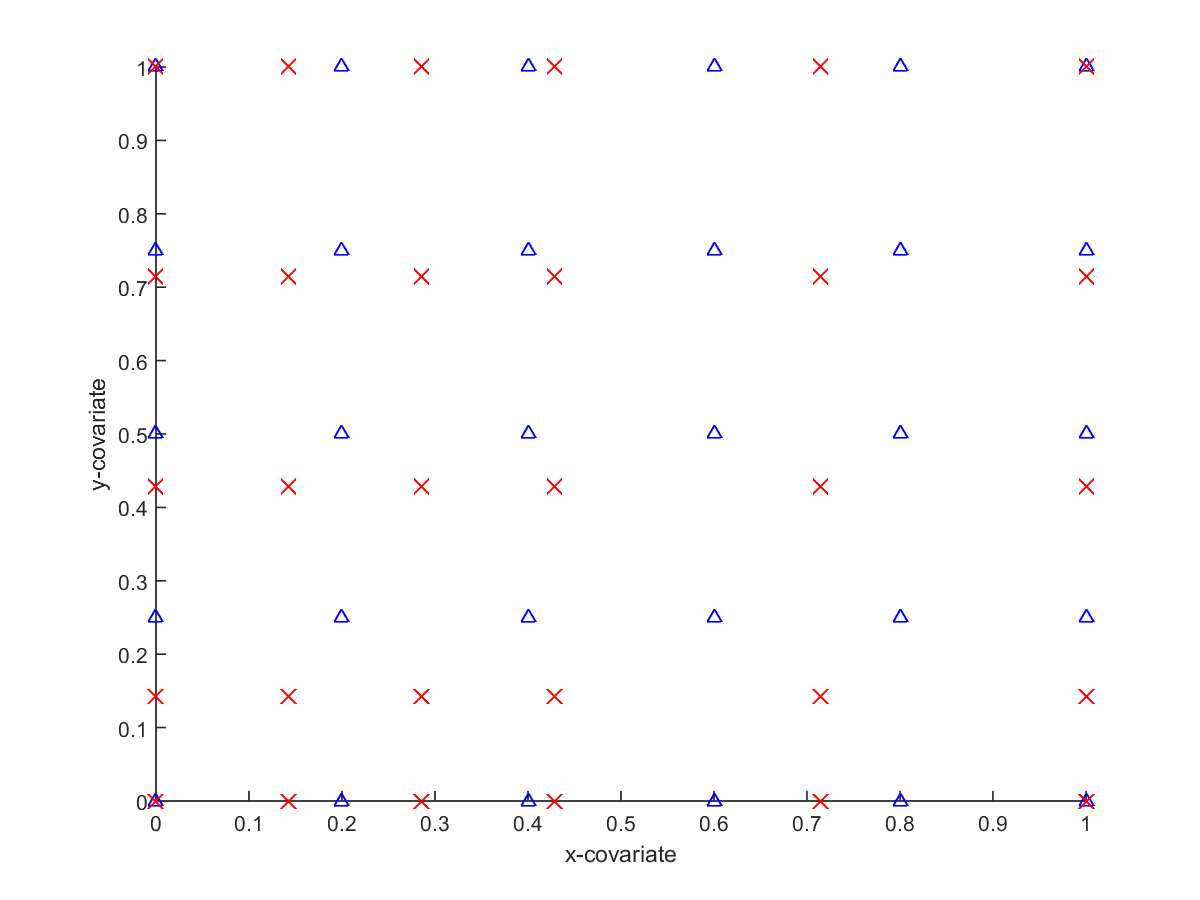}
					\caption{\scriptsize $\xiboldMinus_{a4}$  Vs $\xiboldMinus_{{eq}_{ 6\times 5}}$ }
					\label{Fig_Ret_Illus3_3}
					\end{subfigure}%
					\begin{subfigure}{.5\textwidth}
					\centering
					\includegraphics[width=.9\linewidth]{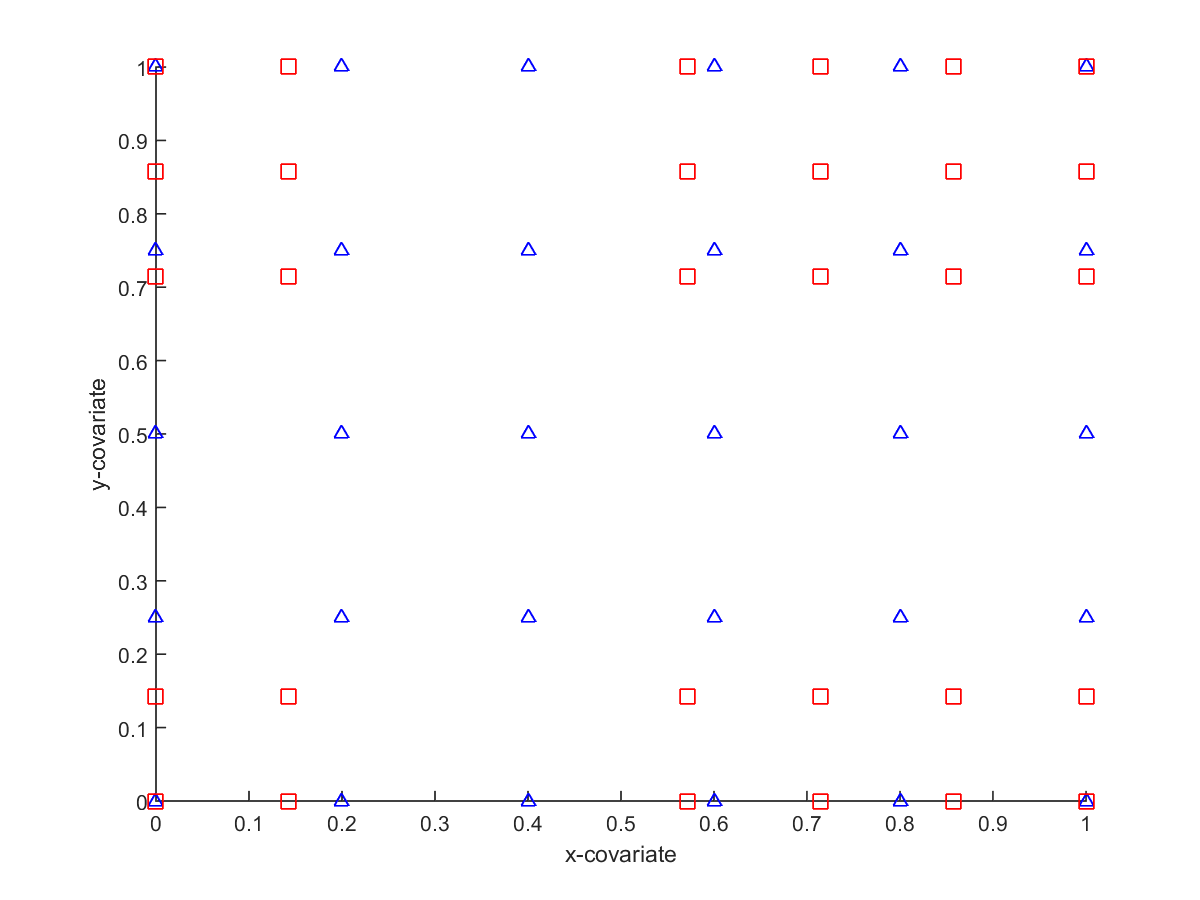}
					\caption{\scriptsize $\xiboldMinus_{a5-worst}$ Vs $\xiboldMinus_{{eq}_{ 6\times 5}}$    }
					\label{Fig_Ret_Illus3_4}
					\end{subfigure}
					\vspace{1em}
				\centering
			%	\captionsetup{justification=centering}
		\caption{Comparison of design grids. 
		`$\color{red} \times$' - $\xiboldMinus_{a4}$: Best possible $6\times 5$ design. 	 		
		`$\color{red} \square$' - $\xiboldMinus_{a5-worst}$: Worst possible $6\times 5$ grid. 
	`$\circ$' - $\xibold$: Original $8 \times 8$ design grid.  `$\color{blue}\triangle$' - $\xiboldMinus_{eq_{ 6\times 5}}$: Equispaced $6\times 5$ grid. }	
	\label{Fig_Ret_Illus1_All_illust3} 
	\end{figure}
	In this case{, the best and worst possible designs do} not depend upon the value of {the} covariance parameters. However, for the purpose of illustration, in Table~\ref{Efficence_Ills_3} we take three {sets} of values of covariance parameters (as used above) to compare efficiencies of designs using $SMSPE$ values. 
	\begin{table}[H]
	\begin{center}
	{%\scriptsize
	\begin{tabular}{|l|cc| c c |}
	\hline
	$(\alpha, \beta)$
	& $eff( \xiboldMinus_{a5-worst} : \xibold)$  
	& $eff(\xiboldMinus_{a4}: \xibold )$
	& $eff( \xiboldMinus_{a5-worst} : \xiboldMinus_{{eq}_{ 6\times 5}})$
	& $eff(\xiboldMinus_{a4} : \xiboldMinus_{{eq}_{ 6\times 5}} )$\\
	\hline 
		(.5, .7) & 0.2959 & 0.5117 & 0.4689 & 0.8108 \\
		(1, 5)   & 0.3962 & 0.5756 & 0.6142 & 0.8922 \\
		(10, 15) &  0.8815 & 0.8962 &  0.9559 &   0.9838 \\
		\hline
	\end{tabular}	
	}
	\end{center}
	\caption{Efficiencies of designs obtained with respect to original design and equispaced design.}
	\label{Efficence_Ills_3}
		\end{table}

		From values of $eff( \xiboldMinus_{a4} : \xibold)$ in Table~\ref{Efficence_Ills_3}, we can say that the new design $\xiboldMinus_{a4}$ has {considerably} reduced accuracy (in terms of \textit{SMSPE}). This could be attributable to the fact that in this case, the starting design $\xibold$ was the optimal for size $8\times 8$ (as it is equispaced). However, if we look at the values of $eff(\cdot : \xibold  )$ and $eff(\cdot: \xiboldMinus_{{eq}_{ 6\times 5}})$, it {can} be seen that the best possible design has much better efficiency than the worst possible retrospective design. So, identifying the points to remove is important in order to ensure the efficiency of the newly obtained design. 
		
	\section*{{Software} used for computation}
	We have implemented Lemmas~\ref{Paper_II_Minorised_Designs} and \ref{Paper_II_Majorised_Designs},   Algorithms~\ref{PaperII_Algorithm1},  \ref{PaperII_Algorithm2},  \ref{PaperII_Algorithm3} and  \ref{PaperII_Algorithm4} in $MATLAB-R2015a$ software. The program codes are available on request from the first author.
	
	\section{Concluding remarks}\label{Paper_II_Conclusion}
	Two-dimensional equispaced grid {designs} are proved to be locally and Bayesian (prospective) G-optimal for simple and ordinary kriging, when the covariance structure is taken to be separable and exponential.  
	
	A more realistic scenario of finding {the} best possible retrospective designs is discussed. The criterion for evenness of designs is proposed for the purpose of comparing two-dimensional grid designs. Then, the mathematical relationship between the design and $SMSPE$ is studied under the proposed criterion and it is shown that indeed, a more evenly spaced grid will lead to lower values of $SMSPE$. Deterministic algorithms for finding the best possible retrospective designs by adding and also deleting points {are} given. {As} constructing a new design by deleting points from {an} existing design is much easier (there are {only finitely many} choices for the new design) compared to {the} problem of addition of new points to an existing (there are {infinitely many} choices), the {latter} algorithms require more mathematical {effort}. The convergence of the algorithms is ensured by reducing the cardinality of the initial choice set to a finite number. The algorithms are designed in such a way that for {the purpose of} determining the best possible retrospective design, in many cases the covariance parameter values are not required (as shown in the illustrations). \\
	
	\noindent The contributions of this paper are:
	\begin{itemize}
		\item theoretically finding  prospective G-optimal grid designs {under} both frequentist and Bayesian paradigms,  
		\item providing a criterion for comparing evenness of two-dimensional grid designs and then proving that under this criterion a more evenly spread design leads to a lower value of $SMSPE$,
		\item providing two deterministic algorithms for finding {the} best possible retrospective designs (with respect to \textit{SMSPE} criteria) by adding points sequentially or simultaneously to an existing design, 
		\item providing a deterministic algorithm for finding {the} best possible retrospective design (with respect to \textit{SMSPE} criteria) by simultaneously deleting points from an existing design. 
	\end{itemize}
	
	In this article, we found the retrospective designs by adding or deleting a pre-specified number of points. We want to extend this work further by finding the optimal number of design points for a simple or ordinary kriging model. In a retrospective design framework, that would mean investigating the number of points that need to be added or deleted from an existing design. We further want to extend this work by finding prospective optimal design and retrospective best possible designs for universal kriging models. %We further want to investigate the nature of the solution when the covariance kernels are {non-separable}. 
	\begin{appendix} 
			\section{Appendix} \label{Appendix} \label{Appendix_IMSPE_SCKandOCK} \label{AppendixA} \label{AppendixC} \label{PaperIIAppendixA}

			\noindent	\textbf{A.1}
			The covariance matrices $\MatrixP$ and $\MatrixQ$ are exponential covariance matrices, so from \cite{Kriging_Antognini_Zagoraiou_2010} we have: 
			\begin{align}
				\VectOneN^{T} \MatrixP^{-1} \VectOneN &= 1 +  \sum_{i=1}^{n-1} \dfrac{e^{\alpha d_{i}} - 1}{e^{\alpha d_{i}} + 1 } \label{PaperII_Loc_Eq1} \text{ and }\\
				\VectOneM^{T} \MatrixQ^{-1} \VectOneM &= 1 +  \sum_{j=1}^{m-1} \dfrac{e^{\beta \delta_{j}} - 1}{e^{\beta \delta_{j}} + 1 } \label{PaperII_Loc_Eq2}. 
			\end{align}
			{Whenever} $\displaystyle{\sum_{i=1}^{n-1} d_{i} = 1}$ and $\displaystyle{\sum_{j=1}^{m-1} \delta_{i} = 1}$,
			\begin{align*}
			\VectOneN^{T} \MatrixP^{-1} \VectOneN &=  \sum_{i=1}^{n-1} d_{i} + \dfrac{e^{\alpha d_{i}} - 1}{e^{\alpha d_{i}} + 1 } \text{ and } \\
			\VectOneM^{T} \MatrixQ^{-1} \VectOneM &= \sum_{j=1}^{m-1} \delta_{i} + \dfrac{e^{\beta \delta_{j}} - 1}{e^{\beta \delta_{j}} + 1 }. 
			\end{align*}
				From \cite{Paper_One_2021_arxiv} we can say $\dfrac{1}{\VectOneN^{T} \MatrixP^{-1} \VectOneN}$ and $\dfrac{1}{\VectOneM^{T} \MatrixQ^{-1} \VectOneM}$  are Schur-convex functions in $d_{i}'s$ and $\delta_{j}'s$, respectively and {are} minimized for equispaced  $d_{i}$ and $\delta_{j}$, respectively for $i = 1,\ldots, n-1$ and $j = 1,\ldots,m-1$. \\
				
			\noindent	\textbf{A.2} 
			{Consider} $x_{0} \in [x_{i}, x_{i+1}]$ and $y_{0} \in [y_{j}, y_{j+1}]$  for some $ i = 1,\ldots,n-1$, $ j = 1,\ldots,m-1$. Define $ a = x_{0} - x_{i}$ and $ b = y_{0} - y_{j}${. Then} 
			\begin{align}
			\sigmaPNot^{T} \MatrixP^{-1} \sigmaPNot &= \dfrac{ e^{-2 \alpha a } -2  e^{-2\alpha d_{i}   } + e^{-2 \alpha (d_{i} - a ) }   }{ 1- e^{-2 \alpha d_{i} }   },    \label{PaperII_decomposition1} \\ 
			\VectOneN^{T} \MatrixP^{-1} \sigmaPNot  &= \dfrac{e^{- \alpha a } + e^{- \alpha(d_{i}  - a )  } }{ 1 + e^{- \alpha d_{i}  }}. \label{PaperII_decomposition2}\\
					\sigmaQNot^{T} \MatrixQ^{-1} \sigmaQNot &= \dfrac{ e^{-2\beta b } -2  e^{-2\beta d_{i}   } + e^{-2\beta (d_{i} - b ) }   }{ 1- e^{-2 \beta d_{i} }   },  \text{ and }  \label{PaperII_decomposition3} \\ 
			\VectOneN^{T} \MatrixQ^{-1} \sigmaQNot  &= \dfrac{e^{- \beta b } + e^{- \beta (d_{i}  - b )  } }{ 1 + e^{- \beta d_{i}  }}. \label{PaperII_decomposition4}		
			\end{align}
					{Detailed calculations are provided in \cite{Paper_One_2021_arxiv}.}
	\section{Appendix} 		\label{AppendixG}
The properties of components of $SMSPE_{sk}$ and $SMSPE_{ok}$ are studied here. These properties would be used for determining properties of the optimal design.
	\begin{result} \label{Result1}
		If $(x_{0},y_{0}) \in [x_{i}, x_{i+1}]\times[y_{j},y_{j+1}]  $ for some $i = 1,\ldots,n$ and $j= 1,\ldots,m${, then}  the function $	\Big(1- \; \sigmaPNot^{T} \MatrixP^{-1} \sigmaPNot \;\; \sigmaQNot^{T}  \MatrixQ^{-1} \sigmaQNot \Big)$ in variables $(x_{0},y_{0})$ attains its global {maximum} at $(x_{0},y_{0}) = (x_{i}+d_{i}/2, y_{j}+ \delta_{j}/2)$. Also, $  \sup_{(x_{0}, y_{0}) \in [x_{i},x_{i+1}]\times[y_{j}, y_{j+1}]}  \sigma^{2}_{sk}(x_{0}, y_{0}) = \sigma^{2}  \Big( 1-  \dfrac{2 }{e^{\alpha d_{i}} +1 } \;  \dfrac{2 }{e^{\beta \delta_{j}} +1 } \Big)$.
	\end{result}
	
		\begin{proof}
		Take the functions $f(\cdot)$ and $g(\cdot)$ defined over $[x_{i},x_{i+1}]$ and $[y_{j},y_{j+1}]$ respectively, such {that} $f(x_{0}) =  \sigmaPNot^{T}  \MatrixP^{-1} \sigmaPNot$ and $g(y_{0}) =  \sigmaQNot^{T}  \MatrixQ^{-1} \sigmaQNot$. Define the function $h \colon [x_{i},x_{i+1}]\times[y_{j},y_{j+1}] \to \mathbb{R} $ such that $h(x,y) = 1- f(x) g(y) $. 
	{Then} $h(x_{0},y_{0}) = 	\Big(1- \; \sigmaPNot^{T} \MatrixP^{-1} \sigmaPNot \;\; \sigmaQNot^{T}  \MatrixQ^{-1} \sigmaQNot \Big)$. Consider \eqref{PaperII_decomposition1} and \eqref{PaperII_decomposition3} for further calculation. From Proposition 4.4 and Lemma 4.2 in \cite{ghorpade2010course}, for a continuous function $h(x,y)$, the {global} maximum is attained either {at} a critical point or at a boundary point. We see that for $(x^{\prime},y^{\prime}) \in (x_{i},x_{i+1})\times(y_{j},y_{j+1})$, if $\nabla h(x^{\prime},y^{\prime})\Big| = (0,0)$, then $(x^{\prime},y^{\prime}) = (x_{i}+d_{i}/2, y_{j}+ \delta_{j}/2)$. So, $ (x_{i}+d_{i}/2, y_{j}+ \delta_{j}/2)$ is a critical point for $h(x,y)$. Now we have,
		\begin{align*}
		\pmb \Delta(x,y) &= 
		\begin{vmatrix}
		h_{xx}(x,y) & h_{xy}(x,y)\\
		h_{yx}(x,y) & h_{yy}(x,y)
		\end{vmatrix}\\
		&=
		\begin{vmatrix}
		-f_{xx}(x)g(y) & -f_{x}(x)g_{y}(y)\\
		-f_{x}(x)g_{y}(y) & -f(x)g_{yy}(y)
		\end{vmatrix}			
		\end{align*}
		We see that $h_{xx}(x,y)\Big|_{(x_{i}+d_{i}/2, y_{j}+ \delta_{j}/2)} < 0 $ and $\pmb \Delta(x,y) |_{(x_{i}+d_{i}/2, y_{j}+ \delta_{j}/2)} = 	f(x_{i}+d_{i}/2)g(y_{j}+ \delta_{j}/2)f_{xx}(x_{i}+d_{i}/2)g_{yy}(y_{j}+ \delta_{j}/2) > 0 .$ Hence, using the determinant rule, $(x_{i}+d_{i}/2, y_{j}+ \delta_{j}/2)$ is a point of {global maximum} for $h(x,y)$ over $(x_{i},x_{i+1})\times(y_{j},y_{j+1})$. When we check the values of $h(x,y)$ {at} the boundary, we see that the functions $h(x_{i}, y)$ and {$h(x_{i+1}, y)$} defined over $[y_{j}, y_{j+1}]$ {attain} maxima at $y_{j}+ \delta_{j}/2${, and} $h(x, y_{j})$ and $h(x, y_{j+1})$ defined over $[x_{i}, x_{i+1}]$ {attain} maximum at $x_{i}+d_{i}/2$. When we compare the function values {at critical points and at the} boundary, we find that $(x_{i}+d_{i}/2, y_{j}+ \delta_{j}/2)$ is {the maximum} for $h(x,y)$ over $[x_{i},x_{i+1}]\times[y_{j},y_{j+1}]$. So,
			\begin{align*}
		\sigma^{2} \sup_{(x_{0}, y_{0}) \in [x_{i},x_{i+1}]\times[y_{j},y_{j+1}]} \Big( 1- \sigmaQNot^{T}  \MatrixQ^{-1} \sigmaQNot \;\; \sigmaPNot^{T}  \MatrixP^{-1} \sigmaPNot \Big) 
		&= \sigma^{2} \sup_{(x_{0}, y_{0}) \in [x_{i},x_{i+1}]\times[y_{j},y_{j+1}]} h(x_{0},y_{0})\\
	    &= \sigma^{2} \Big( 1- f(x_{i}+d_{i}/2) g(y_{j}+ \delta_{j}/2)\Big) \\
		&= \sigma^{2} \Big( 1- 2 \dfrac{e^{\alpha d_{i}} -1 }{e^{2\alpha d_{i}} -1 } \; 2 \dfrac{e^{\beta \delta_{j}} -1 }{e^{2\beta \delta_{j}} -1 } \Big) \\
		&= \sigma^{2} \Big( 1-  \dfrac{2 }{e^{\alpha d_{i}} +1 } \;  \dfrac{2 }{e^{\beta \delta_{j}} +1 } \Big).
		\end{align*}	
	\end{proof}

	\begin{result} \label{Result3}
				If $(x_{0},y_{0}) \in [x_{i}, x_{i+1}]\times[y_{j},y_{j+1}]  $ for some $i = 1,\ldots,n$ and $j= 1,\ldots,m${, then} the function 	$\Big(1-\sigmaQNot^{T} \MatrixQ^{-1} \VectOneM \;\; \sigmaPNot^{T} \MatrixP^{-1} \VectOneN \Big)^2$ in variables $(x_{0},y_{0})$, attains its {global maximum} at $(x_{0},y_{0}) = (x_{i}+d_{i}/2, y_{j}+ \delta_{j}/2)$. Also, $	\sup_{(x_{0}, y_{0}) \in [x_{i},x_{i+1}]\times[y_{j},y_{j+1}]}  \Big(1-\sigmaQNot^{T} \MatrixQ^{-1} \VectOneM \;\; \sigmaPNot^{T} \MatrixP^{-1} \VectOneN \Big)^2 =
	\Big( 1- \dfrac{2 e^{-\alpha d_{i}/2}}{e^{- \alpha d_{i}} +1 } \dfrac{2 e^{-\beta\delta_{i}/2}}{e^{- \beta \delta_{i}} +1 } \Big)^2. $
			\end{result}
	\begin{proof}  
	Take the functions $f_{1}(\cdot)$ and $g_{1}(\cdot)$ defined over $[x_{i},x_{i+1}]$ and $[y_{i},y_{i+1}]$ respectively, such that, $f_{1}(x_{0}) = \sigmaPNot^{T} \MatrixP^{-1} \VectOneN $ and $g_{1}(y_{0}) = \sigmaQNot^{T} \MatrixQ^{-1} \VectOneM$. Define the function $	h_{1} \colon [x_{i},x_{i+1}]\times[y_{j},y_{j+1}] \to \mathbb{R} $ such that, $h_{1}(x,y) = (1- f_{1}(x) g_{1}(y))^{2}$. {Therefore,} $h_{1}(x,y) = \Big(1-\sigmaQNot^{T} \MatrixQ^{-1} \VectOneM \;\; \sigmaPNot^{T} \MatrixP^{-1} \VectOneN \Big)^2$. We use \eqref{PaperII_decomposition2} and \eqref{PaperII_decomposition4} for further calculations. Again we use Proposition 4.4 and Lemma 4.2 in \cite{ghorpade2010course}, that a continuous function attains {its global} maximum either {at} a critical point or a boundary point. We check that for $(x^{\prime},y^{\prime}) \in (x_{i},x_{i+1})\times(y_{j},y_{j+1})$, if $\nabla h_{1}(x^{\prime},y^{\prime})\Big| = (0,0)$, then $(x^{\prime},y^{\prime}) = (x_{i}+d_{i}/2, y_{j}+ \delta_{j}/2)$. So, $(x^{\prime},y^{\prime}) = (x_{i}+d_{i}/2, y_{j}+ \delta_{j}/2)$ is a critical point for $h_{1}(x,y)$.  We have,
		\begin{align*}
		\pmb \Delta_{1}(x,y) &= 
		\begin{vmatrix}
		{h_{1}}_{xx}(x,y) & {h_{1}}_{xy}(x,y)\\
		{h_{1}}_{yx}(x,y) & {h_{1}}_{yy}(x,y)
		\end{vmatrix}\\
		&=
		\begin{vmatrix}
		-{f_{1}}_{xx}(x){g_{1}}(y) & -{f_{1}}_{x}(x){g_{1}}_{y}(y)\\
		-{f_{1}}_{x}(x){g_{1}}_{y}(y) & -{f_{1}}(x){g_{1}}_{yy}(y)
		\end{vmatrix}			
		\end{align*}
		If we use the determinant rule, then ${h_{1}}_{xx}(x,y)|_{(x_{i}+d_{i}/2, y_{j}+ \delta_{j}/2)} < 0$ and $\pmb \Delta_{1}(x,y) |_{(x_{i}+d_{i}/2, y_{j}+ \delta_{j}/2)} = 	f(x_{i}+d_{i}/2)g(y_{j}+ \delta_{j}/2)f_{xx}(x_{i}+d_{i}/2)g_{yy}(y_{j}+ \delta_{j}/2) > 0 .$ Hence, $(x_{i}+d_{i}/2, y_{j}+ \delta_{j}/2)$ is a point of {global maximum} for $h_{1}(x,y)$ over $(x_{i},x_{i+1})\times(y_{j},y_{j+1})$. Checking the values of $h_{1}(x,y)$ {at the} boundary as in Result \ref{Result1} and comparing the maxima {at the} boundary with the critical point we find that the {global maximum} of $h_{1}(x,y)$ over $[x_{i},x_{i+1}]\times[y_{j},y_{j+1}]$ is attained at $(x_{i}+d_{i}/2, y_{j}+ \delta_{j}/2)$. So,
				\begin{align*}
			\sup_{(x_{0}, y_{0}) \in [x_{i},x_{i+1}]\times[y_{j},y_{j+1}]}  \Big(1-\sigmaQNot^{T} \MatrixQ^{-1} \VectOneM \;\; \sigmaPNot^{T} \MatrixP^{-1} \VectOneN \Big)^2 
			& = \sup_{(x_{0}, y_{0}) \in [x_{i},x_{i+1}]\times[y_{j},y_{j+1}]}  h_{1}(x,y) \\
			&= \sup_{(x_{0}, t_{0}) \in [x_{i},x_{i+1}]\times[y_{j},y_{j+1}]} \Big(1-f_{1}(x_{0}) g_{1}(y_{0})\Big)^{2} \\
			&= \Big(1-f_{1}(x_{i}+d_{i}/2) \; g_{1}(y_{j}+ \delta_{j}/2)\Big)^{2} \\
			&=    \Big( 1- \dfrac{2 e^{-\alpha d_{i}/2}}{e^{- \alpha d_{i}} +1 } \dfrac{2 e^{-\beta\delta_{i}/2}}{e^{- \beta \delta_{i}} +1 } \Big)^2 
			\end{align*} 
	\end{proof}
	
	\section{Appendix} 	 \label{AppendixGMinus1}
	Proof of Theorems~\ref{Paper_II_Theorem_ST_SK_SMSPE}  and \ref{Paper_II_Theorem_ST_SK_OK_SMSPE_PseudoBAyesian} in Section~\ref{Paper_II_Prospective_Design} are provided here. \\

	\noindent Following is the proof for simple kriging case in Theorem~\ref{Paper_II_Theorem_ST_SK_SMSPE}. 
	\begin{proof} \label{Paper_II_Theorem_ST_SK_SMSPE_Proof}
		From Result~\ref{Result1} in Appendix \ref{AppendixG} we have, 
		\begin{align*}
		SMSPE_{sk}(\xibold) 
		&= \sigma^{2} \max_{\substack{i=1,\ldots,n-1\\j=1,\ldots,m-1}}\sup_{(x_{0}, y_{0}) \in [x_{i},x_{i+1}]\times[y_{j},y_{j+1}]} \sigma^{2}_{sk}(x_{0}, y_{0})\\
		&= \sigma^{2} \max_{\substack{i=1,\ldots,n-1\\j=1,\ldots,m-1}} 1-  \dfrac{2 }{e^{\alpha d_{i}} +1 } \;  \dfrac{2 }{e^{\beta \delta_{j}} +1 } \\
		&= \sigma^{2} \max_{\substack{i=1,\ldots,n-1\\j=1,\ldots,m-1}} \Big( 1-  s_{\alpha}(d_{i}) \; s_{\beta}(\delta_{j}) \Big),
		\end{align*}	
		where function $ s_{\theta}(\zeta) =  \dfrac{2}{e^{\theta \zeta} +1 }$ over 
		$\mathbb{R}$ is decreasing in $\zeta$ as $s_{\theta}^{\prime}(\zeta) < 0 $ for $\theta = \alpha, \beta$. So,   
		%Let $ d_{max} = \max_{i=1,\ldots,n-1} d_{i}$ and $ \delta_{max} = \max_{j=1,\ldots,m-1} \delta_{j}$. Let, $d_{max} = \max_{i=1,\ldots,n-1} d_{i}$ and $\delta_{max} = \max_{j=1,\ldots,m-1} \delta_{i}$. Then,   
		\begin{align}
		SMSPE_{sk}(\xibold) 
		%&= \sigma^{2} \Big( 1-  s_{\alpha}(d_{max}) \; s_{\beta}(\delta_{max}) \Big). \label{Paper_II_SMSPE_SK}
		&= \sigma^{2} \Big( 1-  s_{\alpha}( \norm{\VectD}_{\infty} ) \; s_{\beta}(\norm{\VectDelta}_{\infty}) \Big). \label{Paper_II_SMSPE_SK}
		\end{align}	
		We next minimize the \textit{SMSPE} {to obtain} the optimal design,
		\begin{align}
		\min_{ \substack{  \{d_{1},\ldots,d_{n} \\ \delta_{1},\ldots,\delta_{n}\}  } } SMSPE_{sk}(\xibold)  
		%	& = \min_{ \substack{  \{d_{1},\ldots,d_{n} \\ \delta_{1},\ldots,\delta_{n}\}  } }
		%		\sigma^{2} \Big( 1-  s_{\alpha}(d_{max}) \; s_{\beta}(\delta_{max}) \Big) \label{Loc_expression_1}.
		& = \min_{ \substack{  \{d_{1},\ldots,d_{n} \\ \delta_{1},\ldots,\delta_{n}\}  } }
		\sigma^{2} \Big( 1-  s_{\alpha}( \norm{\VectD}_{\infty} ) \; s_{\beta}( \norm{\VectDelta}_{\infty} ) \Big) \label{Loc_expression_1}.
		\end{align}
		The expression in \eqref{Loc_expression_1} is minimized when $\norm{\VectD}_{\infty}$ and $ 
		\norm{\VectDelta}_{\infty}$ %$d_{max}$ and $\delta_{max}$
		are minimized, as $s_{\theta}(\cdot)$ is a decreasing function in $\zeta$. %Clearly, $d_{max}$ and $\delta_{max}$ attains the minimum value for an equispaced design, which proves that an equispaced grid design is G-optimal. 
		Clearly, the minimum value is attained when both $\norm{\VectD}_{\infty} $ and $
		\norm{\VectDelta}_{\infty}$ {arise} from an equispaced design, which proves that an equispaced grid design is G-optimal.
	\end{proof}
	
	\vspace{2em}
	\noindent Following is the proof for ordinary kriging case in Theorem~\ref{Paper_II_Theorem_ST_SK_SMSPE}
	\begin{proof}
		If $(x_{0}, y_{0}) \in [x_{i},x_{i+1}]\times[y_{j},y_{j+1}]$ for some $i = 1,\ldots,n$ and $j = 1,\ldots,m$ then from Results~\ref{Result1} and \ref{Result3} in Appendix \ref{AppendixG} we know that $
		\Big(  1 - \sigmaQNot^{T}  \MatrixQ^{-1} \sigmaQNot \;\; \sigmaPNot^{T}  \MatrixP^{-1} \sigmaPNot \Big) $ and  $ \Big(1-\sigmaQNot^{T} \MatrixQ^{-1} \VectOneM \;\; \sigmaPNot^{T} \MatrixP^{-1} \VectOneN \Big)^2$ both attain a supremum at $(x_{i}+d_{i}/2, y_{j}+ \delta_{j}/2)$. Recall, $\Omega_{x}(\xibold) = \VectOneN^{T} \MatrixP^{-1} \VectOneN$ and $ \Omega_{y}(\xibold) = \VectOneM^{T} \MatrixQ^{-1} \VectOneM$. Then, using Results~\ref{Result1} and ~\ref{Result3} from Appendix \ref{AppendixG} and equation~\eqref{OrdinarykrigingMSPEII}, we have 
		\begin{align*}
		SMSPE_{ok}(\xibold)
		%	&= \sup_{(x_{0}, t_{0}) \in [0,1]\times[0,1]} \sigma^{2}_{ok}(x_{0}, t_{0})\\
		&= \max_{\substack{i=1,\ldots,n-1\\j=1,\ldots,m-1}}\sup_{(x_{0}, y_{0}) \in [x_{i},x_{i+1}]\times[y_{j},y_{j+1}]} \sigma^{2}_{ok}(x_{0}, y_{0})\\
		&= \sigma^{2}    \max_{\substack{i=1,\ldots,n-1\\j=1,\ldots,m-1}} \Bigg[ 1-  s_{\alpha}(d_{i}) \; s_{\beta}(\delta_{j})  + \dfrac{1}{\Omega_{x}(\xibold) } \; \dfrac{1}{\Omega_{y}(\xibold) } \Big( 1- u_{\alpha}(d_{i}) u_{\beta}(\delta_{i})\Big)^2 \Bigg],
		\end{align*} 
		where the function $ u_{\theta }(\zeta) =  \dfrac{2 e^{-\theta \zeta/2}}{e^{- \theta \zeta} +1 }$  over $\mathbb{R}$ is decreasing in $\zeta$ as $u_{\theta}^{\prime}(\zeta) < 0 $, for $\theta = \alpha, \beta$. Since $s_{\theta}(\cdot)$ and $u_{\theta}(\cdot)$ are decreasing in $\zeta$, {we} can say
		\begin{align}
		SMSPE_{ok}(\xibold)
		%	&= \sigma^{2}   \Bigg( 1-  s_{\alpha}(d_{max}) \; s_{\beta}(\delta_{max})  + \dfrac{1}{\Omega_{x}(\xibold) } \; \dfrac{1}{\Omega_{y}(\xibold) } \Big( 1- u_{\alpha}(d_{max}) u_{\beta}(\delta_{max})\Big)^2 \Bigg) \label{Paper_II_SMSPE_OK}.
		&= \sigma^{2}   \Bigg[ 1-  s_{\alpha}( \norm{\VectD}_{\infty} ) \; s_{\beta}(\norm{\VectDelta}_{\infty})  \nonumber \\ & 
		\;\;\;\;\;\;\;\;\;\;\;\;\;\;\;\;\;\;\;\;\;\;\;\;\;\;\;\;\;\; + \dfrac{1}{\Omega_{x}(\xibold) } \; \dfrac{1}{\Omega_{y}(\xibold) } \Big( 1- u_{\alpha}(\norm{\VectD}_{\infty} ) u_{\beta}(\norm{\VectDelta}_{\infty})\Big)^2 \Bigg] \label{Paper_II_SMSPE_OK}.
		\end{align}		
		Minimizing $SMSPE_{ok}(\xibold)$ we get,   
		\begin{align*}
		\min_{ \substack{  \{d_{1},\ldots,d_{n-1} \\ \delta_{1},\ldots,\delta_{m-1}\}  } } SMSPE_{ok}(\xibold)  & = 
		\min_{ \substack{  \{d_{1},\ldots,d_{n-1} \\ \delta_{1},\ldots,\delta_{m-1}\}  } }
		\Bigg[ 1-  s_{\alpha}(\norm{\VectD}_{\infty} ) \; s_{\beta}(\norm{\VectDelta}_{\infty}) \nonumber \\
		& \;\;\;\;\;\;\;\;\;\;\;\;\;\;\;\;\;\;\;\;\;\;\;\;\;\;\;\;\;\; + \dfrac{1}{\Omega_{x}(\xibold) } \; \dfrac{1}{\Omega_{y}(\xibold) } \Big( 1- u_{\alpha}(\norm{\VectD}_{\infty} ) u_{\beta}(\norm{\VectDelta}_{\infty})\Big)^2 \Bigg].
		\end{align*}

		%	Since, $s_{\theta}(\cdot)$ and $u_{\theta}(\cdot)$ are decreasing functions in $\zeta$, therefore $\Big( 1-  s_{\alpha}(d_{max}) \; s_{\beta}(\delta_{max}) \Big)$ and $\Big( 1- u_{\alpha}(d_{max}) u_{\beta}(\delta_{max})\Big)^2 $ are minimized when $d_{max}$ and $\delta_{max}$ are minimized. Note, $d_{max}$ and $\delta_{max}$ attains the minimum value for an equispaced design. Also, in \cite{Paper_One_2021_arxiv} it is shown that $ \dfrac{1}{\VectOneN^{T} \MatrixP^{-1} \VectOneN} $ and  $ \dfrac{1}{\VectOneM^{T} \MatrixQ^{-1} \VectOneM } $ are Schur-convex functions and hence minimized for an equispaced partition. \\	
		Since, $s_{\theta}(\cdot)$ and $u_{\theta}(\cdot)$ are decreasing functions in $\zeta$, therefore $\Big( 1-  s_{\alpha}(\norm{\VectD}_{\infty} ) \; s_{\beta}(\norm{\VectDelta}_{\infty}) \Big)$ and $\Big( 1- u_{\alpha}(\norm{\VectD}_{\infty} ) u_{\beta}(\norm{\VectDelta}_{\infty})\Big)^2 $ are minimized when $\norm{\VectD}_{\infty} $ and $\norm{\VectDelta}_{\infty}$ are minimized. Note, $\norm{\VectD}_{\infty} $ and $\norm{\VectDelta}_{\infty}$ both attain their respective minima for an equispaced design. Also, \cite{Paper_One_2021_arxiv} showns that $\displaystyle{\dfrac{1}{\VectOneN^{T} \MatrixP^{-1} \VectOneN} }$ and  $\displaystyle{ \dfrac{1}{\VectOneM^{T} \MatrixQ^{-1} \VectOneM } }$ are Schur-convex functions and hence minimized for an equispaced partition. Therefore, $SMSPE_{ok}$ is minimized for an equispaced partition, which shows that an equispaced design is G-optimal.
	\end{proof}
	
	\vspace{2em}
	\noindent	Following is the proof of Theorem~\ref{Paper_II_Theorem_ST_SK_OK_SMSPE_PseudoBAyesian} 
	\begin{proof}
		Consider that the distributions of $\sigma^{2}$, $\alpha$ and $\beta$ are given by the distribution functions $r_{1}(.)$, $r_{2}(.)$, and $r_{3}(.)$, respectively. Then,  
		\begin{align*}
		\mathcal{R}_{sk}(\xibold) &=   E_{r_{1}}[\sigma^{2}]  \int \int  \Big( 1-  s_{\alpha}(\norm{\VectD}_{\infty} ) \; s_{\beta}(\norm{\VectDelta}_{\infty}) \Big) \;\; r_{2}(\alpha) r_{3}(\beta) \;\; dr_{2} \; dr_{3}\\
		\mathcal{R}_{ok}(\xibold) &= E_{r_{1}}[\sigma^{2}]  \int \int   \Bigg( 1-  s_{\alpha}(\norm{\VectD}_{\infty} 
		) \; s_{\beta}(\norm{\VectDelta}_{\infty})   + \dfrac{( 1- u_{\alpha}(\norm{\VectD}_{\infty} 
			) u_{\beta}(\norm{\VectDelta}_{\infty}) )^2}{\Omega_{x}(\xibold) \; \Omega_{y}(\xibold)  }  \Bigg) \;\; r_{2}(\alpha) r_{3}(\beta) \;\; dr_{2} \; dr_{3}
		\end{align*}
		%	\begin{align*}
		%	\mathcal{R}_{sk}(\xibold) &=   E_{r_{1}}[\sigma^{2}]  \int \int  \Big( 1-  s_{\alpha}(d_{max}) \; s_{\beta}(\delta_{max}) \Big) \;\; r_{2}(\alpha) r_{3}(\beta) \;\; dr_{2} \; dr_{3}\\
		%	\mathcal{R}_{ok}(\xibold) &= E_{r_{1}}[\sigma^{2}]  \int \int   \Bigg( 1-  s_{\alpha}(d_{max}) \; s_{\beta}(\delta_{max})  + \dfrac{( 1- u_{\alpha}(d_{max}) u_{\beta}(\delta_{max}) )^2}{\Omega_{x}(\xibold) \; \Omega_{y}(\xibold)  }  \Bigg) \;\; r_{2}(\alpha) r_{3}(\beta) \;\; dr_{2} \; dr_{3}
		%	\end{align*}
		Hence, 
		\begin{align}
		\min_{\xibold} \mathcal{R}_{sk}(\xibold) 
		%&=   E_{r_{1}}[\sigma^{2}] \;\; \min_{\xibold}  \int \int \Big( 1-  s_{\alpha}(d_{max}) \; s_{\beta}(\delta_{max}) \Big)  \;\; r_{2}(\alpha) r_{3}(\beta) \;\; dr_{2} \; dr_{3} \nonumber \\
		&=   E_{r_{1}}[\sigma^{2}] \;\;   \int \int \min_{\xibold} \Big( 1-  s_{\alpha}(\norm{\VectD}_{\infty} 
		) \; s_{\beta}(\norm{\VectDelta}_{\infty}) \Big)  \;\; r_{2}(\alpha) r_{3}(\beta) \;\; dr_{2} \; dr_{3} \label{key11}\\
		\min_{\xibold} 	\mathcal{R}_{ok}(\xibold) 
		&= E_{r_{1}}[\sigma^{2}] \;\; \int \int   \min_{\xibold} \Bigg[ 1-  s_{\alpha}(\norm{\VectD}_{\infty} 
		) \; s_{\beta}(\norm{\VectDelta}_{\infty})   \nonumber \\ 
		          & \;\;\;\;\;\;\;\;\;\;\;\;\;\;\;\;\;\;\;\;\;\;\;\;\;\;\;\;\;\;\;\;\;\; \;\;\;+\dfrac{( 1- u_{\alpha}(\norm{\VectD}_{\infty} 
			) u_{\beta}(\norm{\VectDelta}_{\infty}) )^2}{\Omega_{x}(\xibold) \; \Omega_{y}(\xibold)  }  \Bigg] \;\; r_{2}(\alpha) r_{3}(\beta) \;\; dr_{2} \; dr_{3} \label{key12}
		\end{align}
		%	\begin{align}
		%	\min_{\xibold} \mathcal{R}_{sk}(\xibold) 
		%	%&=   E_{r_{1}}[\sigma^{2}] \;\; \min_{\xibold}  \int \int \Big( 1-  s_{\alpha}(d_{max}) \; s_{\beta}(\delta_{max}) \Big)  \;\; r_{2}(\alpha) r_{3}(\beta) \;\; dr_{2} \; dr_{3} \nonumber \\
		%	&=   E_{r_{1}}[\sigma^{2}] \;\;   \int \int \min_{\xibold} \Big( 1-  s_{\alpha}(d_{max}) \; s_{\beta}(\delta_{max}) \Big)  \;\; r_{2}(\alpha) r_{3}(\beta) \;\; dr_{2} \; dr_{3} \label{key11}\\
		%	\min_{\xibold} 	\mathcal{R}_{ok}(\xibold) 
		%	&= E_{r_{1}}[\sigma^{2}] \;\; \int \int   \min_{\xibold} \Bigg( 1-  s_{\alpha}(d_{max}) \; s_{\beta}(\delta_{max})  + \dfrac{( 1- u_{\alpha}(d_{max}) u_{\beta}(\delta_{max}) )^2}{\Omega_{x}(\xibold) \; \Omega_{y}(\xibold)  }  \Bigg) \;\; r_{2}(\alpha) r_{3}(\beta) \;\; dr_{2} \; dr_{3} \label{key12}
		%	\end{align}
		Using Theorem~\ref{Paper_II_Theorem_ST_SK_SMSPE} 
		%and \ref{Paper_II_Theorem_ST_OK_SMSPE} 
		and equations \eqref{key11} and \eqref{key12} we can say that $\mathcal{R}_{sk}(\xibold)$ and $\mathcal{R}_{ok}(\xibold)$ are minimized for an equispaced grid design. 
		%HHH Note \ref{Paper_II_Theorem_ST_SK_SMSPE} and \ref{Paper_II_Theorem_ST_OK_SMSPE} are both labels for Theorem 4.1.
	\end{proof}

	\section{Appendix} 	 \label{AppendixG0}
	Proof of Theorem~\ref{Paper_II_Theorem_Any_Design}
	%\begin{theorem} \label{Paper_II_Theorem_Any_Design}
	%	Consider a simple or ordinary kriging model, with separable exponential covariance structure. Consider two grid designs, $\xibold $ and $\xibold^{\prime}$ of same size ($n \times m$). If, $ \VectD \prec \VectD^{\prime}$ and $ \VectDelta \prec \VectDelta^{\prime}$ then, $SMSPE(\xibold) < SMSPE(\xibold^{\prime} )$ and therefore $\xibold $ is a better design than $\xibold^{\prime}$ . 
	%\end{theorem}
	\begin{proof} 
		%See Appendix~\ref{AppendixG0} for the proof.
		{As} $ \VectD \prec \VectD^{\prime}$ and $ \VectDelta \prec \VectDelta^{\prime}$, therefore from the definition of majorization in \eqref{Paper_II_Majorization_Definition} we have $\norm{\VectD}_{\infty} \leq 	\norm{\VectD^{\prime}}_{\infty}  $ and $\norm{\VectDelta}_{\infty} \leq \norm{\VectDelta^{\prime}}_{\infty} $ (in equation  \ref{Paper_II_Majorization_Definition} {take} $k =1 $). \\
		
		\noindent Therefore, for {the} simple kriging from {Equation~}\eqref{Paper_II_SMSPE_SK} we can say $\displaystyle{SMSPE_{sk}(\norm{\VectD}_{\infty} ) \leq SMSPE_{sk}(\norm{\VectD^{\prime}}_{\infty} ) }$ as $\Big( 1-  s_{\alpha}(d) \; s_{\beta}(\delta) \Big) $ is increasing in $d$ and $\delta$.\\
		
		Similarly, for ordinary kriging model, in the expression of $SMSPE_{ok}$ (from equation \ref{Paper_II_SMSPE_OK}), we have $\displaystyle{\Big( 1-  s_{\alpha}(\norm{\VectD}_{\infty} 
			) \; s_{\beta}(\norm{\VectDelta}_{\infty}) \Big)  \leq \Big( 1-  s_{\alpha}(\norm{\VectD^{\prime}}_{\infty} ) \; s_{\beta}(\norm{\VectDelta^{\prime}}_{\infty}) \Big)  }$ and  \\ $\displaystyle{\Big( 1- u_{\alpha}(\norm{\VectD}_{\infty} ) u_{\beta}(\norm{\VectDelta}_{\infty})\Big)^2 \leq  \Big( 1- u_{\alpha}(\norm{\VectD^\prime}_{\infty} ) u_{\beta}(\norm{\VectDelta^\prime}_{\infty})\Big)^2 } $ as $\displaystyle{\Big( 1-  s_{\alpha}(d) \; s_{\beta}(\delta) \Big) }$ and $\displaystyle{\Big( 1- u_{\alpha}(d) u_{\beta}(\delta)\Big)^2 }$ are increasing in $d$ and $\delta$. Also, $\displaystyle{\dfrac{1}{\Omega_{x}(\xibold )}}$ and $\displaystyle{\dfrac{1}{\Omega_{y}(\xibold )}}$ {are} Schur-convex in $d_{i}$ and $\delta_{j}$, respectively. Therefore, $\displaystyle{\dfrac{1}{\Omega_{x}(\xibold)} \leq \dfrac{1}{\Omega_{x}(\xibold^\prime)}}$ and $\displaystyle{\dfrac{1}{\Omega_{y}(\xibold)} \leq \dfrac{1}{\Omega_{y}(\xibold^\prime)}}$. So, we can say $\displaystyle{SMSPE_{ok}(\xibold) \leq SMSPE_{ok}(\xibold^\prime)}$. 
	\end{proof}
	
	\section{Appendix} 	 \label{AppendixG1}
	The {expressions} for $SMSPE_{sk}$ and $SMSPE_{ok}$ {in equations} \eqref{Paper_II_SMSPE_SK} and  \eqref{Paper_II_SMSPE_OK} are symmetric and contain separable terms in the $x$-covariates and $y$-covariates (equivalently $d_{i}$ and $\delta_{j}$). Hence, if we work with only the $x$-covariates to determine the relationship between the design and $SMSPE$, the same argument follows for the $y$-covariates. 
	
	Let us see some notations that will be useful for proving Results~\ref{AppendixG1_Result1} and \ref{AppendixG1_Result2}. For the new design $\xiboldPlus \equiv (\VectDPlus, \VectDeltaPlus)$, without loss of generality assume that design points $\displaystyle{\{x_{1}^{\prime}, \ldots, x_{n_1}^{\prime} \}}$ and $\displaystyle{\{y_{1}^{\prime}, \ldots, y_{m_1}^{\prime} \}}$ are added between $(x_{i}, x_{i+1})$ and $(y_{j}, y_{j+1})$ for some $i=1,\ldots,n$ and $j=1,\ldots,m$, respectively.  Relabel $x_{i} \equiv x_{i_0}^{\prime} $, $x_{l}^{\prime} \equiv x_{i_l}^{\prime}$ for $l=1,\ldots,n_{1}$,  and $x_{i+1} \equiv x_{i_{ n_{1}+1}}^{\prime}$. Let $d_{l}^{\prime} = x_{i_{l+1}}^{\prime} - x_{i_l}^{\prime}$ for $l=0,\ldots,n_{1}$. So we have $\displaystyle{ \VectDPlus = (  d_{1}, \ldots,d_{i-1}, d_{0}^{\prime}, \ldots, d_{n_{1}}^{\prime}, d_{ i+1} , \ldots, d_{ n-1}) }$. Similarly, relabel $y_{j} \equiv y_{j_0}^{\prime} ,   y_{k}^{\prime} \equiv y_{j_k}^{\prime}$ for $k=1,\ldots,m_{1}$, and $y_{j+1} \equiv y_{j_{ m_{1}+1}}^{\prime}$. Let $\delta_{k}^{\prime} = y_{j_{k+1}}^{\prime} - y_{j_k}^{\prime}$ for $k=0,\ldots,m_{1}$. Then, $ \VectDelta= ( \delta_{1}, \ldots,\delta_{j-1},  \delta_{0}^{\prime}, \ldots, \delta_{m_{1}}^{\prime}, \delta_{ j+1}, \ldots, \delta_{ m-1})$.

	\begin{result} \label{AppendixG1_Result1}
	For the simple kriging model, in the expression for $SMSPE_{sk}$ given by equation \eqref{Paper_II_SMSPE_SK}, we have seen $\Big( 1-  s_{\alpha}(d) \; s_{\beta}(\delta) \Big) $ is increasing in $d$ and $\delta$. As, $\norm{\VectDPlus}_{\infty} \leq \norm{\VectD}_{\infty}$ and $ \norm{\VectDeltaPlus}_{\infty}  \leq \norm{\VectDelta}_{\infty}$, therefore $SMSPE_{sk}(\xiboldPlus) \leq SMSPE_{sk}(\xibold)$. Note, that for a design having $n$ points for {the $x$-covariate,} if we take the case where $d_{i} = \dfrac{1}{n-1}$ for all $i$ and the number of points added is less than $n-1$, then $SMSPE_{sk}$ remains unchanged. In this case we need to add {at least} $n-1$ additional points to reduce the $SMSPE_{sk}$. 		
	\end{result}
	
	\begin{result} \label{AppendixG1_Result2}
	For {the} ordinary kriging model, in the expression {for} $SMSPE_{ok}$ in equation \eqref{Paper_II_SMSPE_OK}, we know that the two terms $\Big( 1-  s_{\alpha}(d) \; s_{\beta}(\delta) \Big) $ and $\Big( 1- u_{\alpha}(d) u_{\beta}(\delta)\Big)^2 $ are increasing in $d$ and $\delta$. Also, $\norm{\VectDPlus}_{\infty} \leq \norm{\VectD}_{\infty}$ and $ \norm{\VectDeltaPlus}_{\infty}  \leq \norm{\VectDelta}_{\infty}$. So, $\displaystyle{\Big( 1-  s_{\alpha}(\norm{\VectDPlus}_{\infty} 
	) \; s_{\beta}(\norm{\VectDeltaPlus}_{\infty}) \Big)  \leq \Big( 1-  s_{\alpha}(\norm{\VectD}_{\infty} ) \; s_{\beta}(\norm{\VectDelta}_{\infty}) \Big)  }$ and  $\displaystyle{\Big( 1- u_{\alpha}(\norm{\VectDPlus}_{\infty} ) u_{\beta}(\norm{\VectDeltaPlus}_{\infty})\Big)^2 \leq  \Big( 1- u_{\alpha}(\norm{\VectD}_{\infty} ) u_{\beta}(\norm{\VectDelta}_{\infty})\Big)^2 } $. In this case it is necessary to check only if $\displaystyle{ \dfrac{1}{\Omega_{x}(\xiboldPlus) } < \dfrac{1}{\Omega_{x}(\xibold) } }$ and $\displaystyle{\dfrac{1}{\Omega_{y}(\xiboldPlus)}  < \dfrac{1}{\Omega_{y}(\xibold)}}$. Here, let $\MatrixPHat$ be the the exponential correlation matrix with parameter $\alpha$ corresponding to the set $\displaystyle{\{x_1, \ldots, x_i, x_{1}^{\prime}, \ldots, x_{n_1}^{\prime}, x_{i+1}, \ldots, x_n \}}$. So we have 
	\begin{align}
	\Omega_{x}(\xiboldPlus) - \Omega_{x}(\xibold) &=   \VectOneNnOne^{T} \MatrixPHat^{-1}  \VectOneNnOne - \VectOneN^{T} \MatrixP^{-1}  \VectOneN \nonumber \\
	&=  \sum_{l=0}^{n_{1}}\dfrac{e^{\alpha d_{l}^{\prime}} - 1}{e^{\alpha d_{l}^{\prime}} + 1} - \dfrac{e^{\alpha d_{i}} - 1}{e^{\alpha d_{i}} + 1} \label{Paper_II_LocEq1} \;\;\;\;\;\;\;\;\;\;\;\;\;\;\; \text{  where, } \sum_{l=0}^{n_{1}} d_{l}^{\prime} = d_{i}
	\end{align} 
	Define $v_{\theta}(.)$ over $\mathbb{R}^{+}$ such that $v_{\theta}(\zeta) = \dfrac{e^{\alpha \zeta} - 1}{e^{\alpha \zeta} + 1}$. We want to see if $\sum_{l=0}^{n_{1}} v_{\alpha}(d_{l}^{\prime}) - v_{\alpha}(d_{i}) \geq 0$. First consider $n_{1} = 1$, then 
	\begin{align}
	\sum_{l=0}^{1} v_{\alpha}(d_{l}^{\prime}) - v_{\alpha}(d_{i})
	&=  \dfrac{e^{\alpha d_{0}^{\prime}} - 1}{e^{\alpha d_{0}^{\prime}} + 1} + \dfrac{e^{\alpha d_{1}^{\prime}} - 1}{e^{\alpha d_{1}^{\prime}} + 1} - \dfrac{e^{\alpha d_{i}} - 1}{e^{\alpha d_{i}} + 1} 
	\;\;\;\;\;\;\;\;\;\;\;\;\;\;\; \text{  where, }  d_{0}^{\prime} +  d_{1}^{\prime}= d_{i} \nonumber \\
	&= \dfrac{(e^{\alpha d_{i}} - 1) (1 + e^{\alpha d_{i}} - e^{\alpha d_{0}^{\prime}} - e^{\alpha d_{1}^{\prime}})}{(e^{\alpha d_{0}^{\prime}} + 1) (e^{\alpha d_{1}^{\prime}} + 1)  (e^{\alpha d_{i}} + 1)} \nonumber \\
	&= \dfrac{(e^{\alpha d_{i}} - 1) (e^{\alpha d_{1}^{\prime}} - 1 ) (e^{\alpha d_{0}^{\prime}} - 1 )}{(e^{\alpha d_{0}^{\prime}} + 1) (e^{\alpha d_{1}^{\prime}} + 1)  (e^{\alpha d_{i}} + 1)}  > 0 \label{Paper_II_LocEq2}.
	\end{align} 
	%Equation \eqref{Paper_II_LocEq2} could be verified by writing the power series expansion of $e^{x}$. 
	After this step it is very easy to check using induction, that $\Omega_{x}(\xiboldPlus) - \Omega_{x}(\xibold) > 0$ for any integer $n_{1} \geq 1$. Therefore, $ \dfrac{1}{\Omega_{x}(\xiboldPlus) } < \dfrac{1}{\Omega_{x}(\xibold) } $ and similarly $\dfrac{1}{\Omega_{y}(\xiboldPlus)}  < \dfrac{1}{\Omega_{y}(\xibold)}$. So, it is clear that with subsequent addition of points over any covariate axis of the grid, the $SMSPE_{ok}$ reduces. Therefore, depending on the {budget,} an experimenter can add as many sample points as possible to an existing design for an ordinary kriging situation.		
	\end{result}

	\section{Appendix} 		\label{AppendixH}
	Proof of Lemma~\ref{Paper_II_Retrospective_Lemma_I}. 
	
	We show that for simple and ordinary kriging {models}, if $n_{1}$ and $m_{1}$ new points are added between $(x_{i}, x_{i+1})$ and $(y_{j}, y_{j+1})$ for some $i,j${, then equally spacing these new points within the intervals $(x_{i}, x_{i+1})$ and $(y_{j}, y_{j+1})$} will lead to {the} minimum possible value {of} $SMSPE$. 
	\begin{proof}
				As before we denote the initial design by $\xibold$ and the new design after adding new points by $\xiboldPlus$. It is very easy to see, if the new points are equally spaced (evenly spaced) between $(x_{i}, x_{i+1})$ and $(y_{j}, y_{j+1})$, then this will lead to {the} minimum possible {values} of $\norm{\VectDPlus}_\infty$ and $\norm{\VectDeltaPlus}_\infty$. 
			
			In {the} case of {the} simple kriging model, see equation~\eqref{Paper_II_SMSPE_SK} for $SMSPE_{sk}$. The expression $\Big( 1-  s_{\alpha}(\norm{\VectDPlus}_\infty) \; s_{\beta}(\norm{\VectDeltaPlus}_\infty) \Big) $ is increasing in its variables{;} it attains minimum possible value when $\norm{\VectDPlus}_\infty$ and $\norm{\VectDeltaPlus}_\infty$ {attain} their minimum value. So, $SMSPE_{sk}(\xiboldPlus)$ is minimized if the new points are equally spaced (or evenly spaced) between $(x_{i}, x_{i+1})$ and $(y_{j}, y_{j+1})$. 
			
			For ordinary kriging{,} see equation~\eqref{Paper_II_SMSPE_OK} for $SMSPE_{ok}$. The expressions, $\Big( 1- u_{\alpha}(\norm{\VectDPlus}_\infty) u_{\beta}(\norm{\VectDeltaPlus}_\infty)^2 $ and $\Big( 1-  s_{\alpha}(\norm{\VectDPlus}_\infty) \; s_{\beta}(\norm{\VectDeltaPlus}_\infty) \Big) $ {are} increasing, {hence the minimum possible values are attained} when $\norm{\VectDPlus}_\infty$ and $\norm{\VectDeltaPlus}_\infty$ {attain their minimum values}. Also, for new design $\xiboldPlus$, we have $\displaystyle{\Omega_{x}(\xiboldPlus) = \sum_{    \substack{j=1,\ldots,i-1\\\;\;\;\;\; i+1,\ldots,n-1} } \Big( d_{j} +   \dfrac{e^{\alpha d_{j}} - 1}{e^{\alpha d_{j}} + 1}  \Big)  + \sum_{l=0}^{n_{1}} \Big( d_{l}^{\prime} + \dfrac{e^{\alpha d_{l}^{\prime}} - 1}{e^{\alpha d_{l}^{\prime}} + 1}  \Big)}$; $\displaystyle{\Bigg(  \sum_{l=0}^{n_{1}} \Big( d_{l}^{\prime} + \dfrac{e^{\alpha d_{l}^{\prime}} - 1}{e^{\alpha d_{l}^{\prime}} + 1}\Big) \Bigg) ^{-1}}$ is {a} Schur-convex function in variable $d_{l}^{\prime}$ (proved in \citealp{Paper_One_2021_arxiv}) and {is} minimized when {the} $d_{l}^{\prime}$'s are equal. Therefore, $\dfrac{1}{\Omega_{x}(\xiboldPlus)}$ attains minimum value if the $n_{1}$ new points are spaced equally between $(x_{i}, x_{i+ 1} )$ and similarly $\dfrac{1}{\Omega_{y}(\xiboldPlus) }$ attains minimum value if the $m_{1}$ new points are spaced equally between $(y_{j}, y_{j+ 1} )$. So, $SMSPE_{ok}(\xiboldPlus)$ is minimized if the new points are equally spaced (or evenly spaced) between $(x_{i}, x_{i+1})$ and $(y_{j}, y_{j+1})$. 		
	\end{proof}

			\section{Appendix} 		\label{AppendixG2}
			Proof of correctness of Algorithm~\ref{PaperII_Algorithm1}.
					\begin{proof}
				%Algorithm~\ref{PaperII_Algorithm1}  ensures that after running Step 1(a) or Step 2(a) we get the minimum possible $SMSPE$. 
				We add one point at a time, {say} $x_{1}^{\prime}$ or (and) $y_{1}^{\prime}$ (to $x$ or (and) $y$-covariate){. We need to ensure at the} end of Step 2 the design obtained is the best possible design. 
				
				If we add $x_{1}^{\prime}$ or (and) $y_{1}^{\prime}$ between $[x_{i_{0}}, x_{i_{0} + 1} ]$ or (and) $[y_{j_{0}}, y_{j_{0} + 1} ]$ for some {$i_{0}= 1,\ldots, length(\VectD)$ and $j_{0}= 1,\ldots,length(\VectDelta) $} then, {by Lemma~\ref{Paper_II_Retrospective_Lemma_I},} $x_{1}^{\prime}$ and $y_{1}^{\prime}$ should be added at the {mid-point} of the respective intervals. 
				
				Now, we are left with identifying the interval at which the points should be inserted (that is{, the choice of} $i_{0}$ and $j_{0}$). 
				
				For simple kriging, from equation~\eqref{Paper_II_SMSPE_SK}, we can see that $\Big( 1-  s_{\alpha}(\norm{\VectDPlus}_\infty) \; s_{\beta}(\norm{\VectDeltaPlus}_\infty) \Big)  $ attains minimum possible value when $i_{0}$ and $j_{0}$ are chosen such that $d_{i_{0}} = \norm{\VectD}_\infty $ and $ \delta_{j_{0}} = \norm{\VectDelta}_\infty$. Hence after Step 2 in Algorithm~\ref{PaperII_Algorithm1} we get {the} best possible design for {the} simple kriging model. 
				
				For ordinary kriging, in equation \eqref{Paper_II_SMSPE_OK}, $\Big( 1-  s_{\alpha}(\norm{\VectDPlus}_\infty) \; s_{\beta}(\norm{\VectDeltaPlus}_\infty) \Big)  $ and $\Big( 1- u_{\alpha}(\norm{\VectDPlus}_\infty) u_{\beta}(\norm{\VectDeltaPlus}_\infty)\Big)^2 $ {attain} minimum possible value when $i_{0}$ and $j_{0}$ are chosen such that $d_{i_{0}} = \norm{\VectD}_\infty $ and $ \delta_{j_{0}} = \norm{\VectDelta}_\infty$. Next, let us see how {the} minimum possible value for $\dfrac{1}{\Omega_{x}(\xiboldPlus_{a1})} $ can be ensured. Here, let $\MatrixPHat$ be the the exponential correlation matrix with parameter $\alpha$ corresponding to the set $\displaystyle{\{x_1, \ldots, x_{i_0}, x_{1}^{\prime}, x_{i_0+1}, \ldots, x_n \}}$. If  $x_{1}^{\prime}$ is inserted at the {mid-point} of $[x_{i_{0}}, x_{i_{0} + 1} ]$ then, relabeling the new design to be $\xiboldPlus_{a1}$ we get, 
				\begin{align}
				\mathcal{I}(d_{i_{0}}) &=  \Omega_{x}(\xiboldPlus_{a1}) - \Omega_{x}(\xibold) \nonumber \\
				\mathcal{I}(d_{i_{0}}) &=  \VectOneNnOne^{T} \MatrixPHat^{-1}  \VectOneNnOne - \VectOneN^{T} \MatrixP^{-1}  \VectOneN \nonumber \\
				&=  2 \;\; \dfrac{e^{\alpha d_{i_{0}}/2} - 1}{e^{\alpha d_{i_{0}}/2} + 1} - \dfrac{e^{{\alpha d_{i_{0}}}} - 1}{e^{\alpha d_{i_{0}}} + 1} \label{PaperII_Loc1}.
				\end{align}
				Also,
				\begin{align*}
				\dfrac{\partial \;\; \mathcal{I}(d)}{\partial d} 
				&= \dfrac{2 \alpha e^{\alpha d/2}}{(e^{\alpha d/2}+1)^2} - \dfrac{2 \alpha e^{\alpha d}}{(e^{\alpha d}+1)^2} \\
				&= \dfrac{2 \alpha e^{\alpha d/2} (e^{\alpha d/2}-1) (e^{3 \alpha  d/2}-1) }{(e^{\alpha d/2}+1)^2\; (e^{\alpha d}+1)^2} > 0
				\end{align*}
				Hence, $\mathcal{I}(d_{i_{0}})$ is an increasing function of $d_{i_{0}}$, which means $\Omega_{x}(\xiboldPlus_{a1}) $ attains maximum possible value when $d_{i_{0}} = \norm{\VectD}_\infty$, that is, $i_{0}$ is chosen in such a way $[x_{i_{0}}, x_{i_{0} + 1} ]$ is the largest interval. Hence, we can say Step 2 of Algorithm~\ref{PaperII_Algorithm1} {ensure} best possible designs at each step for ordinary kriging. 
				
			\end{proof}

	\section{Appendix} \label{Paper_II_AppendixI}
	Proof of Theorem~\ref{Paper_II_Theorem_Loc1}.
	\begin{proof}
			Take the initial design to be $\xibold$ as before. Suppose grid design $\xiboldPlus_{a1} \equiv (\VectDPlus_{a1}, \VectDeltaPlus_{a1})$ is obtained by adding $n_{1_0}^{(i)}$ and $m_{1_0}^{(j)}$ number of points between $(x_{i}, x_{i+1})$ and $(y_{j}, y_{j+1})$, respectively, using Algorithm~\ref{PaperII_Algorithm1} such that $\displaystyle{\sum_{i=1}^{n-1} n_{1_0}^{(i)} = n_{1}}$ and $\displaystyle{\sum_{j=1}^{m-1} m_{1_0}^{(j)} = m_{1}}$. In this case, the $n_{1_0}^{(i)}$ and $m_{1_0}^{(j)}$ new points are not necessarily {equally spaced within} $(x_{i}, x_{i+1})$ and $(y_{j}, y_{j+1})$, respectively. Note that in Algorithm~\ref{PaperII_Algorithm1}, at each stage a point is placed at the {mid-point} of the largest gap, so {the final design will not necessarily have all points equally spaced} between existing intervals $(x_{i}, x_{i+1})$ or $(y_{j}, y_{j+1})$ for some $i,j$. For example, in Figure \ref{Fig_Ret_Illus1_1}, the two points placed on {the} $y$-axis {in the} interval $(0.4, 1)$ are not {equispaced}. 
	
	Now, consider the design $\xibold^{+}_{k_{o}l_{o}} \equiv ({\pmb d^{+}_{k_{o}}},{\pmb \delta^{+}_{l_{o}}}) \in \SetUTwoPrime${,} which is obtained by putting $n_{1_0}^{(i)}$ and $m_{1_0}^{(j)}$ points between $(x_{i}, x_{i+1})$ and $(y_{j}, y_{j+1})$, respectively, in an equally spaced manner {using} Algorithm~\ref{PaperIIAlgorithm_AdditionChoiceSet} (as Algorithm~\ref{PaperII_Algorithm2} utilizes Algorithm~\ref{PaperIIAlgorithm_AdditionChoiceSet} to construct the initial choice set). Denote the partition vector of $(x_{i}, x_{i+1})$ obtained by running  Algorithm~\ref{PaperII_Algorithm1} by $\pmb d_{i_{a1}}$ and the partition vector obtained by running Algorithm~\ref{PaperIIAlgorithm_AdditionChoiceSet} by $\pmb d_{i_{a2}}$ where $n_{1_0}^{(i)}$ equispaced points are added between $(x_{i}, x_{i+1})$, then $\pmb d_{i_{a2}} \prec \pmb d_{i_{a1}}$ for any $i = 1, \ldots, n-1	$, as components of $\pmb d_{i_{a2}}$ are equal. Clearly, the components of $\VectDPlus_{a1}$ majorize the corresponding components of $\pmb d^{+}_{k_{o}} $. Therefore, Proposition A.7 from \cite{MarshalOlkinBook} is used to conclude $     \pmb d^{+}_{k_{o}} \prec  \VectDPlus_{a1}$. Similarly, it can be argued that $ \pmb \delta^{+}_{l_{o}} \prec   \VectDeltaPlus_{a1}  $. 
	
	Hence, using Theorem~\ref{Paper_II_Theorem_Any_Design}, $SMSPE(\xibold^{+}_{k_{o}l_{o}}) \leq SMSPE(\xiboldPlus_{a1})$. That means, there is at least one design in $\SetUTwoPrime$ which is no worse than the partition obtained by Algorithm~\ref{PaperII_Algorithm1}.
	
	As, $ \xiboldPlus_{a2} \equiv ({\VectDPlus_{a2}};{\VectDeltaPlus_{a2}}) $ is the best possible design obtained using Algorithm~\ref{PaperII_Algorithm2}, therefore $ SMSPE(\xiboldPlus_{a2}) \leq SMSPE(\xibold^{+}_{k_{o}l_{o}}) $. So, $SMSPE(\xiboldPlus_{a2}) \leq SMSPE(\xiboldPlus_{a1})$
	and hence $\xiboldPlus_{a2}$ is at least as good as $\xiboldPlus_{a1}$. 
	
	\end{proof}

	\end{appendix}
		
	%\section*{Acknowledgements}
	%The authors would like to thank Prof. Subhankar Karmakar (Centre for Environmental Science and Engineering, IIT Bombay) for the data and the image of the river.

		\bibliography{PaperII}       	

\begin{thebibliography}{}

\bibitem[Antognini and Zagoraiou, 2010]{Kriging_Antognini_Zagoraiou_2010}
Antognini, A.~B. and Zagoraiou, M. (2010).
\newblock Exact optimal designs for computer experiments via kriging
  metamodelling.
\newblock {\em Journal of Statistical Planning and Inference}, 140:2607--2617.

\bibitem[Baran, 2017]{baran2017k}
Baran, S. (2017).
\newblock K-optimal designs for parameters of shifted ornstein-uhlenbeck
  processes and sheets.
\newblock {\em Journal of Statistical Planning and Inference}, 186:28--41.

\bibitem[Baran et~al., 2013]{baran2013optimal}
Baran, S., Sikolya, K., and Stehl{\'\i}k, M. (2013).
\newblock On the optimal designs for the prediction of ornstein-uhlenbeck
  sheets.
\newblock {\em Statistics \& Probability Letters}, 83:1580--1587.

\bibitem[Baran et~al., 2015]{baran2015optimal}
Baran, S., Sikolya, K., and Stehl{\'\i}k, M. (2015).
\newblock Optimal designs for the methane flux in troposphere.
\newblock {\em Chemometrics and Intelligent Laboratory Systems}, 146:407--417.

\bibitem[Baran and Stehl{\'\i}k, 2015]{Baran_Stehlik_2015}
Baran, S. and Stehl{\'\i}k, M. (2015).
\newblock Optimal designs for parameters of shifted ornstein-uhlenbeck sheets
  measured on monotonic sets.
\newblock {\em Statistics \& Probability Letters}, 99:114–124.

\bibitem[Chiles and Delfiner, 2009]{Book_Chiles_Delfiner}
Chiles, J.-P. and Delfiner, P. (2009).
\newblock {\em Geostatistics: Modeling Spatial Uncertainty}.
\newblock John Wiley \& Sons.

\bibitem[Dasgupta et~al., 2022]{Paper_One_2021_arxiv}
Dasgupta, S., Mukhopadhyay, S., and Keith, J. (2022).
\newblock Optimal designs for some bivariate cokriging models.
\newblock {\em Journal of Statistical Planning and Inference}, In Press.

\bibitem[Dette et~al., 2008]{Dette_et_al_2008}
Dette, H., Kunert, J., and Pepelyshev, A. (2008).
\newblock Exact optimal designs for weighted least squares analysis with
  correlated errors.
\newblock {\em Statistica Sinica}, 18:135--54.

\bibitem[Diggle and Lophaven, 2006]{Diggle_Bayesian_Geostatistical_Design}
Diggle, P. and Lophaven, S. (2006).
\newblock Bayesian geostatistical design.
\newblock {\em Scandinavian Journal of Statistics}, 33:53--64.

\bibitem[Ghorpade and Limaye, 2010]{ghorpade2010course}
Ghorpade, S.~R. and Limaye, B.~V. (2010).
\newblock {\em A Course in Multivariable Calculus and Analysis}.
\newblock Springer Science \& Business Media.

\bibitem[Jones et~al., 1998]{jones1998efficient}
Jones, D.~R., Schonlau, M., and Welch, W.~J. (1998).
\newblock Efficient global optimization of expensive black-box functions.
\newblock {\em Journal of Global Optimization}, 13:455--492.

\bibitem[Kisel'{\'a}k and Stehl{\'\i}k, 2008]{kiselak2008equidistant}
Kisel'{\'a}k, J. and Stehl{\'\i}k, M. (2008).
\newblock Equidistant and d-optimal designs for parameters of
  ornstein-uhlenbeck process.
\newblock {\em Statistics \& Probability Letters}, 78:1388--1396.

\bibitem[Li and Zimmerman, 2015]{Cokriging_Li_Zimmerman_2015}
Li, J. and Zimmerman, D.~L. (2015).
\newblock Model-based sampling design for multivariate geostatistics.
\newblock {\em Technometrics}, 57:75--86.

\bibitem[Marshall et~al., 1979]{MarshalOlkinBook}
Marshall, A.~W., Olkin, I., and Arnold, B.~C. (1979).
\newblock {\em Inequalities: Theory of Majorization and Its Applications}.
\newblock Springer.

\bibitem[Mavukkandy et~al., 2014]{mavukkandy2014assessment}
Mavukkandy, M.~O., Karmakar, S., and Harikumar, P. (2014).
\newblock Assessment and rationalization of water quality monitoring network: a
  multivariate statistical approach to the kabbini river (india).
\newblock {\em Environmental Science and Pollution Research}, 21:10045--10066.

\bibitem[Pronzato and M{\"u}ller, 2012]{pronzato2012design}
Pronzato, L. and M{\"u}ller, W.~G. (2012).
\newblock Design of computer experiments: space filling and beyond.
\newblock {\em Statistics and Computing}, 22:681--701.

\bibitem[Ranjan et~al., 2008]{ranjan2008sequential}
Ranjan, P., Bingham, D., and Michailidis, G. (2008).
\newblock Sequential experiment design for contour estimation from complex
  computer codes.
\newblock {\em Technometrics}, 50:527--541.

\bibitem[Sacks et~al., 1989]{sacks1989design}
Sacks, J., Welch, W.~J., Mitchell, T.~J., and Wynn, H.~P. (1989).
\newblock Design and analysis of computer experiments.
\newblock {\em Statistical Science}, 4:409--423.

\bibitem[Santner et~al., 2010]{Book_Santner}
Santner, T.~J., Williams, B.~J., Notz, W.~I., and Williams, B.~J. (2010).
\newblock {\em The Design and Analysis of Computer Experiments}.
\newblock Springer.

\bibitem[Schonlau et~al., 1998]{schonlau1998global}
Schonlau, M., Welch, W.~J., and Jones, D.~R. (1998).
\newblock Global versus local search in constrained optimization of computer
  models.
\newblock {\em Lecture Notes-Monograph Series}, pages 11--25.

\bibitem[Sikolya and Baran, 2020]{sikolya2019optimal}
Sikolya, K. and Baran, S. (2020).
\newblock On the optimal designs for the prediction of complex
  ornstein-uhlenbeck processes.
\newblock {\em Communications in Statistics-Theory and Methods}, 49:4859--4870.

\bibitem[Van~Groenigen, 2000]{van2000influence}
Van~Groenigen, J.~W. (2000).
\newblock The influence of variogram parameters on optimal sampling schemes for
  mapping by kriging.
\newblock {\em Geoderma}, 97:223--236.

\bibitem[Ver~Hoef and Cressie, 1993]{Hoef_1993}
Ver~Hoef, J.~M. and Cressie, N. (1993).
\newblock Multivariable spatial prediction.
\newblock {\em Mathematical Geology}, 25:219--240.

\bibitem[Zagoraiou and Antognini, 2009]{Kriging_Zagoraiou_Antognini_2009}
Zagoraiou, M. and Antognini, B. (2009).
\newblock Optimal designs for parameter estimation of the ornstein–uhlenbeck
  process.
\newblock {\em Applied Stochastic Models in Business and Industry},
  25:583--600.

\bibitem[Zimmerman, 2006]{kriging_Zimmerman_2006}
Zimmerman, D.~L. (2006).
\newblock Optimal network design for spatial prediction, covariance parameter
  estimation, and empirical prediction.
\newblock {\em Environmetrics: The Official Journal of the International
  Environmetrics Society}, 17:635--652.

\end{thebibliography}
		
\end{document}